\definecolor{blue}{RGB}{0,82,147} 
\definecolor{red}{RGB}{202,033,063}
\definecolor{green}{RGB}{11,218,81} 
\definecolor{graphcolor}{RGB}{50,50,50}
\tikzstyle{dual}=[draw=graphcolor!10,fill=graphcolor!15,opacity=.5]
\newcommand{\Zp}{{\mathbb{Z}}_+}
\newcommand{\cost}{{\sf cost}}
\newcommand{\wt}{{\sf wt}}
\newcommand{\lev}{{\sf lev}}
\newcommand{\spn}{{\rm span}}
\newcommand{\rank}{{\rm rank}}
\newcommand{\I}{\mathcal{I}}
\newcommand{\C}{\mathcal{C}}
\newcommand{\D}{\mathcal{D}}
\newcommand{\T}{\mathcal{T}}
\newcommand{\calS}{\mathcal{S}}
\providecommand*{\cupdot}{%
  \mathbin{%
    \mathpalette\@cupdot{}%
  }%
}
\newcommand*{\@cupdot}[2]{%
  \ooalign{%
    $\m@th#1\cup$\cr
    \sbox0{$#1\cup$}%
    \dimen@=\ht0 %
    \sbox0{$\m@th#1\cdot$}%
    \advance\dimen@ by -\ht0 %
    \dimen@=.5\dimen@
    \hidewidth\raise\dimen@\box0\hidewidth
  }%
}
\providecommand*{\bigcupdot}{%
  \mathop{%
    \vphantom{\bigcup}%
    \mathpalette\@bigcupdot{}%
  }%
}
\newcommand*{\@bigcupdot}[2]{%
  \ooalign{%
    $\m@th#1\bigcup$\cr
    \sbox0{$#1\bigcup$}%
    \dimen@=\ht0 %
    \advance\dimen@ by -\dp0 %
    \sbox0{\scalebox{2}{$\m@th#1\cdot$}}%
    \advance\dimen@ by -\ht0 %
    \dimen@=.5\dimen@
    \hidewidth\raise\dimen@\box0\hidewidth
  }%
}
\newcommand{\myproblem}[1]{{\sf{#1}}}
\def\NP{\mathsf{NP}}
\newtheorem{observation}{Observation}[section]
\newtheorem{claim}{Claim}[section]
\newenvironment{myproof}[1]{\medskip                    
\noindent{\bf{\em #1.}}}{\quad $\Box$\medskip}  %%
\newcommand{\switch}[2]{#2} % activate this line for arXiv
\begin{document}

\newcommand\relatedversion{}
\renewcommand\relatedversion{\thanks{The full version of the paper can be accessed at \protect\url{https://arxiv.org/abs/1902.09310}}} % Replace URL with link to full paper or comment out this line

\title{\Large Arborescences, Colorful Forests, and Popularity\switch{\relatedversion}{}}
%\author{Corey Gray\thanks{Society for Industrial and Applied Mathematics.}
%\and Tricia Manning\thanks{Society for Industrial and Applied Mathematics.}}
\author{
Telikepalli Kavitha
\thanks{Tata Institute of Fundamental Research, Mumbai, India; \tt kavitha@tifr.res.in}
\and Kazuhisa Makino
\thanks{Research Institute for Mathematical Sciences, Kyoto University, Kyoto, Japan; \tt makino@kurims.kyoto-u.ac.jp}
\and Ildik\'o Schlotter
\thanks{Centre for Economic and Regional Studies, Budapest, Hungary; also at Budapest University of Technology and Economics, Budapest, Hungary; \tt schlotter.ildiko@krtk.hun-ren.hu}
\and Yu Yokoi
\thanks{Tokyo Institute of Technology, Tokyo, Japan; \tt yokoi@c.titech.ac.jp}
}
\date{}
\switch{}{\fancyfoot[C]{\thepage}}
\maketitle

% Copyright Statement
% When submitting your final paper to a SIAM proceedings, it is requested that you include
% the appropriate copyright in the footer of the paper.  The copyright added should be
% consistent with the copyright selected on the copyright form submitted with the paper.
% Please note that "20XX" should be changed to the year of the meeting.

% Default Copyright Statement
\switch{
\fancyfoot[R]{\scriptsize{Copyright \textcopyright\ 2024 by SIAM\\
Unauthorized reproduction of this article is prohibited}}
}{}
% Depending on which copyright you agree to when you sign the copyright form, the copyright
% can be changed to one of the following after commenting out the default copyright statement
% above.

%\fancyfoot[R]{\scriptsize{Copyright \textcopyright\ 20XX\\
%Copyright for this paper is retained by authors}}

%\fancyfoot[R]{\scriptsize{Copyright \textcopyright\ 20XX\\
%Copyright retained by principal author's organization}}

%\pagenumbering{arabic}
%\setcounter{page}{1}%Leave this line commented out.

\begin{abstract} \small\baselineskip=9pt Our input is a directed, rooted graph $G = (V \cup \{r\},E)$ where each vertex in~$V$ has a 
% wo->po
%weak ranking 
partial order preference over its incoming edges. 
The preferences of a vertex extend naturally to preferences over arborescences rooted at~$r$. 
We seek a \emph{popular} arborescence in $G$, i.e., one for which there is no  ``more popular'' arborescence. 
Popular arborescences have applications in liquid democracy or collective decision making; however, they need not exist in every input instance.
The \myproblem{popular arborescence} problem is to decide if a given input instance admits a popular arborescence or not.
We show a polynomial-time algorithm for this problem, whose computational complexity was not known previously.

Our algorithm is combinatorial, and can be regarded as a primal-dual algorithm. It searches for an arborescence along with its 
dual certificate,  a chain of subsets of $E$,  witnessing its popularity. In fact, our algorithm solves the more general \myproblem{popular common base} 
problem in the intersection of two matroids, where one matroid is the partition matroid defined by any partition $E = \bigcupdot_{v\in V} \delta(v)$ 
and the other is an arbitrary matroid $M = (E, {\cal I})$  of rank $|V|$, with each $v \in V$ having a 
%weak ranking 
% wo->po
partial order over elements in $\delta(v)$. 
We extend our algorithm to the case with forced or forbidden edges. 

We also study the related \myproblem{popular colorful forest} (or more generally, the \myproblem{popular common independent set}) problem where edges are partitioned into color classes, and the task is to find a 
colorful forest that is popular within the set of all colorful forests. %
For the case with weak rankings, we formulate the popular colorful forest polytope, 
and thus show that a minimum-cost popular colorful forest can be computed efficiently. %in polynomial time when edges have costs.
By contrast, we prove that it is $\NP$-hard to compute a minimum-cost popular arborescence, even when rankings are strict.\end{abstract}

\section{Introduction}
\label{sec:intro}
Let $G = (V \cup \{r\}, E)$ be a directed graph where the vertex $r$ (called the root) has no incoming edge.
Every vertex~$v \in V$ has a partial ordering 
% wo->po
$\succ_v$ (i.e., a preference relation that is irreflexive, antisymmetric and transitive) over its incoming edges,
as in this example from \cite{KKMSS20} 
where preference orders are strict rankings. 
Here $V = \{a, b, c, d\}$ and the preference orders of 
these four vertices on their incoming edges are as follows:

\begin{minipage}[c]{0.55\textwidth}
\centering
		\begin{align*}
                & (b,a) \succ_a (c,a) \succ_a (r,a) \\
                & (a,b) \succ_b (d,b) \succ_b (r,b)\\
                & (d,c) \succ_c (a,c) \succ_c (r,c)\\
                & (c,d) \succ_d (b,d) \succ_d (r,d).\\
                 \end{align*}
\end{minipage}
\begin{minipage}{0.35\textwidth}
\begin{tikzpicture}[xscale=0.87, yscale=0.87]
  % vertices
  \node[draw, circle, fill=black, inner sep=1.5pt] (r) at (0, 0) {};
  \node[draw, circle, fill=black, inner sep=1.5pt] (a) at (-1, -1.8) {};
  \node[draw, circle, fill=black, inner sep=1.5pt] (b) at (1, -1.8) {};
  \node[draw, circle, fill=black, inner sep=1.5pt] (c) at (-1, -3.5) {};
  \node[draw, circle, fill=black, inner sep=1.5pt] (d) at (1, -3.5) {};

  % names of vertices
  \node[right, yshift=1mm] at (r.north) {$r$};
  \node[left,   xshift=0.5mm] at (a.west) {$a$};
  \node[right] at (b.east) {$b$};
  \node[left,   yshift=-1.5mm] at (c.west) {$c$};
  \node[right,yshift=-1mm] at (d.east) {$d$};

  % arcs
  \draw[->, >=stealth, line width=1.0pt, densely dashed, green] (r) to[bend right=20] (a); %node[pos=0.95, above, xshift=0.5mm] {\scriptsize 3} (a)
  \draw[->, >=stealth, line width=1.0pt, densely dashed, green] (r) to[bend left=20] (b);
  \draw[->, >=stealth, line width=1.0pt, densely dashed, green] (r) to[out=180, in=170] (c);
  \draw[->, >=stealth, line width=1.0pt, densely dashed, green] (r) to[out=0, in=370] (d);

  \draw[->, >=stealth, line width=0.7pt, double, red] (a) to[bend left=30] (b);
  \draw[->, >=stealth, line width=0.7pt, double, red] (b) to[bend left=30] (a);
  \draw[->, >=stealth, line width=0.7pt, double, red] (c) to[bend left=30] (d);
  \draw[->, >=stealth, line width=0.7pt, double, red] (d) to[bend left=30] (c);

  \draw[->, >=stealth, line width=1.0pt, blue] (a) to[bend left=30] (c);
  \draw[->, >=stealth, line width=1.0pt, blue] (c) to[bend left=30] (a);
  \draw[->, >=stealth, line width=1.0pt, blue] (b) to[bend left=30] (d);
  \draw[->, >=stealth, line width=1.0pt, blue] (d) to[bend left=30] (b);

  % explanation of arcs
  \node (f1) at (2.7, -0.1) {};
  \node (f2) at (1.8, -0.1) {};
  \node[right, xshift=-6pt, yshift=-4pt] at (f1.north) {\small ~first rank};
  \draw[->, >=stealth, line width=0.7pt, double, red] (f1) -- (f2) ;
  \node (s1) at (2.7, -0.6) {};
  \node (s2) at (1.8, -0.6) {};
  \node[right, xshift=-6pt, yshift=-4pt] at (s1.north) {\small ~second rank};
  \draw[->, >=stealth, line width=1.0pt, blue] (s1) -- (s2) ;
  \node (t1) at (2.7, -1.1) {};
  \node (t2) at (1.8, -1.1) {};
  \node[right, xshift=-6pt, yshift=-4pt] at (t1.north) {\small ~third rank};
  \draw[->, >=stealth, line width=0.7pt, densely dashed, green] (t1) -- (t2) ;

\end{tikzpicture}
\vspace{1mm}
\end{minipage}

We are interested in computing an \emph{optimal arborescence} rooted at $r$, where an arborescence is an acyclic subgraph of~$G$ in which each vertex $v \in V$ has a unique incoming edge. Our notion of optimality is a function of the preferences $(\succ_v)_{v\in V}$ of vertices for their incoming edges. 

Given any pair of arborescences $A$ and $A'$ in $G$, 
we say that $v \in V$ prefers $A$ to $A'$ if $v$ prefers its incoming edge in $A$ to its incoming edge in $A'$, i.e.,
$v$ prefers $A$ to $A'$ if $A(v) \succ_v A'(v)$ where $A(v)$ (resp., $A'(v)$) is $v$'s incoming edge in $A$ (resp., $A'$).   
Let $\phi(A,A')$ be the number of vertices that prefer $A$ to $A'$. 
We say that $A$ is \emph{more popular than} $A'$ if $\phi(A,A') > \phi(A',A)$.

\begin{Definition}
\label{def:pop-arb}
An arborescence $A$ is popular if $\phi(A,A') \ge \phi(A',A)$ for all arborescences $A'$.
\end{Definition}

Our notion of optimality is popularity, in other words, we seek a popular arborescence $A$ in~$G$.
So there is \emph{no} arborescence more popular than $A$, thus $A$ is maximal under the ``more popular than'' relation. 
The ``more popular than'' relation is not transitive and popular arborescences need not always exist. 

\begin{comment}
Consider the following example from \cite{KKMSS20} where $V = \{a, b, c, d\}$ and the preference orders of 
these four vertices on their incoming edges are as follows:

\begin{minipage}[c]{0.55\textwidth}
\centering
		\begin{align*}
                & (b,a) \succ_a (c,a) \succ_a (r,a) \\
                & (a,b) \succ_b (d,b) \succ_b (r,b)\\
                & (d,c) \succ_c (a,c) \succ_c (r,c)\\
                & (c,d) \succ_d (b,d) \succ_d (r,d).\\
                 \end{align*}
\end{minipage}
\begin{minipage}{0.35\textwidth}
\input{tikz_instance_intro}
\vspace{1mm}
\end{minipage}
\end{comment}

Consider the example from \cite{KKMSS20} illustrated above.
The arborescence $A = \{(r,a),(a,b),(a,c),(c,d)\}$ is not popular, since the arborescence 
$A' = \{(r,d),(d,c),(c,a),(a,b)\}$ is more popular. This is because $a$ and $c$ prefer~$A'$ to~$A$, while $d$ prefers~$A$ to~$A'$, 
and $b$ is indifferent between $A$ and $A'$. We can similarly obtain an arborescence $A'' = \{(r,b),(b,a),(b,d),(d,c)\}$ more popular 
than $A'$. It is easy to check that for any arborescence here, there is a more popular arborescence. Therefore this
instance has no popular arborescence.

Consider the above instance without the edge $(r,d)$. Vertex preferences are the same as in the earlier instance, except that vertex $d$
has no third-choice edge. It can be shown that this instance has two popular arborescences: $A = \{(r,a),(a,b),(a,c),(c,d)\}$ and 
%$A'' = \{(r,b),(b,a),(b,d),(d,c)\}$ 
$A''' = \{(r,b),(b,a),(a,c),(c,d)\}$ (Appendix~\ref{app:examples} has more details).

\medskip

\paragraph{The popular arborescence problem.}
Given a directed graph $G$ as described above, the \myproblem{popular arborescence} problem is 
to determine if $G$ admits a popular arborescence or not, and to find one, if so. 
The computational complexity of the popular arborescence problem was posed as an open problem at the Eml\'ekt\'abla
workshop~\cite{Kir19} in 2019 and the problem has remained open till now. Thus it is an intriguing open problem---aside from its mathematical interest and curiosity, it has applications in \emph{liquid democracy}, which is a voting scheme that allows a voter to delegate its vote to another voter.\footnote{A vertex $v$ delegating its vote to $u$ should be represented as the edge $(v,u)$; however as said in \cite{KKMSS20}, it will be more convenient to denote this delegation by $(u,v)$ so as to be consistent with downward edges in an arborescence.} 

\medskip

\paragraph{Popular branchings.} A special case of the popular arborescence problem is the \myproblem{popular branching} problem.
A branching is a directed forest in a digraph $G = (V, E)$ where %
each vertex has at most one incoming edge. Any branching in $G$ can be viewed as an arborescence in an auxiliary 
graph obtained by augmenting $G$ with a new vertex $r$ as the root and
adding the edge~$(r,v)$ for each $v \in V$ as the least-preferred incoming edge of~$v$.
%making $r$ an in-neighbor of every $v \in V$.  
%For any $v \in V$, let $(r,v)$ be its least preferred incoming edge.
So the problem of deciding whether the given instance $G$ admits a \myproblem{popular branching} or not 
reduces to the problem of deciding whether this auxiliary instance admits a \myproblem{popular arborescence} or not. An 
efficient algorithm for this special case of the \myproblem{popular arborescence} problem (where the root $r$ is an in-neighbor of every 
$v \in V$) was given in \cite{KKMSS20}. 

The applications of popular branchings in liquid democracy were discussed in \cite{KKMSS20}---as mentioned above, each voter can
delegate its vote to another voter; however delegation cycles are forbidden. A popular branching $B$ represents a cycle-free delegation process 
that is stable, and every root in~$B$ casts a weighted vote on behalf of all its descendants.
As mentioned in \cite{KKMSS20}, liquid democracy has been used for internal decision making at Google \cite{HL15} and political parties
such as the German {\em Pirate Party} or the Swedish party {\em Demoex}. We refer to \cite{sch22} for more details.

However, in many real-world applications, not all agents would be willing to be representatives, i.e., to be roots in a branching. 
Thus it cannot be assumed that {\em every} 
vertex is an out-neighbor of $r$, so it is only agents who are willing to be representatives that are out-neighbors of $r$ in our instance. 
Thus the \myproblem{popular arborescence} problem has to be solved in a general digraph $G = (V\cup\{r\}, E)$ rather than in one where every vertex is an 
out-neighbor of $r$. As mentioned earlier, the computational complexity of the \myproblem{popular arborescence} problem was open till now. 
We show the following result.

\begin{theorem}
    \label{thm:pop-arb}
    Let $G = (V \cup \{r\}, E)$ be a directed graph where each $v \in V$ has a partial order 
    % wo->po
    %weak ranking
    over its incoming edges. 
    There is a polynomial-time algorithm to solve the \myproblem{popular arborescence} problem in $G$.
\end{theorem}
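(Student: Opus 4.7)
The plan is to develop a primal-dual algorithm that works in the matroid intersection framework, as suggested by the abstract's emphasis on the popular common base problem. First, I would reformulate arborescences as common bases: the set $E$ of edges carries a partition matroid $M_1$ with partition $\{\delta(v) : v \in V\}$ (each vertex picks at most one incoming edge) and the graphic matroid $M_2$ of the underlying undirected graph. A common base of size $|V|$ is exactly an $r$-arborescence.

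Next, I would recast popularity in weighted terms. Given a candidate arborescence $A$, define a signed weight $w_A : E \to \{-1,0,+1\}$ as follows: for $e \in \delta(v)$, set $w_A(e) = +1$ if $e \succ_v A(v)$, $w_A(e) = -1$ if $A(v) \succ_v e$, and $w_A(e) = 0$ otherwise (including incomparable pairs under the partial order and the case $e = A(v)$). Then for any arborescence $A'$ we have $\phi(A',A) - \phi(A,A') = w_A(A')$, so $A$ is popular iff $w_A(A') \le w_A(A) = 0$ for every common base $A'$. Thus popularity reduces to: does $A$ attain the maximum $w_A$-weight among common bases? The key subtlety is that the weight function itself depends on $A$, so we cannot simply invoke Edmonds' matroid intersection algorithm for a fixed weighting.

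The algorithm I would design maintains a current candidate arborescence $A$ together with a dual certificate in the form of a chain $\emptyset = E_0 \subsetneq E_1 \subsetneq \cdots \subsetneq E_t = E$ of edge subsets (this is the natural object from matroid intersection LP duality, and matches the abstract's description). The chain induces a level assignment $\ell$ on edges, from which one reads off a potential splitting the weight function into a sum of matroid-rank-like quantities. At each iteration the algorithm either (i) certifies optimality of $A$ with respect to $w_A$ via complementary slackness on the current chain, in which case $A$ is popular, or (ii) finds an augmenting structure that lets us replace $A$ with a strictly ``better'' arborescence and/or refine the chain. When no further progress is possible but optimality is not certified, the chain itself serves as a witness that no popular arborescence exists in $G$.

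The main obstacle I anticipate is the correctness and polynomial termination of this primal-dual loop. Since $w_A$ changes whenever $A$ changes, standard matroid-intersection convergence arguments do not apply directly; one needs a monotone potential (for instance the chain length $t \le |V|$ together with a lexicographic tie-breaker on the levels of selected edges) to bound the number of iterations. Moreover, one must prove that the ``no popular arborescence'' certificate produced on failure is actually infeasible for \emph{every} arborescence, which amounts to showing that the chain-based dual is tight for this problem. Two additional complications beyond the popular branching case of \cite{KKMSS20} are the use of partial rather than strict orders (requiring care with the $0$-weight edges in $w_A$) and the fact that $r$ need not be an in-neighbor of every $v \in V$, so certain ``fallback'' edges that the branching algorithm exploited are not available.
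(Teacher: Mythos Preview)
Your framework is right and matches the paper's: common bases of a partition matroid and the graphic matroid, the signed weight $w_A$, popularity as self-optimality under $w_A$, and chains of edge sets as dual certificates. Where your plan diverges is in the algorithmic loop. You propose to maintain a candidate arborescence $A$ and improve it, worrying (correctly, for that scheme) that $w_A$ moves with $A$. The paper sidesteps this entirely: it maintains \emph{only} the (multi)chain $\mathcal{C}=\{C_1\subseteq\cdots\subseteq C_p=E\}$, and in every iteration recomputes from scratch a branching $I\subseteq E(\mathcal{C})$ that lexicographically maximizes $(|I\cap C_1|,\ldots,|I\cap C_p|)$. If $|I\cap C_i|=\mathrm{rank}(C_i)$ for all $i$, then $I$ is output; otherwise the first deficient $C_k$ is shrunk to $\mathrm{span}(I\cap C_k)$ (and if $k=p$, a new top element $E$ is appended). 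No persistent $A$, hence no moving weight function; the set $E(\mathcal{C})$ is defined purely from the chain and vertex preferences, encoding exactly which edges could be selected by an arborescence whose dual feasibility is witnessed by $\mathcal{C}$.

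The correctness argument is also not the generic complementary-slackness-plus-potential you sketch. The crucial lemma is an invariant comparing the algorithm's chain to the dual certificate $\mathcal{D}=\{D_1,\ldots,D_q\}$ of any hypothetical popular arborescence: at all times $p\le q$ and $D_i\subseteq C_i$ for $i\le p$. This is what forces termination (once $p>n\ge q$ no popular arborescence can exist) and is proved by a nontrivial exchange argument using Brualdi's strong exchange property: assuming $D_k\not\subseteq\mathrm{span}(I\cap C_k)$ after an update, one constructs an infinite sequence of distinct elements of the popular arborescence, a contradiction. Your proposal does not identify this invariant or the exchange mechanism, and the ``augmenting structure that improves $A$'' you envisage has no analogue in the paper. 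So while your outline is pointed in the right direction, the two concrete ingredients that make the proof go through---the chain-only search with lex-maximal branchings in $E(\mathcal{C})$, and the $D_i\subseteq C_i$ invariant proved via strong exchange---are missing.
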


%\noindent
\paragraph{Popular matchings and assignments.} The notion of popularity has been extensively studied in the domain of bipartite
matchings where vertices on one side of the graph have weak rankings 
% wo->po
(i.e., linear preference order with possible ties) over their neighbors.
The \myproblem{popular matching} problem is to decide if such a bipartite graph admits a \emph{popular matching}, i.e., 
a matching $M$ such that there is no matching more popular than $M$.

An efficient algorithm for the \myproblem{popular matching} problem was given almost 20 years ago~\cite{AIKM07}. Very recently (in 2022), 
the \myproblem{popular assignment} problem was considered~\cite{KKMSS22}. What is sought in this problem is a perfect matching that is popular 
within the set of perfect matchings---so the cardinality of the matching is more important than popularity here. 
It is easy to see that the \myproblem{popular assignment} problem is a generalization of the \myproblem{popular matching} problem
(a simple reduction from the \myproblem{popular matching} problem to the \myproblem{popular assignment} problem can be shown by adding some dummy vertices).
An efficient algorithm for the \myproblem{popular assignment} problem was given in \cite{KKMSS22}. 

\medskip

\paragraph{Popular common base problem.}
Observe that the \myproblem{popular arborescence} and \myproblem{popular assignment} problems are special cases of
the \myproblem{popular common base} problem in the intersection 
of two matroids, where one matroid is the partition matroid defined by any partition $E = \bigcupdot_{v\in V} \delta(v)$ and the other is an arbitrary matroid $M = (E, {\cal I})$  of rank $|V|$, and each $v \in V$ has a 
%weak ranking 
% wo->po
partial order $\succ_v$ over elements in $\delta(v)$. 
\begin{itemize}
    \item For any pair of common bases (i.e., common maximal independent sets) $I$ and $I'$ in the matroid intersection, we say that $v \in V$ 
    prefers $I$ to $I'$ if $v$ prefers the element in $I\cap\delta(v)$ to the element in $I'\cap\delta(v)$, i.e., $e \succ_v f$ where
    $I\cap\delta(v) = \{e\}$ and $I'\cap\delta(v) = \{f\}$. Let $\phi(I,I')$ be the number of vertices in~$V$ 
    that prefer $I$ to $I'$. The set $I$ is popular within the set of common bases if $\phi(I,I') \ge \phi(I',I)$ 
    for all common bases $I'$.
\end{itemize}
Arborescences are the common bases in the intersection of a partition matroid with a graphic matroid
(for any edge set $I \subseteq E$, $I\in\I$ if and only if $I$ has no cycle in the underlying undirected graph)
while assignments are common bases in the intersection of two partition matroids. %
In fact, our algorithm and the proof of correctness for Theorem~\ref{thm:pop-arb} work for the general \myproblem{popular common base} problem. 
\begin{theorem}
    \label{thm:pop-largest}
    A popular common base in the intersection of a partition matroid on $E = \bigcupdot_{v\in V} \delta(v)$ with any matroid 
    $M = (E, {\cal I})$  of rank $|V|$ can be computed in polynomial time.
\end{theorem}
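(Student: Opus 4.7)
My first step is to reduce popularity to a weighted optimization problem. Given a candidate common base $I$, for each $v\in V$ and $e\in\delta(v)$ let $c_I(e)=+1$ if $e\succ_v I(v)$, let $c_I(e)=-1$ if $I(v)\succ_v e$, and $c_I(e)=0$ otherwise (in particular when $e=I(v)$ or when $e$ and $I(v)$ are incomparable). Since the partition matroid constraint forces every common base $I'$ to choose exactly one element from each $\delta(v)$, a per-vertex comparison gives $\phi(I',I)-\phi(I,I')=c_I(I')-c_I(I)=c_I(I')$, because $c_I(I)=0$. Hence $I$ is popular if and only if $I$ is a maximum $c_I$-weight common base of the partition matroid and $M$.

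The next step is to characterize maximum-weight common bases by a chain dual, which is exactly the dual certificate promised in the abstract. I would appeal to the weight-splitting form of the Edmonds--Lawler/Frank optimality theorem: $I$ is a maximum $c_I$-weight common base iff there is a splitting $c_I = c^1 + c^2$ together with chains of tight sets for each matroid. In our setting, because one matroid is a partition matroid whose rank is determined by which parts $\delta(v)$ are touched, the two chains can be combined into a single chain $\emptyset=F_0\subsetneq F_1\subsetneq \cdots\subsetneq F_k=E$; the level of an element in this chain dictates its $c_I$-share. I would then translate popularity of $I$ into combinatorial conditions on this chain: roughly, at every level the elements chosen by $I$ must be top-ranked among those available, and the level structure must respect the exchange graph of the two matroids.

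Building on this characterization, I would design a combinatorial primal-dual algorithm that maintains a common base $I$ together with a candidate chain $\mathcal{F}$. Starting from an arbitrary common base, the algorithm alternates between (a) computing augmenting structures in the matroid intersection exchange graph restricted by the levels of $\mathcal{F}$, which, when an improving path exists, updates $I$ in a way that weakly increases $c_I(I)$; and (b) refining $\mathcal{F}$ by splitting or growing levels whenever the current $I$ is locally optimal for $c_I$ but a ``blocking'' tight set can be detected. The algorithm halts either with an $I$ whose optimality against $c_I$ is certified by $\mathcal{F}$, yielding a popular common base by the reduction above, or with a chain $\mathcal{F}$ whose structure proves that no popular common base exists. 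Polynomial termination follows from a lexicographic potential (for example, the pair consisting of the number of levels of $\mathcal{F}$ and the $c_I$-weight of $I$), which strictly increases at every iteration; oracle access to the independence tests of $M$ is used only polynomially many times per iteration.

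The main obstacle is that $c_I$ is itself a function of $I$, so standard one-shot weighted matroid intersection does not apply; each primal update can alter the cost function and thereby invalidate parts of the current dual. The crux is therefore proving a stability invariant: updates to $I$ performed along augmenting paths respecting $\mathcal{F}$ do not destroy tightness of the dual on the surviving levels, and each refinement of $\mathcal{F}$ preserves the feasibility of $I$. Equivalently, I must show that the chain $\mathcal{F}$ behaves as a genuine potential function witnessing popularity, so that if the algorithm terminates negatively then for every common base $I^*$ one can exhibit, by following $\mathcal{F}$, an alternative common base that is more popular than $I^*$. Establishing this invariant, by induction on the iterations and exchange operations, is the technically sharpest step of the proof.
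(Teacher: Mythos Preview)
Your reduction to the self-referential weight function $c_I$ and the chain-dual characterization are both correct and match the paper's Lemmas~\ref{lem:duality} and~\ref{lem:chain}. However, the algorithm you sketch and the invariant you identify diverge from what the paper does, and in ways that create real gaps.

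First, the paper does \emph{not} maintain a running primal $I$ updated by augmenting paths. Instead, from the current chain $\C$ alone it defines an edge set $E(\C)$ (those $e\in\delta(v)$ that are maximal at level $\lev^*_\C(v)$, or maximal at level $\lev^*_\C(v)-1$ and dominating everything at level $\lev^*_\C(v)$), and then freshly computes a lex-maximal common independent set $I\subseteq E(\C)$ with respect to $(|I\cap C_1|,\dots,|I\cap C_p|)$. This sidesteps the ``$c_I$ depends on $I$'' obstacle you flag: the set $E(\C)$ is determined by $\C$ only, and \emph{any} common base $I\subseteq E(\C)$ with $\spn(I\cap C_i)=C_i$ for all $i$ is automatically popular (Lemma~\ref{lem:chain}). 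Your proposed stability invariant---that augmenting-path updates to $I$ preserve dual tightness---is therefore not the right thing to prove, and your potential ``(number of levels, $c_I(I)$)'' cannot work as stated since $c_I(I)=0$ identically.

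Second, and more importantly, the crux of the correctness proof is a different invariant: for \emph{any} popular common base $A$ with dual certificate $\D=\{D_1,\dots,D_q\}$, the algorithm's chain $\C=\{C_1,\dots,C_p\}$ always satisfies $p\le q$ and $D_i\subseteq C_i$ for all $i\le p$ (Lemma~\ref{prop:correctness}). This is what yields the negative case: if $p$ ever exceeds $n$, then $q>n$, contradicting $q\le n$. The proof of this containment invariant is the technical heart; it uses Brualdi's strong exchange property to construct an unbounded sequence $f_1,f_2,\dots$ of distinct elements of $A$ whenever $D_k\not\subseteq\spn(I\cap C_k)$, yielding a contradiction. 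Your proposal does not identify this invariant, and ``following $\mathcal{F}$ to exhibit a more popular base for every $I^*$'' is not how the impossibility is established.
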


%In general, the matroid intersection need not admit common bases, and in such a case, an alternative is a largest common independent set 
%that is popular among all largest common independent sets. This problem can be easily reduced to the popular common base problem 
%(see Appendix~\ref{sec:discussion}).

Interestingly, the \myproblem{popular common independent set} problem which asks for a common independent set that is popular 
in the set of all common independent sets (of all sizes) in the matroid intersection can be reduced to the \myproblem{popular common base} problem (see Section~\ref{sec:colorful}). 
Therefore, the following fact is obtained as a corollary to Theorem~\ref{thm:pop-largest}.
\begin{corollary}
    \label{cor:pop-common}
    A popular common independent set in the intersection of a partition matroid on $E = \bigcupdot_{v\in V} \delta(v)$ 
    with any matroid $M = (E, {\cal I})$ %
    can be computed in polynomial time.
\end{corollary}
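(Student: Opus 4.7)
The plan is to reduce the \myproblem{popular common independent set} problem to the \myproblem{popular common base} problem, so that Theorem~\ref{thm:pop-largest} can be invoked directly. For each $v\in V$, I would introduce a fresh dummy element $e_v$ and place it at the bottom of $\succ_v$, below every element of $\delta(v)$. Let $D:=\{e_v:v\in V\}$, $E':=E\cup D$, and extend the partition by $\delta'(v):=\delta(v)\cup\{e_v\}$, so that $E' = \bigcupdot_{v\in V}\delta'(v)$ is again a $V$-indexed partition.

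For the second matroid, I would take $M'$ to be the $|V|$-truncation of the direct sum $M\oplus U_D$, where $U_D$ is the free matroid on $D$. Truncating a matroid yields a matroid, so $M'$ is a matroid of rank exactly $|V|$, as required by Theorem~\ref{thm:pop-largest}, and its independence oracle is immediate from that of $M$. A subset $I'\subseteq E'$ is then a common base of the partition matroid on $E'$ with $M'$ iff $|I'\cap\delta'(v)|=1$ for every $v\in V$ and $I'\cap E\in\I$. Consequently, the map
\[
I \;\longmapsto\; I \,\cup\, \{e_v : I\cap \delta(v)=\emptyset\}
\]
is a bijection between common independent sets $I$ of the original instance and common bases $I'$ of the augmented one, whose inverse simply strips the dummy elements.

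The core of the argument is that this bijection preserves popularity. Because $e_v$ is the worst element of $\delta'(v)$ with respect to $\succ_v$, a vertex $v\in V$ prefers $I_1'$ to $I_2'$ in the augmented instance precisely when either $I_1\cap\delta(v)\neq\emptyset=I_2\cap\delta(v)$, or both are nonempty and $I_1\cap\delta(v)\succ_v I_2\cap\delta(v)$. This matches the natural definition of preference among common independent sets, in which having an element is preferred to having none, so $\phi(I_1',I_2')=\phi(I_1,I_2)$ for every pair. Hence popular common bases of the augmented instance are in bijection with popular common independent sets of the original, and running the algorithm of Theorem~\ref{thm:pop-largest} on $(E',M')$ and then discarding dummies yields the desired algorithm.

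The step I expect to require the most care is the matroid construction: checking that the $|V|$-truncation is indeed a matroid (this is classical), that every common base of the new intersection arises from the above map (in particular, that each such common base has size exactly $|V|$, one element per $\delta'(v)$), and that the preference relation among common independent sets induced by the reduction coincides with the one intended in the problem statement. Once these points are nailed down, the corollary follows immediately.
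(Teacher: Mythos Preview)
Your proposal is correct and is essentially identical to the paper's own reduction: the paper also introduces one dummy element per partition class as a worst element, takes the direct sum of $M$ with the free matroid on the dummies, and then truncates to rank $|V|$, yielding the same bijection and the same preservation of $\phi$. The only cosmetic difference is that the paper phrases the construction in the language of colorful forests before noting that it works verbatim for general matroids.
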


All of the following problems fall in the framework of a popular common base (or common independent set) in the 
intersection of a partition matroid with another matroid: 
\begin{enumerate}
    \item Popular matchings~\cite{AIKM07}.
    \item Popular assignments~\cite{KKMSS22}.
    \item Popular branchings~\cite{KKMSS20}.
    \item Popular matchings with matroid constraints\footnote{This problem asks for a popular many-to-one matching in a bipartite graph $G = (A \cup B, E)$ where vertices in $A$ have weak rankings and the vertices that get matched to each $b \in B$ must form an independent set in a matroid $M_b$.\label{footnote:one}}~\cite{Kam17}.
\end{enumerate}
%(i)~popular matchings~\cite{AIKM07},
%(ii)~popular assignments~\cite{KKMSS22}, (iii)~popular branchings~\cite{KKMSS20}, and (iv)~popular matchings with matroid constraints\footnote{This problem asks for a popular many-to-one matching in a bipartite graph $G = (A \cup B, E)$ where vertices in $A$ have weak rankings and the vertices that get matched to each $b \in B$ must form an independent set in a matroid $M_b$.\label{footnote:one}}~\cite{Kam17}.
%
% wo->po
Since Corollary~\ref{cor:pop-common} holds for partial order preferences, it generalizes the tractability result in~\cite{Kam17} which assumes that preferences are weak rankings (note that the results in~\cite{Kam17} are based on the paper \cite{AIKM07}, which in turn strongly relies on weak rankings).
%for both the \myproblem{popular branching} problem in~\cite{KKMSS20} and for the \myproblem{popular assignment} problem in~\cite{KKMSS22}.
%
There are other problems that fall in this framework, e.g., %
the \myproblem{popular colorful forest} problem and the \myproblem{popular colorful spanning tree} problem---these 
are natural generalizations of the \myproblem{popular branching} problem and \myproblem{popular arborescence}
problem, respectively. The \myproblem{popular colorful forest} problem and \myproblem{popular color spanning tree} problem 
are new problems introduced in this paper.

\medskip

\paragraph{Popular colorful forests and popular colorful spanning trees.}
The input here is an undirected graph~$G$ where each edge has a color in 
$\{1,\ldots,n\}$. A forest $F$ is \emph{colorful} if each edge in $F$ has a distinct color. 
Colorful forests are the common independent sets of the partition matroid defined by color classes and the graphic matroid of $G$.
For each $i \in \{1,\ldots,n\}$, we assume
there is an agent~$i$ with a 
%weak ranking 
% wo->po
partial order $\succ_i$ over color~$i$ edges. 
Agent~$i$ prefers forest $F$ to forest $F'$ if either (i)~$F$ contains an edge colored $i$ while $F'$ has no edge 
colored $i$ or (ii)~both $F$ and $F'$ contain color~$i$ edges and $i$ prefers the color~$i$ edge in $F$ to the color~$i$ edge in $F'$.

A colorful forest $F$ is popular if $\phi(F,F') \ge \phi(F',F)$ for all colorful forests $F'$,
where $\phi(F,F')$ is the number of agents that prefer $F$ to $F'$. 
The \myproblem{popular colorful forest} problem is to decide if a given graph $G$ admits a popular colorful forest or not, and
to find one, if so. The motivation here is to find an optimal {\em independent} network (cycles are forbidden) with diversity,
i.e., there is at most one edge from each color class---as before, our definition of optimality is popularity.
The \myproblem{popular branching} problem is a special case of the \myproblem{popular colorful
forest} problem where all edges entering vertex~$i$ are colored $i$. 

A colorful spanning tree is a colorful forest with exactly one component.
In the \myproblem {popular colorful spanning tree} problem, \emph{connectivity} is more important than popularity, and we seek popularity
within the set of colorful spanning trees rather than popularity within the set of all colorful forests.

\medskip

\paragraph{Implications of Theorem~\ref{thm:pop-largest}.}
Along with the popular arborescence problem, our algorithm also solves the problems considered in \cite{AIKM07,Kam17,KKMSS20,KKMSS22}; furthermore, it also solves 
the \myproblem{popular colorful forest} and \myproblem{popular colorful spanning tree} problems.
The algorithms given in \cite{Kam17,KKMSS20,KKMSS22} for solving their respective problems are quite different from each other. Thus 
our algorithm provides a unified framework for all these problems and shows that there is one 
polynomial-time algorithm that solves all of them. 

In general, the matroid intersection need not admit common bases, and in such a case, an alternative is a largest common independent set 
that is popular among all largest common independent sets. This problem can be easily reduced to the popular common base problem 
(see \switch{the full version}{Appendix~\ref{sec:discussion}}).
Furthermore, %
along with some simple reductions, we can use our popular common base algorithm to find a popular solution under %
certain constraints. 

For example, we can find a common independent set that is popular subject to a size constraint (if a solution exists). We can further solve the problem under a category-wise size constraint:
%; this means that 
consider a setting where the set $V$ of voters is partitioned into categories, and for each category, there are lower and upper bounds on the number of voters who 
%are assigned to (roughly speaking) {\em real elements} 
(roughly speaking)
have an element in the chosen independent set belonging to them
(see \switch{the full version}{Appendix~\ref{sec:discussion}}).
In the liquid democracy %
application mentioned earlier, this %
translates to setting lower and upper bounds on the number of representatives taken from each category so as to ensure that there is diversity among
representatives.

\medskip

\paragraph{Popular common independent set polytope.}
% wo->po
If preferences are weak rankings, then we also give a formulation of an extension of the {\em popular common independent set polytope}, i.e.,
the convex hull of incidence vectors of popular common independent sets in our matroid intersection.

\begin{theorem}
  \label{thm:polytope}
  % wo->po
  If preferences are weak rankings, the popular common independent set polytope is a projection of a face of the matroid intersection polytope.
\end{theorem}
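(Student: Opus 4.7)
The plan is to use the dual certificate produced by the primal-dual algorithm of Theorem~\ref{thm:pop-largest} as the blueprint for the polyhedral description. Under weak rankings, I expect the algorithm to produce a chain $E = E_0 \supseteq E_1 \supseteq \cdots \supseteq E_k = \emptyset$ of subsets of $E$, with the property that a common base $I$ (with $|I \cap \delta(v)| = 1$ for every $v$) is popular if and only if, for every $v \in V$, the element $I \cap \delta(v)$ is a $\succ_v$-maximal member of $\delta(v) \cap E_{\ell(v)}$, where $\ell(v)$ is the largest index $j$ with $\delta(v) \cap E_j \ne \emptyset$. Establishing this characterization — showing that \emph{one and the same} chain serves as a witness for \emph{every} popular common independent set — is the main technical ingredient I would extract from the correctness analysis behind Theorem~\ref{thm:pop-largest} specialized to weak rankings.

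Next I would lift the problem to a layered ground set $\tilde E = \{(e,j) : e \in E_j,\ 0 \le j \le k\}$, where each element of $E$ appears once per level at which it survives in the chain. On $\tilde E$ I would introduce the natural lifted partition matroid $\tilde M_1$ (whose parts are $\{(e,j) \in \tilde E : e \in \delta(v)\}$ for each $v$) and the natural lifted matroid $\tilde M_2$ whose independent sets are those $\tilde I \subseteq \tilde E$ projecting injectively to an $M$-independent subset of $E$. The associated matroid intersection polytope $P(\tilde M_1, \tilde M_2)$ then has Edmonds' standard linear description.

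The face would be cut out by a single valid inequality that is forced to hold with equality: namely, $\sum_{(e,j) \in B} \tilde x_{(e,j)} \ge 0$, where $B \subseteq \tilde E$ collects every pair $(e,j)$ such that $e \in \delta(v)$ is not a $\succ_v$-maximal element of $\delta(v) \cap E_j$. Because the left-hand side is a nonnegative linear functional on $P(\tilde M_1, \tilde M_2)$, setting it to zero defines a face $F$. The projection $\pi\colon \R^{\tilde E}\to\R^E$ given by $\pi(\tilde x)_e = \sum_{j:\, e \in E_j}\tilde x_{(e,j)}$ collapses the levels. By Step~1, the $0$/$1$ vectors in $F$ are exactly the characteristic vectors (with levels) of popular common independent sets, so integer points of $\pi(F)$ are exactly the incidence vectors of popular common independent sets; since $\pi$ is linear and $F$ has integral vertices (being a face of an integral polytope), $\pi(F)$ equals the popular common independent set polytope.

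The main obstacle will be the characterization in Step~1: I need the level chain $(E_j)_j$ emanating from the dual of the algorithm to be \emph{simultaneously} a witness for every popular common independent set, rather than a certificate that is particular to the one the algorithm outputs. Under strict rankings this kind of canonicity is commonly available; under weak rankings it requires checking that ties within each $\delta(v)$ are handled uniformly by the dual — so that any popular solution is compatible with the one chain. Once this canonicity is secured, the remaining steps are bookkeeping on the integral matroid intersection polytope.
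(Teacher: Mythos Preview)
Your proposal has a genuine gap in the characterization of popular solutions, and consequently in the face you cut out.

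The paper's route is different and simpler. First, it reduces popular common independent sets to popular common \emph{bases} by adding one dummy element $e_i$ per partition class (worst in $\succ_i$, and free in $M$); the ``projection'' in Theorem~\ref{thm:polytope} is then just the coordinate projection forgetting these dummies. No layered lifting is used at all. Second, and crucially, the paper proves (Lemma~\ref{lem:level-two}) that in this auxiliary instance every dual certificate has length at most~$2$, so the chain is $\C=\{E\}$ or $\C=\{C,E\}$. Under weak rankings it is then shown (Lemma~\ref{lem:face}) that \emph{this single} $\C$ witnesses every popular base $F'$: both $F'\subseteq E(\C)$ and $|F'\cap C|=\rank(C)$. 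The face of the (auxiliary) matroid intersection polytope is therefore cut by $x_e=0$ for $e\notin E(\C)$ together with the tight rank constraint $\sum_{e\in C}x_e=\rank(C)$ (Theorem~\ref{thm:colorful-forest}).

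Your characterization misses half of Lemma~\ref{lem:chain}: popularity of $A$ requires not only $A\subseteq E(\C)$ but also $\spn(A\cap C_i)=C_i$ for every $C_i\in\C$. Your proposed face only forces elements to sit at levels where they are maximal; it never encodes the rank conditions $|A\cap C_i|=\rank(C_i)$. Hence integral points of your face $F$ are not, in general, popular. (Concretely: there can be a base contained in $E(\C)$ that fails $|A\cap C|=\rank(C)$ and is not popular; your face would not exclude a lift of it.) Your description of $E(\C)$ is also off: membership in $E(\C)$ allows either a maximal element at the top level $\lev^*_\C(v)$ \emph{or} a maximal element at level $\lev^*_\C(v)-1$ that strictly dominates everything at the top level; it is not simply ``maximal at the deepest nonempty level''. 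Finally, you never pass from independent sets to bases, so the statement ``$|I\cap\delta(v)|=1$ for every $v$'' you rely on is not available without the dummy-element reduction.
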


There are an exponential number of constraints in this formulation, however it admits an efficient separation oracle.
As a consequence, when there is a function $\cost: E \rightarrow \mathbb{R}$, a min-cost popular common independent set can be
computed in polynomial time by optimizing over this polytope, 
% wo->po
assuming that preferences are weak rankings.  
Unfortunately, such a result does not hold for the min-cost popular arborescence problem.

\begin{theorem}
    \label{thm:min-cost}
    Given an instance $G = (V \cup \{r\},E)$ of the popular arborescence problem where each vertex has a strict ranking over its incoming edges along with a function $\cost: E \rightarrow \{0,1,\infty\}$, it is $\NP$-hard to compute a min-cost
    popular arborescence in $G$. 
\end{theorem}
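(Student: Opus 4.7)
The plan is to prove $\NP$-hardness via a polynomial-time many-one reduction from a well-known $\NP$-hard problem, a natural choice being Vertex Cover or $3$-SAT. The $\{0,1,\infty\}$ cost structure is tailored for this setting: edges with cost $\infty$ act as forbidden (any finite-cost popular arborescence must avoid them), cost-$1$ edges count toward the objective, and cost-$0$ edges are free. Since the plain \myproblem{popular arborescence} problem is polynomial-time solvable even with forbidden edges (by Theorem~\ref{thm:pop-arb} and its extension), the hardness must arise from the interaction of cost-$1$ counting with the large, combinatorially rich family of popular arborescences.

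Concretely, I would reduce from minimum Vertex Cover on a graph $H = (U, F)$, constructing a rooted directed graph $G = (V \cup \{r\}, E)$ as follows. For each $u \in U$, create a variable gadget whose internal strict preferences force any popular arborescence to pick one of two distinguished incoming edges at a designated vertex: an ``in-cover'' edge with cost $1$ or an ``out-of-cover'' edge with cost $0$. For each $\{u,v\} \in F$, attach an edge gadget connected to the two variable gadgets via $\infty$-cost hard-constraint edges, so that the configuration in which both $u$ and $v$ are declared ``out-of-cover'' is incompatible with any finite-cost popular arborescence. The total cost of a popular arborescence would then equal the size of the induced vertex cover of $H$.

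The correctness argument would have two directions, leveraging the primal-dual characterization underlying Theorem~\ref{thm:pop-arb}. In one direction, starting from a vertex cover $S \subseteq U$ of size $k$, I would exhibit a popular arborescence in $G$ of cost $k$ together with an explicit dual certificate---a chain of subsets of $E$---witnessing popularity. In the other direction, I would show that any finite-cost popular arborescence projects back to a vertex cover of $H$ with size equal to its cost, since any attempt to ``cheat'' on the encoded constraints would expose a strictly more-popular alternative, contradicting popularity. The main obstacle is designing the edge gadgets so that popularity, a fundamentally global voting notion, faithfully enforces the local Vertex Cover constraints: every illegal local configuration must admit a more-popular global alternative, so that it is excluded from the popular set. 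A secondary subtlety is guaranteeing existence, namely that whenever $H$ has a vertex cover, $G$ admits at least one popular arborescence of the corresponding cost; this may require auxiliary padding vertices and edges so that the intended dual certificate remains valid across the whole of $G$.
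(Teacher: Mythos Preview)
Your proposal is correct and follows essentially the same approach as the paper: a reduction from \textsc{Vertex Cover} in which each vertex of~$H$ is encoded by a small gadget offering a cost-$1$ ``in-cover'' versus cost-$0$ ``out-of-cover'' choice, each edge of~$H$ by a gadget linked to its endpoints through cost-$\infty$ edges, popularity in the forward direction certified by an explicit chain (the paper uses a length-$3$ chain), and the backward direction argued by exhibiting a more-popular alternative whenever some edge of~$H$ is left uncovered. The only substantive work you have left is the concrete gadget design; the paper's gadgets use pairs $v_0,v_1$ per vertex and pairs $e_u,e_v$ per edge, plus a hub vertex~$w$, and the ``more-popular alternative'' in the backward direction is a local rerouting that actually \emph{uses} one of the cost-$\infty$ edges (which is fine, since only the challenger's popularity matters, not its cost).
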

Nevertheless, finding a popular arborescence with forced/forbidden edges in an input instance with %weak rankings
% wo->po
partial order preferences
is polynomial-time solvable. This result allows us to recognize in polynomial time all those edges that are present in 
every popular arborescence and all those edges that are present in {\em no} popular arborescence.

\begin{theorem}
  \label{thm:forced-forbidden}
  For any instance $G = (V \cup \{r\},E)$ of the popular arborescence problem with a set $E^+ \subseteq E$ of forced edges
  and a set $E^- \subseteq E$ of forbidden edges, there is a polynomial-time algorithm to decide if there is a popular 
  arborescence $A$ with $E^+ \subseteq A$ and $E^- \cap A = \emptyset$ and to find one, if so.
\end{theorem}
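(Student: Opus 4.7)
My approach is to extend the primal-dual algorithm of Theorem~\ref{thm:pop-arb} (or its matroid intersection generalization in Theorem~\ref{thm:pop-largest}) to respect the side constraints $E^+ \subseteq A$ and $E^- \cap A = \emptyset$. A first natural attempt is to reduce to the unconstrained case by modifying the instance: for $E^-$, either delete the forbidden edges from $G$ or demote them to the bottom of each vertex's preference order; for $E^+$, contract the forced edges in the graphic and partition matroids so that they are compulsorily included in every candidate common base. However, none of these reductions preserves popularity in the sense required, since popularity in $G$ is measured against \emph{all} arborescences, including those using $E^-$ edges or disagreeing with $E^+$ at some vertex. A popular arborescence in the modified instance may therefore fail to be popular in the original. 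A direct black-box reduction to Theorem~\ref{thm:pop-arb} will thus not suffice.

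I would instead work directly with the dual certificate---the chain $\mathcal{C} = (C_0 \subseteq C_1 \subseteq \cdots \subseteq C_k = E)$ produced by the algorithm---and exploit the fact that, for a \emph{fixed} chain $\mathcal{C}$, the arborescences witnessed as popular by $\mathcal{C}$ form the common bases of a matroid intersection derived from $\mathcal{C}$ (the partition matroid, restricted on each block $\delta(v)$ to the chain-dependent subset of admissible edges, intersected with the graphic matroid). Within this restricted matroid intersection, enforcing $E^+$ and $E^-$ is a routine matter: $E^+$-edges are handled by matroid contraction and $E^-$-edges by matroid deletion in both matroids, and the resulting instance can be solved by a standard matroid intersection algorithm in polynomial time. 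Thus, once $\mathcal{C}$ is fixed, finding a constraint-satisfying popular arborescence compatible with it is polynomial-time tractable.

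The main obstacle is that the chain produced by the algorithm is not unique, and a popular arborescence meeting the constraints might only be witnessed by a chain different from any fixed default output. To overcome this, I would adapt the primal-dual algorithm so that during its iterative construction of $\mathcal{C}$ and the associated partial arborescence, the choices made at every iteration favor inclusion of $E^+$ edges and exclusion of $E^-$ edges, maintaining an invariant guaranteeing that the final chain is compatible with the constraints whenever any popular arborescence meeting them exists. The correctness proof of this adapted procedure would closely mirror the correctness proof of the unconstrained algorithm of Theorem~\ref{thm:pop-arb}, with additional bookkeeping to verify that the forced/forbidden sets never cause the construction to get stuck spuriously. Finally, the corollary of recognizing edges present in every popular arborescence (resp.\ in no popular arborescence) follows by running the algorithm $|E|$ times, each time with $E^- = \{e\}$ (resp.\ $E^+ = \{e\}$) for a single edge $e$.
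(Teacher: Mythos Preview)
Your proposal is correct and follows essentially the same route as the paper: modify the lex-maximal branching step of Algorithm~\ref{alg:pop-arb} to respect the side constraints, and reuse the invariant of Lemma~\ref{prop:correctness} for correctness.

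The paper streamlines your plan in two ways you may find useful. First, forced edges are reduced to forbidden edges by a one-line observation: requiring $E^+\subseteq A$ is equivalent to forbidding, at every vertex $v$ with $\delta(v)\cap E^+\neq\emptyset$, all other incoming edges of~$v$. So one only needs to handle a forbidden set~$E_0$. Second, the adaptation of the algorithm is simply to replace $E(\C)$ by $E(\C)\setminus E_0$ in line~4; no separate contraction/deletion step or ``favoring'' heuristic is required. Crucially, the correctness proof of Lemma~\ref{prop:correctness} goes through with no extra bookkeeping: the exchange elements $f_j$ constructed there always lie in the target popular arborescence~$A$, and since $A\cap E_0=\emptyset$ by hypothesis, each $f_j\notin E_0$ automatically, so every $I_j$ remains in $E(\C)\setminus E_0$ and the lex-maximality contradiction still fires. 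Thus the invariant you anticipated is exactly the unconstrained one, not a strengthened version.
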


In instances where a popular arborescence does not exist, we could relax popularity to {\em near-popularity} or ``low unpopularity''. 
A standard measure of unpopularity is the \emph{unpopularity margin}~\cite{McCutchen}, defined for any arborescence~$A$ as  $\mu(A)=\max_{A'} \phi(A',A)-\phi(A,A')$ where the maximum is taken over all arborescences~$A'$.
An arborescence~$A$ is popular if and only if $\mu(A)=0$. Unfortunately, finding an arborescence with minimum unpopularity margin is $\NP$-hard. 

\begin{theorem}
    \label{thm:min-unpop-margin}
    Given an instance $G = (V \cup \{r\},E)$ of the popular arborescence problem where each vertex has a strict ranking over its incoming edges, together with an integer~$k$, it is $\NP$-complete to decide whether $G$ contains an arborescence with unpopularity margin at most~$k$. 
\end{theorem}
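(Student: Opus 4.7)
The plan is to show membership in $\NP$ first and then to establish $\NP$-hardness by a reduction from a classical combinatorial problem.

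For $\NP$ membership, I would argue that for a given arborescence $A$ of $G$ the unpopularity margin $\mu(A)$ can be computed in polynomial time. For each edge $e$ entering a vertex $v$, assign the weight $w(e) = +1$ if $v$ prefers $e$ to $A(v)$, $w(e) = -1$ if $v$ prefers $A(v)$ to $e$, and $w(e) = 0$ if $e = A(v)$; under strict rankings these three cases are exhaustive. For any arborescence $A'$ of $G$ one then has
\[
w(A') \;=\; \sum_{v \in V} w(A'(v)) \;=\; \phi(A',A) - \phi(A,A'),
\]
so $\mu(A) = \max_{A'} w(A')$ equals the maximum weight of an arborescence rooted at $r$, which is polynomial-time computable by the Chu--Liu/Edmonds algorithm (after negating weights). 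Hence any arborescence $A$ with $\mu(A) \le k$ is a succinct certificate.

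For $\NP$-hardness, I would reduce from \myproblem{3-SAT} (alternatively, from \myproblem{Vertex Cover} on bounded-degree graphs). The reduction produces an instance $(G,k)$ built from a variable gadget $H_i$ for each Boolean variable $x_i$ whose ``canonical'' sub-arborescences correspond to the two truth assignments of $x_i$; a clause gadget for each clause, attached to the three relevant variable gadgets via literal edges; and a connecting structure at the root $r$ that ensures acyclicity and covers every vertex. Preferences are chosen so that, inside each variable gadget, any deviation from a canonical assignment contributes a controlled amount to $\mu$, while each clause gadget admits a ``bad'' swap arborescence $A'$ lifting the margin above the target threshold unless at least one literal of the clause is set to true by $A$. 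The integer $k$ is tuned so that $\mu(A) \le k$ is achievable if and only if the formula is satisfiable.

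The main obstacle will be in the gadget design: one must ensure that contributions to $\mu$ from the various gadgets decompose additively and that no unanticipated ``cross-gadget'' swap arborescence $A'$, which flips several variable or clause gadgets simultaneously, drops the margin below $k$ when the formula is unsatisfiable. This will likely require (i) amplification via duplicating gadget components so that any deviation dominates the swap count; (ii) strict rankings that let each vertex contribute unambiguously $\pm 1$ or $0$; and (iii) a directed layout where acyclicity does not create unintended arborescences. A systematic case analysis of candidate swaps $A'$ should then confirm that $G$ admits an arborescence of margin at most $k$ exactly when the 3-SAT instance is satisfiable.
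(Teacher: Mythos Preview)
Your $\NP$ membership argument is complete and in fact cleaner than the paper's: you observe that $\mu(A)$ itself is polynomial-time computable via a max-weight arborescence computation, so the arborescence~$A$ alone serves as a certificate. The paper instead appeals to a dual certificate for \ref{LP2}, which also works but is less direct.

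The gap is in the $\NP$-hardness half: what you have written is a plan, not a reduction. You name a source problem (3-SAT or Vertex Cover), describe in general terms what variable and clause gadgets should accomplish, and correctly identify the main obstacle---ensuring that the adversary's optimal arborescence $A'$ decomposes gadget-by-gadget rather than exploiting cross-gadget swaps---but you do not construct any gadgets, specify any preferences, or verify any bound. The entire substance of the hardness proof lives in exactly the part you defer to ``a systematic case analysis of candidate swaps''; without a concrete construction there is nothing to check, and the obstacle you flag is real enough that one cannot simply declare it will work out.

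For comparison, the paper reduces from a regular variant of \textsc{Exact 3-Cover} (ground set $U$ of size~$3n$, family $\mathcal{S}$ of $3n$ triples, each element in exactly three sets) and sets the target margin to $k=2n$. Each element $u$ gets an eight-vertex gadget $G_u$ built from four $2$-cycles, a $6$-cycle $C_u$, and a bidirected $K_{2,3}$; sets of $\mathcal{S}$ are encoded by $E_3$-edges linking the $B_u$'s of their three elements. An exact cover picks, for each $u$, one of three entry points into $C_u$, and the regularity of the instance is what makes the margin add up to exactly~$2n$ (certified by an explicit dual solution). The soundness direction counts ``clean'' gadgets (those with nonpositive local margin against the optimal adversary~$B$), shows there must be at least $n$ of them, and then uses local-replacement arguments on~$B$ to force the clean gadgets to determine a disjoint cover. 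This tight per-gadget accounting is precisely what your outline leaves open; a 3-SAT reduction may be possible, but you would need to build and analyze comparable machinery before the proof is complete.
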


\subsection{Background.}
\label{sec:background}
The notion of popularity was introduced by G{\"a}rdenfors~\cite{Gar75} in 1975 in bipartite graphs with two-sided strict preferences. 
In this model every stable matching~\cite{GS62} is popular, thus popular matchings always exist in this setting.
When preferences are \emph{one-sided}, popular matchings need not always exist. This is not very surprising 
given that popular solutions correspond to (weak) Condorcet winners~\cite{Con85,condorcet} and it is well-known in social choice theory
that such a winner need not exist. 

For the case when preferences are weak rankings, a combinatorial characterization of popular matchings was given in \cite{AIKM07} and this yielded an efficient algorithm to solve
the \myproblem{popular matching} problem in this case.
Note that the characterization in~\cite{AIKM07} does not generalize to partial order preferences, as argued in~\cite{KKMSS22arxiv}. 
Several extensions of the \myproblem{popular matching} problem have been considered such as random popular matchings~\cite{Mah06},
weighted voters~\cite{Mestre14}, capacitated objects~\cite{SM10}, popular mixed matchings~\cite{KMN09},
and popularity with matroid constraints~\cite{Kam17}. We refer to \cite{Cseh17} for a survey on results in popular matchings.

Popular spanning trees were studied in \cite{Darm13b,Darm16,DKP11a} where the incentive was to find a ``socially best'' spanning tree. 
However, in contrast to the popular colorful spanning tree problem, 
edges have no colors in their model and voters have rankings over the entire edge set. Many different ways to compare a pair of trees were studied
here, and most of these led to hardness results. 
Popular branchings, i.e., popular directed forests, in a directed graph %$G$ 
(where each vertex has
preferences as a partial order over its incoming edges) were studied in \cite{KKMSS20} where a polynomial-time algorithm was given for the \myproblem{popular branching} problem.
When preferences are weak rankings, polynomial-time algorithms for the min-cost popular branching problem and the $k$-unpopularity margin branching problem
were shown 
in~\cite{KKMSS20}; however these problems were shown to be $\NP$-hard for partial order preferences.
The popular branching problem where each vertex (i.e., voter) has a weight was considered in \cite{NT23}.

The \myproblem{popular assignment} algorithm from \cite{KKMSS22} 
solves the popular maximum matching problem in a bipartite graph, and works for partial order preferences.
It was also shown in~\cite{KKMSS22}
that the min-cost popular assignment problem is $\NP$-hard, even for strict rankings.

Many combinatorial optimization problems can be expressed as (largest) common independent sets in the intersection of two matroids. 
Interestingly, constraining one of the two matroids in the matroid intersection to be a partition matroid is not really
a restriction, because 
%as shown in \cite{HKL11}, 
any matroid intersection can be reduced to the case where one matroid is a partition matroid (see \cite[claims 104--106]{edmonds1970submodular}).
We refer to \cite{Goemans,Schrijver} for notes on matroid intersection and for the formulation of the matroid intersection polytope.

\subsection{An overview of our algorithm.}

For an arborescence $A$, we can naturally define a weight function $\wt_A: E \rightarrow \{-1, 0, 1\}$ such that for any arborescence $A'$ we have $\wt_A(A') = \phi(A',A) - \phi(A,A')$. 
Then a popular arborescence~$A$ is a max-weight arborescence in $G = (V\cup\{r\}, E)$ with this function $\wt_A$. Therefore, the \myproblem{popular arborescence} problem is the problem of finding $A\in{\cal A}_G$ such that $\max_{A'\in{\cal A}_G}\wt_A(A') = \wt_A(A) = 0$ where
${\cal A}_G$ is the set of all arborescences in~$G$.
Thus a popular arborescence $A$ is an optimal solution to the  max-weight arborescence LP with edge weights given by $\wt_A$.

\medskip

%\noindent{\bf Dual certificates.}
\paragraph{Dual certificates.}
We show that every popular arborescence $A$ has a dual certificate with a special structure; this corresponds to a \emph{chain} 
$\C =  \{ C_1,\ldots,C_p \}$ of subsets of $E$ with $\emptyset \subsetneq C_1 \subsetneq \cdots \subsetneq C_p = E$ and $\spn(A \cap C_i) = C_i$ for all $i$.\footnote{In the arborescence case, the set $\spn(A\cap C_i)$ is defined as $(A\cap C_i)\cup \set{e\in E: \text{$(A\cap C_i)+e$ contains a cycle}}$. 
}
Our algorithm to compute a popular arborescence is a search for such a chain $\C$ and arborescence $A$.
At a high level, this method is similar to the approach used in \cite{KKMSS22} for popular assignment, however our dual certificates are more complex
than those in \cite{KKMSS22}, and hence the steps in our algorithm (and its proof of correctness) become much more challenging.

Given a chain $\C$ of subsets of $E$, there is a polynomial-time algorithm to check if $\C$ corresponds to a dual certificate for some popular arborescence.
It follows from dual feasibility and complementary slackness that $\C$ is a dual certificate if and only if a certain subgraph $G_{\C} = (V\cup\{r\}, E(\C))$ 
admits an arborescence $A$ such that $\spn(A \cap C_i) = C_i$ for all $C_i \in \C$. 
If such an arborescence~$A$ exists in~$G_{\C}$, then it is easy to
show that $A$ is a popular arborescence in $G$ with $\C$ as its dual certificate. 

If $G_{\C}$ does not admit such an arborescence, then we need to update~$\C$. 
Since updating $\C$ changes~$E(\C)$,  we now seek an arborescence $A$ in the new graph~$G_{\C}$ such that $\spn(A \cap C_i) = C_i$
for all~$i$. If such an $A$ does not exist, then $\C$ is updated again. Note that updating $\C$ may increase $|\C|$.
When $|\C|$ becomes larger than $|V|$, we claim that $G$ has \emph{no} popular arborescence. 
Among other ideas, our technical novelty lies in the proof of this claim that is based on the strong exchange property of matroids.

\medskip

%\noindent{\bf Matroid Intersection.}
\paragraph{Matroid Intersection.}
Our algorithm holds in the generality of matroid intersection (where one of the matroids is a partition matroid); dual certificates for  
\myproblem{popular common bases} are exactly the same, i.e., chains that are described above.
We also show that a \myproblem{popular common independent set} has a dual certificate $\C = \{ C,E \}$ of length at most~2.
This leads to the polyhedral result given in Theorem~\ref{thm:polytope}. 

Our algorithm is quite different from the popular branching algorithm~\cite{KKMSS20} that (loosely speaking) first finds a maximum branching 
on {\em best} edges and then augments this branching with {\em second best} edges entering certain vertices. 
Indeed, as seen in Theorem~\ref{thm:polytope},
popular branchings or
popular common independent sets have a significantly simpler structure than popular common bases---the latter seem far tougher to characterize 
and analyze. Pleasingly, as we show here, there is a clean and compact algorithm to solve the \myproblem{popular common base} problem
 (see Algorithm~\ref{alg:pop-arb}).

For the sake of readability, we will describe our results for the \myproblem{popular common base} problem in terms of
the \myproblem{popular arborescence} problem and our results for the \myproblem{popular common independent set} problem
in terms of the \myproblem{popular colorful forest} problem.

\medskip

%\noindent{\bf Organization of the paper.}
\paragraph{Organization of the paper.} 
The rest of the paper is organized as follows. Section~\ref{sec:dual} describes dual certificates for popular arborescences. Section~\ref{sec:algo} presents
the popular arborescence algorithm and its proof of correctness. In Section~\ref{sec:colorful}, we discuss popular colorful forests and their polytope.
Section~\ref{sec:forbidden} provides the algorithm for the popular arborescence problem with forced/forbidden edges.

\switch{
Section~\ref{sec:hardness-mincost} shows the $\NP$-hardness of the min-cost popular arborescence problem and we refer to the
full version of our paper for the proof of Theorem~\ref{thm:min-unpop-margin}.}
{Our hardness results are proved in Sections~\ref{sec:hardness-mincost} and \ref{sec:hardness-min-unpop-margin}.}
\switch{
In Appendix~\ref{app:examples}, we present various examples and explain how our algorithm works on them.}% SODA version.
{Appendices~\ref{app:examples}, \ref{app:proofs}, and \ref{sec:discussion} respectively present examples of executions of our algorithm, omitted proofs, and extensions and related results.}

\section{Dual Certificates}\label{sec:dual}

In this section we show that every popular arborescence has a special dual certificate---this will be crucial 
in designing our algorithm in Section~\ref{sec:algo}.
Our input is a directed graph $G = (V \cup \{r\}, E)$ where the root vertex $r$ has no incoming edge, and
every vertex $v \in V$ has a 
% wo->po
%weak ranking 
partial order $\succ_v$ over its set of incoming edges, denoted by $\delta(v)$.
%; let $\delta(v)$ be the set of~$v$'s incoming edges.
For edges $e,f \in \delta(v)$, we write $e \sim_v f$ to denote that $v$ is indifferent between~$e$ and~$f$, i.e., $e \not\succ_v f$ and $f \not\succ_v e$.

Given an arborescence $A$, there is a simple method (as shown in \cite{KKMSS20}) to check if $A$ is popular or not.
We need to check that $\phi(A,A') \ge \phi(A',A)$ for all arborescences $A'$ in $G$.
For this, we will use the following function $\wt_A: E\to \{-1,0,1\}$. %

For any $v \in V$, let $A(v)$ be the unique edge in $A\cap \delta(v)$. For any $v \in V$ and $e\in \delta(v)$, let 
\[  \wt_A(e)=\begin{cases}
    \phantom{-} 1 & \text{if $e\succ_v A(v)$} \ \ \ \text{($v$ prefers $e$ to $A(v)$)};\\
    \phantom{-} 0&\text{if $e\sim_v A(v)$} \ \ \ \text{($v$ is indifferent between $e$ and $A(v)$)};\\
    -1 & \text{if $e\prec_v A(v)$} \ \ \ \text{($v$ prefers $A(v)$ to $e$)}.
\end{cases}\]

    It immediately follows from the definition of $\wt_A$ that
    we have $\wt_A(A') = \phi(A',A) - \phi(A,A')$ for any arborescence $A'$ in $G$. Thus $A$ is popular if and only if every arborescence in $G$
    has weight at most 0, where edge weights are given by $\wt_A$.

Consider the linear program problem \ref{LP1} below.  The constraints of \ref{LP1} describe the face of the matroid intersection polytope
corresponding to common bases. Recall that this is the intersection of the partition matroid 
on~$E = \bigcupdot_{v\in V} \delta(v)$ with the graphic matroid $M = (E, {\cal I})$ of $G$, whose rank is~$|V|$.
Here, $\rank:2^E\to \Zp$ is the rank function of $(E, \I)$, i.e, for any $S\subseteq E$, the value of $\rank(S)$  
is the maximum size of an acyclic subset of~$S$ in the graph $G$.

\begin{table}[H]
\begin{minipage}[t]{0.45\linewidth}\centering
\begin{align}
\label{LP1}
 \max \sum_{e \in E} \wt_A(e)\cdot x_e  &\mbox{\hspace*{0.1in}}\tag{LP1}\\   \notag
      \text{s.t.}\qquad\sum_{e \in \delta(v)}x_e & \, = \, 1  \mbox{\hspace*{0.5in}}\forall\, v \in V\\ \notag
                  \qquad \sum_{e\in S} x_e & \, \leq \, \rank(S) \mbox{\hspace*{0.1in}}\forall\, S \subseteq E\\ \notag
                        x_e  & \, \ge \, 0   \mbox{\hspace*{0.55in}}\forall\, e \in E. 
\end{align}
\end{minipage}
\hspace{0.4cm}
\begin{minipage}[t]{0.45\linewidth}\centering
  \begin{align}
  \label{LP2}
\quad\min \sum_{S \subseteq E}\rank(S)\cdot y_S & + \ \sum_{v\in V}\alpha_v  \mbox{\hspace*{0.1in}}\tag{LP2}\\ \notag
       \quad\text{s.t.}\quad ~\sum_{S: e\in S} y_S+\alpha_v & \, \ge \, \wt_{A}(e)  \mbox{\hspace*{0.1in}}\forall\,  e \in \delta(v),
       \forall v\in V\\ 
       \notag 
                  \qquad  y_S & \, \ge \, 0 \mbox{\hspace*{0.5in}}\forall\, S \subseteq E.
\end{align}
\end{minipage}
\end{table}

The feasible region of \ref{LP1} is the arborescence polytope of $G$. Hence \ref{LP1} is the max-weight arborescence LP 
in $G$ with edge weights given by $\wt_A$. The linear program %
\ref{LP2} is the dual LP in variables $y_S$ and $\alpha_v$ where $S \subseteq E$ and $v \in V$.

The arborescence $A$ is popular if and only if the optimal value of \ref{LP1} is at most $0$, more precisely, if the optimal value 
is exactly $0$, since $\wt_A(A) = 0$. Equivalently, $A$ is popular if and only if the optimal value of \ref{LP2} is~$0$. 
We will now show that \ref{LP2} has an optimal solution with some special properties.  For a popular arborescence~$A$,
a dual optimal solution that satisfies all these special properties (see Lemma~\ref{lem:duality}) will be called 
a {\em dual certificate} for~$A$.

The function $\spn:2^E\to 2^E$ of a matroid $(E, \I)$ is defined as follows:
\[\spn(S)=\set{e\in E: \, \rank(S+e)=\rank(S)}\quad \text{where}\ S\subseteq E.\]
In particular, if $S\in \I$, then $\spn(S)=S\cup \set{e\in E: S+e\not\in \I}$.

A {\em chain} $\C$ of length $p$ is a collection of $p$ distinct subsets of $E$ such that for each two distinct sets $C, C'\in \C$, we have either $C\subsetneq C'$ or $C'\subsetneq C$. 
That is, a chain has the form $\C = \{C_1, C_2,\dots, C_p\}$ where $C_1\subsetneq C_2\subsetneq \cdots \subsetneq C_p$.

Lemma~\ref{lem:duality} shows that \ref{LP2} always 
admits an optimal solution in the following special form.
The proof is based on basic facts on matroid intersection and linear programming, and we postpone it to the end of Section~\ref{sec:algo}. 
%and the fact that $\wt_A$ takes integer values at most $1$.

\begin{restatable}{lemma}{lempropduality}%[$\star$]
    \label{lem:duality}
    An arborescence $A$ is popular if and only if there exists a feasible solution $(\vec{y},\vec{\alpha})$ to \ref{LP2} 
    such that $\sum_{S\subseteq E} \rank(S)\cdot y_S+\sum_{v\in V}\alpha_v = 0$ and properties~1--4 are satisfied:
    \begin{enumerate}
        \item $\vec{y}$ is integral and its support $\C:=\set{S\subseteq E: y_S> 0}$ is a chain.
        \item Each $C\in \C$ satisfies $\spn(A\cap C)=C$.
        \item Every element in $\C$ is nonempty, and the maximal element in $\C$ is $E$.
        \item For each $C\in \C$, we have $y_{C}=1$.
        For each $v\in V$,  we have $\alpha_v = -|\set{C\in \C: A(v)\in C}|$.
    \end{enumerate}
\end{restatable}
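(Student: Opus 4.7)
}
The plan is to treat the two directions asymmetrically: the ``if'' direction is immediate from weak LP duality, while the ``only if'' direction requires a multi-step massage of an arbitrary optimal dual solution until it has the prescribed structure.

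\emph{Easy direction.} If $(\vec{y},\vec{\alpha})$ is dual feasible with objective $0$, then by weak duality the value of \ref{LP1} is at most $0$. Since $A$ itself is feasible in \ref{LP1} (arborescences are common bases, hence lie in the face described) with value $\wt_A(A)=0$, the maximum is exactly $0$. Thus $\wt_A(A')\le 0$ for every arborescence $A'$, i.e.\ $\phi(A',A)\le \phi(A,A')$, so $A$ is popular.

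\emph{Hard direction.} Assume $A$ is popular. Then the value of \ref{LP1} is $0$, so by strong duality \ref{LP2} admits an optimal solution $(\vec{y}^*,\vec{\alpha}^*)$ with objective $0$. I would then massage it through four transformations, each of which preserves feasibility and the objective value:
\begin{enumerate}
    \item \textbf{Uncrossing to a chain.} If incomparable $S,T\in\supp(\vec{y}^*)$ exist, shift an amount $\epsilon=\min(y^*_S,y^*_T)$ from $\{S,T\}$ onto $\{S\cap T,\, S\cup T\}$. Dual feasibility is preserved since $\mathbf{1}_S+\mathbf{1}_T=\mathbf{1}_{S\cap T}+\mathbf{1}_{S\cup T}$, and the objective does not increase by submodularity of $\rank$. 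Iterating yields property~1's chain support.
    \item \textbf{Integrality.} Invoke Edmonds' total dual integrality of the matroid intersection LP to replace the chain-supported dual by an integer-valued one with the same (zero) objective.
    \item \textbf{Tightening to spans.} For each $C\in\C$, complementary slackness with the primal optimum $A$ forces $|A\cap C|=\rank(C)$, so $A\cap C$ is a base of $C$ in $M$. Replacing $C$ by $\spn(A\cap C)$ keeps the chain (span is monotone), preserves feasibility (the indicator can only grow), and leaves the objective fixed because $\rank(\spn(A\cap C))=\rank(A\cap C)=\rank(C)$. This achieves property~2. Empty sets can be dropped.
    \item \textbf{Normalizing to unit weights and ensuring $E\in\C$.} Using the identity $\sum_{e\in\delta(v)}\mathbf{1}_E(e)=1$, the swap $y_E\mathrel{+}\!=1,\;\alpha_v\mathrel{-}\!=1$ for all $v$ is neutral in both feasibility and objective, so $E$ may be adjoined with $y_E=1$. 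For sets $C\subsetneq E$ with integer weight $y_C>1$, I would analogously ``peel off'' unit layers by exploiting the chain structure: each unit of $y_C$ can be converted into a unit weight at a strictly larger chain member together with compensating changes in $\vec\alpha$ restricted to vertices $v$ with $\delta(v)\cap C\neq\emptyset$, verifying neutrality via the equality primal constraints. Finally, once $y_C=1$ for every $C\in\C$, the value of $\alpha_v$ is forced by complementary slackness on the tight edge $A(v)$: the dual constraint $\sum_{C\ni A(v)}1+\alpha_v\ge \wt_A(A(v))=0$ must be tight (as $A$ is primal optimal and $x_{A(v)}>0$), giving $\alpha_v=-|\{C\in\C:A(v)\in C\}|$.
\end{enumerate}

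\emph{Main obstacle.} Steps~1--3 are standard, but Step~4 is where I expect the real work to lie: simultaneously ensuring $y_C=1$ for every chain member \emph{and} forcing the explicit formula for $\alpha_v$ seems to require an iterative reduction that carefully interleaves the span operation, the chain trimming, and unit-weight extraction, verifying at each step that the remaining dual constraints (in particular for edges $e$ with $\wt_A(e)=\pm1$) stay tight where needed. This is the point at which the full combinatorial structure of popular arborescences enters, and I expect the argument to rely on the fact that for each $v\in V$ the sets in $\C$ containing $A(v)$ form a sub-chain, so the accounting $\alpha_v=-n_v$ with $n_v=|\{C\in\C:A(v)\in C\}|$ is the only value consistent with tightness at $A(v)$.
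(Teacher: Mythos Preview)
The easy direction and your Steps 1--3 are essentially the paper's approach (the paper bundles your Steps 1 and 2 into a single citation to the standard fact that the matroid-intersection dual admits an integral optimum whose $y$-support is a chain, and for property~2 it replaces the maximal violating $C^*$ by $\spn(C^*)$ one set at a time rather than all at once, but your simultaneous replacement $C\mapsto\spn(A\cap C)$ also works).

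Your Step 4, however, is where the proposal goes off track. Shifting a unit of $y_C$ onto a strictly larger chain member does not decrease $\sum_S y_S$, so no amount of such shifting will ever bring every $y_C$ down to $1$; the operation you describe is simply not a reduction. You are also over-estimating the difficulty: this step is a short generic LP/matroid argument and has nothing to do with ``the full combinatorial structure of popular arborescences.''

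The paper's argument for property~4 is more direct. Among optimal duals satisfying properties 1--3, take $(\vec y,\vec\alpha)$ minimizing $\sum_S y_S$; complementary slackness at the edge $A(v)$ (which has $x_{A(v)}=1$) already forces $\alpha_v=-\sum_{C\ni A(v)} y_C$. If some $y_{C^*}\ge 2$, set $z_{C^*}=y_{C^*}-1$, $z_S=y_S$ otherwise, and $\beta_v=-\sum_{C\ni A(v)} z_C$. The objective changes by $-\rank(C^*)+|\{v:A(v)\in C^*\}|=-\rank(C^*)+|A\cap C^*|\le 0$, so $(\vec z,\vec\beta)$ is still optimal. For feasibility, the only constraints that could be violated are those for an edge $e\in\delta(v)$ with $e\in C^*$ and $A(v)\notin C^*$; but then, since $\C$ is a chain, every $C\in\C$ containing $A(v)$ must strictly contain $C^*$ and hence also contain $e$, whence $\sum_{C\ni e}z_C-\sum_{C\ni A(v)}z_C\ge z_{C^*}\ge 1\ge \wt_A(e)$. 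This contradicts minimality of $\sum_S y_S$, so $y_C=1$ for every $C\in\C$, and the claimed formula for $\alpha_v$ follows.
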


For any chain $\C$, we will now define a subset $E(\C)$ of $E$ that will be used in our algorithm. The construction of $E(\C)$ is inspired by the construction of an analogous edge subset in the popular assignment algorithm~\cite{KKMSS22}.

For a chain  $\C=\{C_1, C_2,\cdots, C_p\}$ with $\emptyset\subsetneq C_1 \subsetneq \dots \subsetneq C_p=E$,  
define
\begin{align*}
&\lev_{\C}(e)=\text{the index $i$ such that $e\in C_i\setminus C_{i-1}$ }\quad~ \text{for any}\ e\in E,\\
&\lev^*_{\C}(v)=\max\set{\lev_{\C}(e): e\in \delta(v)}\qquad\qquad\qquad \text{for any}\ v\in V,
\end{align*}
where we let $C_0=\emptyset$. 
Thus every element in $E$ has a {\em level} in $\{1,\ldots,p\}$ associated with it, which
is the minimum subscript $i$ such that $e \in C_i$ (where $C_i\in \C$). 
Furthermore, each $v \in V$ has a $\lev^*_{\C}$-value which is the highest level of any element in $\delta(v)$.

Define $E(\C)\subseteq E$ as follows. For each $v \in V$, an element $e\in \delta(v)$ belongs to $E(\C)$ if one of the following two conditions holds:
\begin{itemize}
%\item $\lev_{\C}(e)=\lev^*_{\C}(v)$ and $e\succeq_v e'$ for every $e' \in \delta(v)$ such that $\lev_{\C}(e')=\lev^*_{\C}(v)$;
\item $\lev_{\C}(e)=\lev^*_{\C}(v)$ and there is no element $e' \in \delta(v)$ such that $\lev_{\C}(e')=\lev^*_{\C}(v)$ and $e'\succ_v e$;
%\item $\lev_{\C}(e)=\lev^*_{\C}(v)-1$ and $e\succeq_v e'$ for every $e' \in \delta(v)$ such that $\lev_{\C}(e')=\lev^*_{\C}(v)-1$, and moreover, $e\succ_v f$ for every $f \in \delta(v)$ with $\lev_{\C}(f)=\lev^*_{\C}(v)$.
\item $\lev_{\C}(e)=\lev^*_{\C}(v)-1$ and there is no element $e' \in \delta(v)$ such that $\lev_{\C}(e')=\lev^*_{\C}(v)-1$ and $e'\succ_v e$, and moreover, $e\succ_v f$ for every $f \in \delta(v)$ with $\lev_{\C}(f)=\lev^*_{\C}(v)$.

\end{itemize}
In other words,  $e \in \delta(v)$ belongs to $E(\C)$ if either (i)~$e$ is a maximal element in $\delta(v)$ with respect to $\succ_v$ 
among those in $\lev^*_{\C}(v)$ or (ii)~$e$ is a maximal element in $\delta(v)$ 
among those in $\lev^*_{\C}(v) - 1$ and $v$ strictly prefers $e$ to all elements in level $\lev^*_{\C}(v)$.
From Lemma~\ref{lem:duality}, we obtain the following useful characterization of popular arborescences.

\begin{restatable}{lemma}{lemchain}%[$\star$]
\label{lem:chain}
An arborescence  $A$ is popular if and only if there exists a chain
 $\C=\{C_1,\ldots, C_p\}$ such that 
 $\emptyset \subsetneq C_1 
 \subsetneq \dots \subsetneq C_p =E$, $A\subseteq E(\C)$,  and 
 $\spn(A \cap C_i) = C_i$ for all $C_i \in \C$.
\end{restatable}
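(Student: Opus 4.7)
\medskip

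\noindent\textbf{Proof plan for Lemma~\ref{lem:chain}.}
The plan is to derive Lemma~\ref{lem:chain} as essentially an unfolding of Lemma~\ref{lem:duality}. Given a feasible dual $(\vec y,\vec\alpha)$ with objective~$0$ and the four special properties, the chain $\C$ already satisfies $\spn(A\cap C_i)=C_i$ for each $C_i\in\C$ (property~2) and $\emptyset\subsetneq C_1\subsetneq\cdots\subsetneq C_p=E$ (properties~1 and~3). So the only new content is the condition $A\subseteq E(\C)$, which I expect to read off from dual feasibility once the dual variables are substituted using property~4.

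Concretely, for every $v\in V$ and $e\in\delta(v)$, substituting $y_{C}=1$ for $C\in\C$ and $\alpha_v=-|\{C\in\C:A(v)\in C\}|$ into the feasibility constraint of \ref{LP2} turns it into the compact inequality
\[
\lev_{\C}(A(v))-\lev_{\C}(e)\ \ge\ \wt_A(e)\qquad\forall\,e\in\delta(v).
\]
Applying this to any $e^{*}\in\delta(v)$ with $\lev_{\C}(e^{*})=\lev^{*}_{\C}(v)$ and using $\wt_A(e^{*})\ge-1$ forces $\lev_{\C}(A(v))\in\{\lev^{*}_{\C}(v)-1,\lev^{*}_{\C}(v)\}$, since trivially $\lev_{\C}(A(v))\le \lev^{*}_{\C}(v)$. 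I then split into these two cases: when $\lev_{\C}(A(v))=\lev^{*}_{\C}(v)$, the inequality for any $e$ at that top level gives $\wt_A(e)\le 0$, meaning no $e$ at level $\lev^{*}_{\C}(v)$ is strictly preferred to $A(v)$, which is exactly the first bullet defining $E(\C)$; when $\lev_{\C}(A(v))=\lev^{*}_{\C}(v)-1$, the inequality for $e$ at level $\lev^{*}_{\C}(v)$ forces $\wt_A(e)=-1$ (i.e.\ $A(v)\succ_v e$), and the inequality for $e$ at level $\lev^{*}_{\C}(v)-1$ gives $e\not\succ_v A(v)$, recovering the second bullet. Hence $A(v)\in E(\C)$ for every $v$, i.e.\ $A\subseteq E(\C)$.

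For the converse, assume a chain $\C$ with $A\subseteq E(\C)$, $\spn(A\cap C_i)=C_i$ for all $i$, and $C_p=E$ exists. Define $y_{C}:=1$ for $C\in\C$, $y_S:=0$ otherwise, and $\alpha_v:=-|\{C\in\C:A(v)\in C\}|$. Dual feasibility reduces again to $\lev_{\C}(A(v))-\lev_{\C}(e)\ge\wt_A(e)$; since $A(v)\in E(\C)$ fits one of the two bullets, I verify the inequality by the symmetric case analysis (each level of $e$ relative to $\lev^{*}_{\C}(v)$ yields the required bound on $\wt_A(e)$ from the bullet's maximality/strict-preference condition). The objective value is
\[
\sum_{i=1}^{p}\rank(C_i)+\sum_{v\in V}\alpha_v\ =\ \sum_{i=1}^{p}|A\cap C_i|-\sum_{v\in V}|\{C\in\C:A(v)\in C\}|,
\]
where $\rank(C_i)=|A\cap C_i|$ follows from $\spn(A\cap C_i)=C_i$ together with $A\cap C_i\in\I$. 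Swapping the order of summation in the second term yields $\sum_i|\{v:A(v)\in C_i\}|=\sum_i|A\cap C_i|$ (using that $\{\delta(v)\}_v$ partitions $E$ and $A\cap\delta(v)=\{A(v)\}$), so the objective is~$0$ and Lemma~\ref{lem:duality} certifies that $A$ is popular.

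The bulk of the work is bookkeeping, so I do not expect a serious obstacle; the only place requiring care is confirming that $\lev_{\C}(A(v))$ is pinned down to exactly two possible values, since the whole two-bullet shape of the definition of $E(\C)$ hinges on this dichotomy.
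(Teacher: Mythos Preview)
Your proposal is correct and follows essentially the same route as the paper: both directions go through Lemma~\ref{lem:duality}, reduce dual feasibility under property~4 to the level inequality $\lev_{\C}(A(v))-\lev_{\C}(e)\ge\wt_A(e)$, pin $\lev_{\C}(A(v))$ to $\lev^{*}_{\C}(v)$ or $\lev^{*}_{\C}(v)-1$, and then read off the two bullets defining $E(\C)$. Your explicit double-counting argument for why the objective equals~$0$ is a slight elaboration of what the paper leaves implicit, but otherwise the arguments coincide.
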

The proof is given below. Recall that for a popular arborescence $A$, we defined its {\em dual certificate} as a dual optimal solution $(\vec{y}, \vec{\alpha})$ to \ref{LP2} that satisfies properties 1--4 in Lemma~\ref{lem:duality}.
As shown in the proof of Lemma~\ref{lem:chain}, we can obtain such a solution $(\vec{y}, \vec{\alpha})$ from a chain satisfying the properties in Lemma~\ref{lem:chain}. We therefore will also use the term {\em dual certificate} to refer to a chain as described in Lemma~\ref{lem:chain}.

\begin{myproof}{Proof of Lemma~\ref{lem:chain}}  % moved from Appendix
We first show the existence of a desired chain $\C$ for a popular arborescence $A$.
Since $A$ is popular,  
we know from Lemma~\ref{lem:duality} that there exists an optimal solution $(\vec{y}, \vec{\alpha})$ to \ref{LP2} such that
properties~1--4 hold, where $\C$ is the support of $y$. 
Since the properties $\emptyset \subsetneq C_1 
 \subsetneq \dots \subsetneq C_p =E$ and $\spn(A \cap C_i) = C_i~(\forall C_i \in \C)$ directly follow from  properties~3 and 2, respectively, it remains  to show that $A\subseteq E(\C)$.

Since $(\vec{y},\vec{\alpha})$ is a feasible solution of \ref{LP2}, we have $\sum_{S:e\in S}y_S+\alpha_v\geq \wt_A(e)$ 
for every $e\in \delta(v)$ with $v\in V$. By property~4, the left hand side can be expressed as
\[|\set{C_i\in \C:e\in C_i}|-|\set{C_i\in \C:A(v)\in C_i}|=
(p-\lev_{\C}(e)+1)-(p-\lev_{\C}(A(v))+1)=\lev_{\C}(A(v))-\lev_{\C}(e).\]
Thus it is equivalent to the condition that for every $e\in \delta(v)$:
\begin{equation}
\lev_{\C}(A(v))-\lev_{\C}(e)\geq \wt_A(e)=
\begin{cases}
    1 & \text{if $e\succ_v A(v)$};\\
    0&\text{if $e\sim_v A(v)$};\\
    -1 & \text{if $e\prec_v A(v)$.}
\end{cases}\label{eq:ineq}
\end{equation}
In particular, this holds for an edge $e'$ with $\lev_{\C}(e')=\lev^*_{\C}(v)$, 
and hence we have $\lev_{\C}(A(v))\geq \lev^*_{\C}(v)-1$. Since $\lev_{\C}(A(v))\leq \lev^*_{\C}(v)$  by $A(v)\in \delta(v)$, $\lev_{\C}(A(v))$ is either $\lev^*_{\C}(v)$ or $\lev^*_{\C}(v)-1$.
\begin{itemize}
\item If $\lev_{\C}(A(v))=\lev^*_{\C}(v)$, then for any $e\in \delta(v)$ with $\lev_{\C}(e)=\lev^*_{\C}(v)$,
the left hand side of \eqref{eq:ineq} is $0$, and hence it must be the case that either $A(v) \succ_v e$ or 
$A(v) \sim_v e$. 
Hence $A(v)$ is a maximal element 
in~$\set{e\in \delta(v): \lev_{\C}(e)=\lev^*_{\C}(v)}$ with respect to $\succ_v$.
\item If $\lev_{\C}(A(v))=\lev^*_{\C}(v)-1$, then we can similarly show that 
$A(v)$ is a maximal element in the set~$\set{e\in \delta(v): \lev_{\C}(e)=\lev^*_{\C}(v)-1}$ with respect to $\succ_v$.
Furthermore, in this case, for any $e\in \delta(v)$ with $\lev_{\C}(e)=\lev^*_{\C}(v)$, the left hand side of 
\eqref{eq:ineq} is $-1$, and hence $A(v)\succ_v e$ must hold. 
\end{itemize}
Therefore, in either case, we have $A(v)\in E(\C)$, which implies that  $A\subseteq E(\C)$.

\medskip

For the converse, suppose that  ${\cal C}=\{C_1, \dots , C_p\}$  is a chain such that  $\emptyset \subsetneq C_1 
 \subsetneq \dots \subsetneq C_p =E$, $A\subseteq E(\C)$, and $\spn(A \cap C_i)= C_i$ for all $C_i \in \C$. 
Define $\vec{y}$ by $y_{C_i}=1$ for every $C_i\in \C$ and $y_S=0$ for all $S \in 2^S\setminus \C$. 
We also define 
$\vec{\alpha}$ by  $\alpha_v=-|\set{C\in \C :A(v)\in C }|$ for any $v\in V$. 
Then $(\vec{y}, \vec{\alpha})$ satisfies properties~1-4 given in Lemma~\ref{lem:duality}, which also implies that  
the objective value is $0$.  
Thus it is enough to show that $(\vec{y}, \vec{\alpha})$ is a feasible solution to \ref{LP2}, because it implies that  $A$ is a popular arborescence by Lemma~\ref{lem:duality}.
Observe that constraint~\eqref{eq:ineq} is satisfied for every $v\in V$ and $e\in \delta(v)$, which follows from  $A\subseteq E(\C)$.
Since it is equivalent to   the constraint in \ref{LP2} for~$v \in V$ and $e \in \delta(v)$, 
the proof is completed.
\end{myproof}

\section{Our Algorithm}\label{sec:algo}

We now present our main result. The popular arborescence algorithm seeks to construct an arborescence~$A$ along with 
its dual certificate ${\cal C}=\{C_1, \dots , C_p\}$, which is a chain satisfying (i)~$\emptyset\subsetneq C_1 \subsetneq \dots \subsetneq C_p =E$,   (ii)~$A \subseteq E(\C)$,  and (iii)~$\spn(A \cap C_i) = C_i$ for all 
$C_i \in \C$. 

\begin{enumerate}
    \item[-] The existence of such a chain $\C$ means that $A$ is popular by Lemma~\ref{lem:chain}. 
    \item[-] Since a popular arborescence %
need not always exist, the algorithm also needs to detect when a solution does not exist.
\end{enumerate}

The algorithm starts with the chain $\C = \{E\}$ and repeatedly updates it. It always maintains $\C$ as a multichain, where a collection $\C=\{C_1,\cdots, C_p\}$ of indexed subsets of $E$ is called a {\em multichain} if $C_1\subseteq \cdots \subseteq C_p$. 
Note that it is a chain if all the inclusions are strict.
We will use the notations $\lev_\C$, $\lev^*_\C$, and $E(\C)$ also for multichains, which are defined in the same manner as for chains.  

During the algorithm, $\C=\{C_1, \ldots, C_p\}$ is always a multichain with $C_p=E$ and $\spn(C_i)=C_i$ for all $C_i \in \C$.
Note that when $\spn(C_i)=C_i$ holds, the condition (iii) for some~$A$ above is equivalent to $|A\cap C_i|=\rank(C_i)$. Furthermore, as explained later, any multichain can be  modified to a chain that satisfies (i) preserving the remaining conditions (ii) and (iii). Therefore,  we can obtain a desired chain if $|A\cap C_i|=\rank(C_i)$ is attained for all $C_i \in \C$ for some arborescence $A \subseteq E(\C)$ in the algorithm.

\medskip

\paragraph{Lex-maximal branching.}
In order to determine the existence of an arborescence $A\subseteq E(\C)$ that satisfies $|A \cap C_i| = \rank(C_i)$ for all $C_i \in \C$, the algorithm computes a {\em lex-maximal} branching $I$ in $E(\C)$. 
That is, 
$I$ is a branching whose $p$-tuple $(|I \cap C_1|, \ldots,|I\cap C_p|)$ is lexicographically maximum 
among all branchings in~$E(\C)$. If~$(|I \cap C_1|, \ldots,|I\cap C_p|) = (\rank(C_1),\ldots,\rank(C_p))$, then we can show that $I$ is a popular arborescence\footnote{Observe that the branching $I$ will be an {\em arborescence} since $|I\cap E| = |I \cap C_p| = \rank(C_p) = \rank(E) = |V|$.\label{footnote-arb}}; 
otherwise the multichain~$\C$ is updated. We describe the algorithm as Algorithm~\ref{alg:pop-arb}; 
recall that $\rank(E) = |V| = n$.

\begin{algorithm}
\caption{The popular arborescence algorithm}\label{alg:pop-arb}
\begin{algorithmic}[1]
\State Initialize $p =1$ and $C_1 = E$.
\Comment{Initially we set $\C = \{E\}$.}
\While{$p \le n$}
\State Compute the edge set $E(\C)$ from the current multichain $\C$.
\State Find a branching  $I\subseteq E(\C)$ that lexicographically maximizes $(|I\cap C_1|, \dots, |I\cap C_{p}|)$.
\If{$|I\cap C_i|=\rank(C_i)$ for every $i=1,\dots, p$} return $I$. \EndIf
\State Let $k$ be the minimum index such that $|I\cap C_k|<\rank(C_k)$.  
\State Update $C_k \leftarrow \spn(I \cap C_k)$.
\If{$k = p$} $p \leftarrow p+1$, $C_p \leftarrow E$, and $\C \leftarrow \C \cup \{C_p\}$. \EndIf
\EndWhile
\State Return ``$G$ has no popular arborescence''.  
\end{algorithmic}
\end{algorithm}

We include some examples in Appendix~\ref{app:examples} to illustrate the working of Algorithm~\ref{alg:pop-arb} on different input instances. 
% $G = (V\cup\{r\},E)$.
The following observation is important.

\begin{observation}
\label{obs1}
    During Algorithm~\ref{alg:pop-arb}, $\C$ is always a multichain and  $\spn(C_i)=C_i$ for all $C_i\in \C$.
\end{observation}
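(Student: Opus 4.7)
The proof plan is to proceed by induction on the number of iterations of the while loop, using the standard matroid facts that the span operator $\spn$ is monotone (i.e., $S\subseteq T$ implies $\spn(S)\subseteq \spn(T)$), idempotent (i.e., $\spn(\spn(S))=\spn(S)$), and that $\spn(S)=C$ whenever $S\subseteq C$ with $|S|=\rank(S)=\rank(C)$.

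\textbf{Base case.} Initially $\C=\{E\}$ with $p=1$ and $C_1=E$. The multichain property is vacuous for a single-element collection, and $\spn(E)=E$ trivially since $S+e\in \I$ is impossible once $\rank(S)=\rank(E)$.

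\textbf{Inductive step.} Assume that at the start of some iteration $\C=\{C_1,\dots,C_p\}$ satisfies $C_1\subseteq\cdots\subseteq C_p=E$ and $\spn(C_i)=C_i$ for each $i$. Suppose the algorithm does not terminate in this iteration and reaches the update step with index $k$, so that $|I\cap C_j|=\rank(C_j)$ for all $j<k$ and $|I\cap C_k|<\rank(C_k)$. The key facts I would establish are:
\begin{itemize}
\item For every $j<k$, the set $I\cap C_j$ is a base of $C_j$ in the matroid restricted to $C_j$, so $\spn(I\cap C_j)=C_j$.
\item Writing $C'_k:=\spn(I\cap C_k)$ for the updated set, monotonicity of $\spn$ together with the inductive hypothesis $\spn(C_k)=C_k$ gives $C'_k\subseteq C_k$. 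Combined with $k$ being minimal, we have $I\cap C_k\subsetneq C_k$, hence $C'_k\subsetneq C_k$ (in fact strictly, although only $\subseteq$ is needed here).
\item For the lower inclusion, $I\cap C_{k-1}\subseteq I\cap C_k$ yields $C_{k-1}=\spn(I\cap C_{k-1})\subseteq \spn(I\cap C_k)=C'_k$ by monotonicity; if $k=1$ this step is vacuous.
\item For the upper inclusion (when $k<p$), $I\cap C_k\subseteq C_k\subseteq C_{k+1}$ gives $C'_k=\spn(I\cap C_k)\subseteq \spn(C_{k+1})=C_{k+1}$.
\end{itemize}
These four facts show that replacing $C_k$ by $C'_k$ preserves the multichain property. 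For the span property, all $C_j$ with $j\ne k$ are unchanged and still satisfy $\spn(C_j)=C_j$ by hypothesis, while $\spn(C'_k)=\spn(\spn(I\cap C_k))=\spn(I\cap C_k)=C'_k$ by idempotence of $\spn$.

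\textbf{Extension step.} If $k=p$, the algorithm additionally appends $C_{p+1}=E$. Since the updated $C_p=C'_p\subseteq C_p^{\text{old}}=E$, the multichain property extends, and $\spn(E)=E$ gives the span property for the new top element. This completes the induction.

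I do not expect any real obstacle here: the whole content of the observation is a routine consequence of the three basic properties of $\spn$ listed above, together with the fact that $|I\cap C_j|=\rank(C_j)$ makes $I\cap C_j$ a spanning base of $C_j$. The one point worth highlighting in the write-up is that the minimality of $k$ is precisely what is needed to conclude $\spn(I\cap C_{k-1})=C_{k-1}$, which anchors the new $C_k$ above $C_{k-1}$ in the chain.
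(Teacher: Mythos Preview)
Your proposal is correct and follows essentially the same approach as the paper's own proof: both use the minimality of $k$ to conclude $\spn(I\cap C_{k-1})=C_{k-1}$, then monotonicity of $\spn$ to get $C_{k-1}\subseteq \spn(I\cap C_k)$, and idempotence of $\spn$ to verify that every $C_i$ remains a flat. Your write-up is somewhat more explicit (you spell out the induction, the upper inclusion $C'_k\subseteq C_{k+1}$, and the extension step when $k=p$), but the underlying argument is identical.
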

\begin{proof}
    When $C_k$ is updated, it becomes smaller but the inclusion $C_{k-1}\subseteq C_k$ is preserved.
Indeed, since $|I\cap C_{k-1}|=\rank(C_{k-1})$ by the choice of $k$,  we have $C_{k-1}\subseteq \spn(I\cap C_{k-1})\subseteq \spn(I\cap C_{k})$, for the set~$C_k$ before the update. Hence the updated value for~$C_k$, i.e.,  $\spn(I\cap C_{k})$, is still a superset of $C_{k-1}$, and thus $\C$ remains a multichain. 

Since any $C_i\in \C$ is defined in the form $\spn(X)$ for some $X\subseteq E$ (note that $E=\spn(E)$) and $\spn(\spn(X))=\spn(X)$ holds in general,
we have $\spn(C_i)=C_i$.
\end{proof}

Line~4 can be implemented in polynomial time by a max-weight branching algorithm \cite{Bock71,CL65,Edm67}
and, in the more general case of the intersection of two matroids, by the weighted matroid intersection algorithm~\cite{Frank81}. 
Hence Algorithm~\ref{alg:pop-arb} can be implemented in polynomial time.

\medskip

\paragraph{Correctness of the algorithm.}
Suppose that a branching $I$ is returned by the algorithm. Then $I$ is an arborescence (see Footnote~\ref{footnote-arb})
with $I\subseteq E(\C)$, where $\C$ is the current multichain. 
This implies that  $I \subseteq E(\C)$ and $|I \cap C_i| = \rank(C_i)$
for all $C_i \in \C$, and the latter implies $\spn(I \cap C_i) = C_i$
for all $C_i \in \C$ by Observation~\ref{obs1}.

In order to prove that $I$ is a popular arborescence, let us first prune the multichain $\C$ to a chain~$\C'$,
i.e., $\C'$ contains a single occurrence of each $C_i \in \C$; we will also remove any occurrence of $\emptyset$ from $\C'$.
Observe that $E(\C) \subseteq E(\C')$: indeed, if $C_i = C_{i+1}$ in $\C$, then no element~$e \in E $ can have $\lev_{\C}(e) = i+1$,
and hence no element gets deleted from $E(\C)$ by pruning $C_{i+1}$ from $\C$.
Thus $I \subseteq E(\C) \subseteq E(\C')$. 
This implies that $\C'=\{C'_1,\ldots, C'_{p'}\}$ satisfies 
 $\emptyset \subsetneq C'_1 \subsetneq \dots \subsetneq C'_{p'} =E$, $I\subseteq E(\C')$,  and 
 $\spn(I \cap C'_i) = C'_i$ for all $C'_i \in \C'$.\footnote{In fact, it will turn out that $\C=\C'$, i.e., the final $\C$ obtained by the algorithm itself is a dual certificate of $I$ if the algorithm returns an arborescence $I$. This fact follows from Lemma~\ref{prop:correctness} (with $\C'$ substituted for $\D$).} 
 Hence $I$ is a popular arborescence by Lemma~\ref{lem:chain}. 

We will now show that the algorithm always returns a popular arborescence, if $G$ admits one. 
Let $A$ be any popular arborescence in $G$ and let $\D = \{D_1,\ldots,D_q\}$ be a dual certificate for $A$.
\begin{claim}
    \label{clm:level}
     We have $q \le n$ where $|\D| = q$.
\end{claim}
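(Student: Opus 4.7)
The plan is to show that the ranks $\rank(D_1), \ldots, \rank(D_q)$ form a strictly increasing sequence of positive integers bounded above by $n = \rank(E)$, which immediately yields $q \le n$.

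The first step is to observe that $|A \cap D_i| = \rank(D_i)$ for every $D_i \in \D$. Indeed, $A \cap D_i$ is independent (being a subset of the arborescence $A$), and the defining condition $\spn(A \cap D_i) = D_i$ of a dual certificate says precisely that $A \cap D_i$ is a base of $D_i$ in the matroid, so its cardinality equals $\rank(D_i)$.

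Next I would prove strict monotonicity $\rank(D_i) < \rank(D_{i+1})$ for each $i < q$ by contradiction. Equality would give $|A \cap D_i| = |A \cap D_{i+1}|$; combined with the containment $A \cap D_i \subseteq A \cap D_{i+1}$ coming from $D_i \subsetneq D_{i+1}$, this forces $A \cap D_i = A \cap D_{i+1}$. Applying $\spn$ to both sides then yields $D_i = \spn(A \cap D_i) = \spn(A \cap D_{i+1}) = D_{i+1}$, contradicting $D_i \subsetneq D_{i+1}$.

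Finally, $\rank(D_1) \ge 1$ because $D_1 \ne \emptyset$ contains some edge $e$, and since $G$ has no self-loops, $e$ is not a loop of the graphic matroid, giving $\rank(\{e\}) = 1$. Combining these observations, the chain $1 \le \rank(D_1) < \rank(D_2) < \cdots < \rank(D_q) = n$ exhibits $q$ distinct integers from $\{1, \ldots, n\}$, proving $q \le n$. The only nontrivial step is the strict rank increase, which rests on the idempotency of the closure operator $\spn$ together with independence of $A$; once that is in place, the bound is a one-line counting argument.
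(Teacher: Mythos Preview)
Your proof is correct and follows essentially the same approach as the paper: both establish the strictly increasing chain $0 < \rank(D_1) < \cdots < \rank(D_q) = n$ and conclude $q \le n$. The only cosmetic difference is that the paper derives the strict rank increase from the observation that each $D_i$ is a flat (since $\spn(D_i) = \spn(\spn(A\cap D_i)) = \spn(A\cap D_i) = D_i$), whereas you argue via $|A \cap D_i| = \rank(D_i)$; these are equivalent routes to the same inequality.
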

\begin{proof}
    From the definition of dual certificate, we have $\emptyset \subsetneq D_1\subsetneq \cdots \subsetneq D_q=E$ and $\spn(D_i) = D_i$ for each $D_i$. 
    This implies 
    $0 < \rank(D_1) < \cdots < \rank(D_q)$. Since $\rank(D_q) = \rank(E) =|V|$, 
    we obtain  $q \le |V| = n$.
\end{proof}

The following crucial lemma shows an invariant of the algorithm that holds for the multichain $\C=\{C_1, \ldots, C_p\}$ constructed in the algorithm and a dual certificate $\D = \{D_1,\ldots,D_q\}$ of any popular arborescence~$A$. 
The proof will be given in this section. 
\begin{lemma}
\label{prop:correctness}
\emph{At any moment of Algorithm~\ref{alg:pop-arb}, $p\leq q$ and $D_i\subseteq C_i$ holds for $i=1,\dots, p$.}
\end{lemma}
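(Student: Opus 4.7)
The plan is to prove the invariant by induction on the number of update steps executed by Algorithm~\ref{alg:pop-arb}. At initialization the invariant is immediate: $p=1$, $C_1=E\supseteq D_1$, and $q\geq 1$ by Claim~\ref{clm:level}. For the inductive step, I would assume the invariant holds at the start of some iteration, let $I$ be the lex-maximal branching computed in line~4, and consider the case in which the algorithm does not return. With $k$ the minimum index satisfying $|I\cap C_k|<\rank(C_k)$, the update changes only $C_k$ (to $\spn(I\cap C_k)$) and, if $k=p$, appends a new top element~$E$. Because the newly appended $E$ trivially contains any $D_i$, the invariant survives provided one proves (i)~$D_k\subseteq \spn(I\cap C_k)$, and (ii)~if $k=p$, then $p<q$.

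The first step is a level-monotonicity lemma: using $D_i\subseteq C_i$ for $i\leq p$ together with $C_p=E$, a short case split on whether $\lev_\D(e)\leq p$ gives $\lev_\C(e)\leq \lev_\D(e)$ for every $e\in E$, and hence $\lev^*_\C(v)\leq \lev^*_\D(v)$ for every $v\in V$. This is the bridge that lets me translate edges of the popular arborescence $A$, which lives in $E(\D)$, into the coarser set $E(\C)$ during an exchange argument.

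The main step is claim~(i), which I would prove by contradiction. Suppose $e^*\in D_k\setminus \spn(I\cap C_k)$. Starting from $A$, for which $A\subseteq E(\D)$ and $|A\cap D_i|=\rank(D_i)$ for all $i\leq q$, the goal is to build a branching $J\subseteq E(\C)$ whose tuple $(|J\cap C_1|,\dots,|J\cap C_p|)$ is strictly lex-larger than that of~$I$, contradicting the lex-maximality of~$I$. The construction would apply the strong exchange property of the intersection of the partition and graphic matroids, iteratively swapping pairs of edges between $I$ and $A$ along an augmenting structure inside $C_k$; each intermediate set stays a common independent set, and the level sizes at $C_1,\dots,C_{k-1}$ (already at their maximum $\rank(C_i)$ by minimality of~$k$) are preserved. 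The bound $\lev_\C\leq \lev_\D$, combined with the explicit description of both $E(\C)$ and $E(\D)$ as consisting at each vertex of $\succ_v$-maximal edges at the top two levels only, is what certifies that every edge brought in from $A$ already lies in $E(\C)$.

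Claim~(ii) then falls out of claim~(i): if $k=p$ and $p=q$ then $D_p=D_q=E=C_p$, so (i)~would give $E\subseteq \spn(I\cap E)=\spn(I)$, hence $|I|=n=\rank(C_p)$, contradicting $|I\cap C_p|<\rank(C_p)$; thus the algorithm proceeds with $k=p$ only when $p<q$, and the new $p+1\leq q$. The principal obstacle I anticipate is in claim~(i): the set $E(\C)$ is a preference-determined subset of $E$ that is not closed under matroid exchange, so each swap must be carried out in a carefully chosen order that keeps all modified edges inside $E(\C)$. This is where the strong exchange property of matroid intersection has to be combined with the level-monotonicity lemma, and is exactly the spot identified in the introduction as the paper's main technical novelty.
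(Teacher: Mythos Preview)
Your inductive framework is correct and matches the paper's, including the derivation of~(ii) from~(i). The gap lies in how you plan to execute~(i).

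First, level monotonicity $\lev_\C\leq \lev_\D$ by itself does \emph{not} certify that an edge of~$A$ brought in by exchange lies in~$E(\C)$. Take $\delta(v)=\{e,f\}$ with $e\succ_v f$, and suppose $\lev_\D(e)=2$, $\lev_\D(f)=3$ while $\lev_\C(e)=1$, $\lev_\C(f)=3$; this is consistent with $D_i\subseteq C_i$, and $e\in E(\D)$, yet $e\notin E(\C)$ since $\lev^*_\C(v)-\lev_\C(e)=2$. What the paper actually uses (Claim~\ref{claim:1}) is that $e\in E(\D)$ together with level \emph{equality} $\lev_\C(e)=\lev_\D(e)$ forces $e\in E(\C)$. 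Maintaining this equality for each exchanged element~$f_j$ (and each~$e_j$) is the heart of the argument: it is propagated inductively through a chain of inequalities $\lev_\C(e_{j-1})\leq \lev_\C(f_j)\leq \lev_\D(f_j)\leq \lev_\D(e_{j-1})=\lev_\C(e_{j-1})$, and establishing $\lev_\C(e_j)=\lev_\D(e_j)$ afterwards requires a three-way case split on $\succ_{v_j}$ at each step (Claims~\ref{claim-e1-f1}--\ref{claim-next}).

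Second, your framing of the contradiction---build a branching $J\subseteq E(\C)$ lex-larger than~$I$---is not how the argument runs, and ``the strong exchange property of the intersection'' is not the right tool (common independent sets do not form a matroid). The paper starts from $I\cap C_k$ (not from~$A$), adds $f_1\in A\cap D_k$ outside $\spn(I\cap C_k)$ to obtain $I_1\subseteq E(\C)$ independent in~$M$ and lex-better than~$I$; hence $I_1$ must \emph{violate} the partition constraint, producing $e_1\in I_1\setminus A$ sharing a class with~$f_1$. Applying Brualdi's strong exchange property for the single matroid~$M$ to $I_1,A,e_1$ yields $f_2\in A\setminus I_1$, defining $I_2=I_1-e_1+f_2$, which again must violate partition, and so on indefinitely. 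Thus the intermediate sets are deliberately \emph{not} common independent sets---their failure to be branchings is what drives the process forward---and the contradiction is not a lex-better branching but that the $f_j$ are pairwise distinct elements of~$A$, impossible once their number exceeds~$n$.
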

If $p=n+1$ occurs in Algorithm~\ref{alg:pop-arb}, then Lemma~\ref{prop:correctness} implies $q \ge n+1$. This contradicts Claim~\ref{clm:level}. Hence it has to be the case that 
$G$ has no popular arborescence when $p = n+1$. Thus assuming Lemma~\ref{prop:correctness}, 
the correctness of Algorithm~\ref{alg:pop-arb} follows.

Before we prove Lemma~\ref{prop:correctness}, we need the following claim on $E(\C)$ and $E(\D)$.

\begin{claim}
\label{claim:1}
Assume $p\leq q$ and $D_i\subseteq C_i$ for $i=1,\dots, p$.
For each $e\in E$, if $\lev_{\C}(e)=\lev_{\D}(e)$ and $e\in E(\D)$, then $e\in E(\C)$. 
\end{claim}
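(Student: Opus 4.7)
The plan is to reduce the claim to a short case analysis driven by a single monotonicity statement about levels. First, I would establish that the hypothesis $D_i \subseteq C_i$ for $i = 1,\ldots,p$ (together with the convention $D_0 = C_0 = \emptyset$) forces $\lev_{\D}(f) \ge \lev_{\C}(f)$ for every $f \in E$: if $\lev_{\C}(f) = i$, then $f \notin C_{i-1} \supseteq D_{i-1}$, so $f \notin D_{i-1}$ and hence $\lev_{\D}(f) \ge i$. Taking the maximum over $f \in \delta(v)$ immediately gives $\lev^*_{\D}(v) \ge \lev^*_{\C}(v)$ for every $v \in V$.

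Now fix $v \in V$ and $e \in \delta(v)$ with $e \in E(\D)$ and $\lev_{\C}(e) = \lev_{\D}(e) =: \ell$. Since $e \in E(\D)$, the value $\ell = \lev_{\D}(e)$ is either $\lev^*_{\D}(v)$ or $\lev^*_{\D}(v) - 1$, so $\lev^*_{\D}(v) \in \{\ell, \ell+1\}$. Combining this with $\ell = \lev_{\C}(e) \le \lev^*_{\C}(v) \le \lev^*_{\D}(v)$ pins $\lev^*_{\C}(v)$ to $\{\ell, \ell+1\}$ as well. I would then case-split on which of the two clauses in the definition of $E(\D)$ is witnessed by $e$, and within each, on the value of $\lev^*_{\C}(v)$. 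In each case, for any $f \in \delta(v)$ with the relevant $\lev_{\C}$-value, the monotonicity statement $\lev_{\D}(f) \ge \lev_{\C}(f)$ restricts $\lev_{\D}(f)$ to at most the two values $\lev^*_{\D}(v) - 1$ and $\lev^*_{\D}(v)$. Then the membership $e \in E(\D)$ supplies either $f \not\succ_v e$ (via the maximality condition at the appropriate $\D$-level) or $e \succ_v f$ (via the strict-preference clause), which are precisely the conditions needed to place $e$ in $E(\C)$ via the matching clause.

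The one slightly delicate subcase is when $e$ witnesses $E(\D)$ at level $\lev^*_{\D}(v) - 1$ while $\lev^*_{\C}(v) = \ell + 1$: here I must additionally verify the strict preference $e \succ_v f$ for every $f \in \delta(v)$ with $\lev_{\C}(f) = \ell + 1$. But any such $f$ satisfies $\lev_{\D}(f) \ge \ell + 1 = \lev^*_{\D}(v)$, forcing $\lev_{\D}(f) = \lev^*_{\D}(v)$, at which point $e \succ_v f$ follows from the second clause of $e \in E(\D)$. No substantive obstacle is expected; the only pitfall is keeping track of which level of $\C$ is being compared with which level of $\D$, and the monotonicity statement trivializes this.
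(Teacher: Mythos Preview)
Your proof is correct and rests on the same key observation as the paper's, namely the monotonicity $\lev_{\D}(f) \ge \lev_{\C}(f)$ for all $f \in E$ that follows from $D_i \subseteq C_i$. The organization differs: you argue directly, case-splitting on which clause of $E(\D)$ the element~$e$ satisfies and on the value of~$\lev^*_{\C}(v)$, and then verify the positive conditions for $e \in E(\C)$ by constraining $\lev_{\D}(f)$ for each relevant~$f$. The paper instead argues contrapositively: assuming $e \notin E(\C)$, it extracts a single witness $e' \in \delta(v)$ satisfying one of three ``blocking'' conditions (either $\lev_{\C}(e') \ge \lev_{\C}(e)+2$, or $\lev_{\C}(e') = \lev_{\C}(e)+1$ with $e \not\succ_v e'$, or $\lev_{\C}(e') = \lev_{\C}(e)$ with $e' \succ_v e$), and then observes that monotonicity together with $\lev_{\C}(e)=\lev_{\D}(e)$ makes the same~$e'$ block $e$ from~$E(\D)$. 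The paper's route is shorter because it handles a single element~$e'$ rather than quantifying over all~$f$ at each level; your route has the advantage of being fully constructive and making explicit which clause of $E(\C)$ ends up holding.
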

\begin{proof} 
Suppose for the sake of contradiction that~$e$ fulfills the conditions of the claim, but $e\not\in E(\C)$. 
Let $e \in \delta(v)$. It follows from the definition of $E(\C)$ that there exists an element
$e'\in \delta(v)$ such that one of the following three conditions holds:
(a) $\lev_{\C}(e')\geq \lev_{\C}(e)+2$,
(b) $\lev_{\C}(e')=\lev_{\C}(e)+1$ and $e\not\succ_v e'$, or %
(c) $\lev_{\C}(e')=\lev_{\C}(e)$ and~$e' \succ_ve$.

Because $D_i\subseteq C_i$ for each $i \in \{1,\dots,p\}$, we have $\lev_{\D}(e')\ge \lev_{\C}(e')$. Since $\lev_{\D}(e)=\lev_{\C}(e)$, the
existence of such an $e' \in \delta(v)$ implies $e\not\in E(\D)$, a contradiction. Thus we have $e\in E(\C)$.
\end{proof}

The proof of Lemma~\ref{prop:correctness} will use the following fact, known as the strong exchange property, that is satisfied by any matroid.\footnote{The original statement in \cite{brualdi1969comments} claims this property only for pairs of bases (maximal independent sets), but it is equivalent to Fact~\ref{fact:exchange}. Indeed, if we consider the $\rank(E)$-truncation of the direct sum of $(E, \I)$ and a free matroid whose rank is $\rank(E)$, then 
the axiom in \cite{brualdi1969comments} applied to this new matroid implies Fact \ref{fact:exchange} for $(E, \I)$.}

\begin{fact}[Brualdi \cite{brualdi1969comments}]\label{fact:exchange}
    For any $X, Y\in \I$ and $e\in X\setminus Y$, if $Y+e\not\in \I$,  then there exists 
    an element 
    $f\in Y\setminus X$ such that 
    $X-e+f$ and $Y+e-f$ are in $\I$.
\end{fact}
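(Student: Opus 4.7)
The plan is to reduce Fact~\ref{fact:exchange} to the classical \emph{symmetric basis exchange} theorem, which is the form in which Brualdi~\cite{brualdi1969comments} originally stated the property: for any two bases $B_1, B_2$ of a matroid and any $e \in B_1 \setminus B_2$, there exists $f \in B_2 \setminus B_1$ such that both $B_1 - e + f$ and $B_2 - f + e$ are bases. Taking this as a black box (as signposted by the footnote), the remaining task is to construct an auxiliary matroid in which $X$ and $Y$ become bases while the dependence of $Y + e$ is preserved.

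To this end, I would set $n := \rank(E)$ and fix a set $E_0$ of $n$ fresh elements disjoint from $E$, equipped with the free matroid structure. Forming the direct sum of $(E, \I)$ with the free matroid on $E_0$ and truncating at rank $n$ yields a matroid $M'$ whose independent sets are exactly the sets $J \cup T$ with $J \in \I$, $T \subseteq E_0$, and $|J| + |T| \le n$; the bases of $M'$ are such unions of size exactly $n$. Choosing arbitrary $T_X \subseteq E_0$ of size $n - |X|$ and $T_Y \subseteq E_0$ of size $n - |Y|$, the sets $B_X := X \cup T_X$ and $B_Y := Y \cup T_Y$ are bases of $M'$, and $e \in B_X \setminus B_Y$ because $e \in E$ while $E_0 \cap E = \emptyset$.

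Next I would apply the symmetric basis exchange theorem inside $M'$ to the triple $(B_X, B_Y, e)$, obtaining some $f \in B_Y \setminus B_X$ with both $B_X - e + f$ and $B_Y - f + e$ bases of $M'$. The crux of the argument is a case distinction on the location of $f$. If $f \in E_0$, then $B_Y - f + e = (Y + e) \cup (T_Y - f)$ would be independent in $M'$, which in turn forces $Y + e \in \I$, contradicting the hypothesis $Y + e \notin \I$. Hence $f \in E$, which combined with $f \in B_Y \setminus B_X$ gives $f \in Y \setminus X$. Independence of $B_X - e + f$ and $B_Y - f + e$ in $M'$ restricts to $X - e + f \in \I$ and $Y + e - f \in \I$ in the original matroid $(E, \I)$, which is exactly the desired conclusion of Fact~\ref{fact:exchange}.

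The main delicate point is the choice of the truncation rank: truncating at exactly $\rank(E)$ (rather than at a larger value such as $2 \rank(E)$) is essential to make the case analysis go through. At higher truncation ranks, $(Y + e) \cup T_Y$ could be independent in $M'$ by completing $Y + e$ with further elements of $E_0$, and the contradiction needed to exclude $f \in E_0$ would collapse. With the truncation capped at $\rank(E)$, the dependence of $Y + e$ in $M$ transfers cleanly to $M'$, and everything else is a one-line application of Brualdi's theorem.
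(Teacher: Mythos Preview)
Your reduction is exactly the one sketched in the paper's footnote: form the $\rank(E)$-truncation of the direct sum of $(E,\I)$ with a free matroid of rank~$\rank(E)$, extend $X$ and $Y$ to bases, and apply Brualdi's symmetric basis exchange; the case $f\in E_0$ is ruled out by the direct-sum structure forcing $Y+e\in\I$. One small quibble: your final paragraph overstates the delicacy of the truncation rank---the contradiction excluding $f\in E_0$ relies only on the direct-sum structure (independence of $J\cup T$ forces $J\in\I$), not on the truncation value, so truncating at $\rank(E)$ is convenient rather than essential---but this does not affect the validity of your argument.
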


Now we provide the proof of Lemma~\ref{prop:correctness}.
As mentioned above, this completes the proof of the correctness of our algorithm, and 
hence we can conclude Theorem~\ref{thm:pop-arb}.
Furthermore, we can conclude Theorem~\ref{thm:pop-largest} since Algorithm~\ref{alg:pop-arb} and its correctness proof hold in the generality of a common base in the intersection of the partition matroid  on the set $E = \bigcupdot_{v\in V} \delta(v)$ with any matroid $M = (E, {\cal I})$  of rank~$|V|$.

\begin{myproof}{Proof of Lemma~\ref{prop:correctness}}
Algorithm~\ref{alg:pop-arb} starts with $\C = \{E\}$. Then the conditions in Lemma~\ref{prop:correctness} %
hold at the beginning. We show by induction that they are preserved through the algorithm.

It is easy to see that the condition $p\leq q$ is preserved. 
Indeed, whenever Algorithm~\ref{alg:pop-arb} is going to increase $p$ (in line~8), it is the case that $p+1\leq q$ because $D_p \subseteq C_p \subsetneq E = D_q$ by the induction hypothesis.
Thus $p \le q$ is maintained in the algorithm. 

We now show that $D_i\subseteq C_i~(i=1,\dots, p)$ is maintained. Note that $\C$ is updated in lines 7 or 8. The update in line 8 (adding $C_p=E$) clearly preserves the condition. We complete the proof by showing that the update in line 7 also preserves the condition, i.e., we show the following statement.
\begin{itemize}
\item Let $\C=\{C_1,\dots, C_p\}$ be a multichain with $C_p=E$ 
such that $p\leq q$ and $D_i\subseteq C_i$ for $i=1,\dots,p$. Suppose the following two conditions hold.
\begin{enumerate}
\item $I$ is a lex-maximal common independent set subject to $I\subseteq E(\C)$.
\item $\spn(I\cap C_i)=C_i$ for $i=1,\dots, k-1$, and $\spn(I\cap C_k)\subsetneq C_k$.
\end{enumerate}
Then $D_k\subseteq \spn(I\cap C_k)$.
\end{itemize}
To show this statement, assume for contradiction that $D_k\not\subseteq \spn(I\cap C_k)$. 

We will first show the existence of distinct elements $e_1$ and $f_1$ such that $e_1, f_1 \in \delta(v_1)$ for some $v_1 \in V$
and~$f_1 \in A \setminus I$ while $e_1 \in I \setminus A$.
Then we will use the pair $e_1,f_1$ to show the existence of another pair $e_2, f_2$ such that
$e_2,f_2 \in \delta(v_2)$ where $f_2 \ne f_1$ and $f_2 \in A\setminus I$ while $e_2 \in I \setminus A$. 
In this manner, for any $t \in \Zp$ we will be able to show {\em distinct} elements $f_1,f_2,\ldots,f_t$ that belong to $A$. 
However $A$ has only $n$ elements, a contradiction. Then we can conclude that our assumption $D_k\not\subseteq \spn(I\cap C_k)$ is wrong. 
The following is our starting claim.

\begin{claim}
    \label{claim-e1-f1}
    There exists $v_1 \in V$ such that there are $e_1, f_1 \in \delta(v_1)$ satisfying the following properties:
    \begin{enumerate}
         \item $f_1 \in A  \setminus I $, 
         %$\lev_{\C}(f_1)=\lev_{\D}(f_1)$, 
         \ $I_1:=(I\cap C_k)+f_1\in\I$,\  $I_1\subseteq E(\C)$, and $\lev_{\C}(f_1)=k$,
        \item $e_1 \in I_1 \setminus A$ and $\lev_{\C}(e_1)=\lev_{\D}(e_1)\leq k$.
       \end{enumerate}
\end{claim}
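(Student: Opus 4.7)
The plan is to construct $f_1$, $v_1$, and $e_1$ explicitly, and then verify the two items of the claim in turn. Since $\D$ is a dual certificate for $A$, we have $\spn(A\cap D_k)=D_k$, so the hypothesis $D_k\not\subseteq \spn(I\cap C_k)$ yields some $f_1\in (A\cap D_k)\setminus\spn(I\cap C_k)$; by the definition of span, $I_1:=(I\cap C_k)+f_1\in\I$, and because $f_1\in C_k\setminus I$ we obtain $f_1\in A\setminus I$. To determine the levels of $f_1$ I would invoke the invariants $\spn(I\cap C_{k-1})=C_{k-1}$ and $D_{k-1}\subseteq C_{k-1}\subseteq \spn(I\cap C_k)$, which forbid $f_1\in C_{k-1}\cup D_{k-1}$; combined with $f_1\in D_k\subseteq C_k$ this forces $\lev_{\C}(f_1)=\lev_{\D}(f_1)=k$. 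Applying Claim~\ref{claim:1} to $f_1\in A\subseteq E(\D)$ then gives $f_1\in E(\C)$, hence $I_1\subseteq E(\C)$, which concludes item~1.

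For item~2, let $v_1$ be the head of $f_1$. First, I would use the lex-maximality of $I$ to argue that $e_1:=I(v_1)$ exists and lies in $C_k$: otherwise, setting $I^{\dagger}:=I+f_1$ if $I(v_1)$ does not exist and $I^{\dagger}:=I-I(v_1)+f_1$ if $\lev_{\C}(I(v_1))>k$ produces a branching contained in $E(\C)$ that matches $I$ on $C_1,\dots,C_{k-1}$. If $I^{\dagger}\in \I$ then it already beats $I$ at coordinate $k$; otherwise the unique circuit $K\subseteq I^{\dagger}$ through $f_1$ cannot lie entirely in $C_k$ (since $I_1\in\I$), and deleting any $e\in K\setminus\{f_1\}$ with $e\notin C_k$ restores independence in $\I$ while preserving the improvement at coordinate $k$ and keeping earlier coordinates intact. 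Either way, this contradicts lex-maximality. Once $e_1\in C_k$ is secured, the fact $e_1\ne f_1=A(v_1)$ immediately yields $e_1\in I_1\setminus A$, while $e_1\in C_k$ gives $\lev_{\C}(e_1)\le k$.

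The remaining and most delicate step is to prove $\lev_{\D}(e_1)=\lev_{\C}(e_1)$. My plan here is a short case analysis based on the inequalities $k\le \lev^*_{\C}(v_1)\le \lev^*_{\D}(v_1)\le k+1$: the first follows from $f_1\in\delta(v_1)$, the middle one from the invariant $D_i\subseteq C_i$ (applied pointwise), and the last one from the defining clauses of $E(\C)\ni e_1$ and $E(\D)\ni f_1$ together with $\lev_{\C}(f_1)=\lev_{\D}(f_1)=k$. This leaves at most three configurations for the pair $(\lev^*_{\C}(v_1),\lev^*_{\D}(v_1))$, each with at most two subcases for the value of $\lev_{\C}(e_1)\in\{\lev^*_{\C}(v_1)-1,\lev^*_{\C}(v_1)\}$. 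In every resulting subcase, the clause of $E(\C)$ witnessing $e_1$ forces an explicit preference comparison favoring $e_1$ (either $e_1\succ_{v_1}f_1$, or $e_1$ being non-worse than $f_1$ under $\succ_{v_1}$), while the hypothetical strict inequality $\lev_{\D}(e_1)>\lev_{\C}(e_1)$ combined with the clause of $E(\D)$ witnessing $f_1$ forces the reverse comparison, producing a contradiction. I expect this case analysis to be the main obstacle to writing the proof cleanly; no new ideas are required beyond the definitions, but the bookkeeping needs care. With this contradiction in place, $\lev_{\D}(e_1)=\lev_{\C}(e_1)\le k$, completing item~2 and hence the claim.
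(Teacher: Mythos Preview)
Your proof is correct, and the overall strategy—construct $f_1$, locate $e_1$ via lex-maximality, then pin down $\lev_{\D}(e_1)$ by playing the membership of $e_1\in E(\C)$ against that of $f_1\in E(\D)$—is the same as the paper's. The execution differs in two places, and in both the paper is more direct.

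First, for locating $e_1$: you build $I^{\dagger}$ from the full branching~$I$ and then, if $I^{\dagger}\notin\I$, repair it by deleting a circuit element outside $C_k$. This works, but it is unnecessary. The paper simply observes that $I_1=(I\cap C_k)+f_1$ itself is already lex-better than $I$: it agrees with $I$ on $C_1,\dots,C_{k-1}$ (because $f_1\notin C_{k-1}$) and strictly beats it on $C_k$. Since $I_1\in\I$ and $I_1\subseteq E(\C)$, the only thing $I_1$ can violate is the partition constraint, which immediately produces $e_1\in (I\cap C_k)\cap\delta(v_1)$. No circuit argument is needed.

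Second, for the equality $\lev_{\D}(e_1)=\lev_{\C}(e_1)$: you split into cases on the pair $(\lev^*_{\C}(v_1),\lev^*_{\D}(v_1))$ and then on $\lev_{\C}(e_1)$, giving five subcases. The paper instead splits on the trichotomy $f_1\succ_{v_1}e_1$, $e_1\succ_{v_1}f_1$, $e_1\sim_{v_1}f_1$; in each case a single chain of inequalities of the form $\lev_{\D}(f_1)\geq\cdots\geq\lev_{\C}(f_1)$ collapses (using $\lev_{\D}(f_1)=\lev_{\C}(f_1)=k$) to give the result. The two organizations are equivalent in content; the paper's is shorter because it avoids enumerating the values of $\lev^*$ explicitly, and it has the side benefit that the same three-case argument is reused verbatim for the later elements $e_j,f_j$ in Claims~\ref{claim-e2-f2} and~\ref{claim-next}.
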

\begin{proof}
Since $\D$ is a dual certificate of $A$, we have $\spn(A\cap D_k) =D_k$.  
So $D_k\not\subseteq \spn(I\cap C_k)$ implies that $\spn(A\cap D_k)\not\subseteq \spn(I\cap C_k)$.
Hence $A\cap D_k\not\subseteq \spn(I\cap C_k)$. %
So there exists $f_1$ such that $f_1\in A\cap D_k$ and  $f_1\not\in \spn(I\cap C_k)$.

Since $D_k\subseteq C_k$, we have $f_1\in D_k\subseteq C_k$. We also have $D_{k-1}\subseteq C_{k-1}=\spn(I\cap C_{k-1})\subseteq \spn(I\cap C_k)\not\ni f_1$. Hence $f_1\in C_k\setminus C_{k-1}$ and $f_1\in D_k\setminus D_{k-1}$, i.e.,
$\lev_{\C}(f_1)=\lev_{\D}(f_1) = k$.

\smallskip

Since $f_1\in A\subseteq E(\D)$ and $\lev_{\C}(f_1)=\lev_{\D}(f_1)$, we have $f_1\in E(\C)$ by Claim~\ref{claim:1}. 
As $I\subseteq E(\C)$, we then have  $I_1:=(I\cap C_k)+f_1\subseteq E(\C)$. 
Also, $I_1 \in \I$ by $f_1\not\in \spn(I\cap C_k)$. Since $\lev_\C(f_1)=k$, the set $I_1= (I\cap C_k)+f_1$ is lexicographically better than $I$.
Then, the lex-maximality of $I$ implies that $I_1$ must violate the partition matroid constraint, 
i.e., there exists $e_1\in I_1$ such that $e_1 \ne f_1$ and $e_1, f_1\in \delta(v_1)$ for some $v_1\in V$.

We have $\lev_{\C}(e_1)\leq k$ as $e_1\in I_1\setminus\{f_1\}=I\cap C_k$.
Since $f_1\in \delta(v_1)\cap A$ and $|\delta(v_1)\cap A|\leq 1$, we have $e_1\not\in A$.
Note that $f_1\in E(\D)$ implies $\lev_{\D}(f_1)\geq \lev_{\D}(e_1)-1$ and $e_1\in E(\C)$ implies $\lev_{\C}(e_1)\geq \lev_{\C}(f_1)-1$. 
Note also that, for any element $e\in E$, we have $\lev_{\D}(e)\geq \lev_{\C}(e)$ because $D_i \subseteq C_i$ for all $i$.
\begin{itemize}
    \item If $f_1\succ_{v_1} e_1$, then $\lev_{\C}(e_1)>\lev_{\C}(f_1)$ by $e_1\in E(\C)$,%
    \footnote{Actually, the case $f_1\succ_{v_1} e_1$ is impossible because $\lev_{\C}(e_1)>\lev_{\C}(f_1)$ contradicts $\lev_{\C}(e_1)\leq k=\lev_{\C}(f_1)$. We write the proof in this form because the proofs of Claims~\ref{claim-e2-f2} and \ref{claim-next} refer to the argument here to apply it to $e_j, f_j$, where $\lev_{\C}(f_j)=k$ is not assumed.}% 
    and hence  $\lev_{\D}(f_1)\geq \lev_{\D}(e_1)-1\geq \lev_{\C}(e_1)-1\geq \lev_{\C}(f_1)$.  As we have $\lev_{\D}(f_1)=\lev_{\C}(f_1)$, all the equalities hold.

    \item If $e_1\succ_{v_1}f_1$, then $\lev_{\D}(f_1)>\lev_{\D}(e_1)$ by $f_1\in E(\D)$, and hence 
$\lev_{\D}(f_1)\geq \lev_{\D}(e_1)+1\geq \lev_{\C}(e_1)+1\geq \lev_{\C}(f_1)$. 
As we have $\lev_{\D}(f_1)=\lev_{\C}(f_1)$, all the equalities hold.
 
    \item  If $f_1 \sim_{v_1} e_1$, then $\lev_{\C}(e_1) \geq \lev_{\C}(f_1)$ by $e_1\in E(\C)$;
    also $\lev_{\D}(f_1) \ge \lev_{\D}(e_1)$ by $f_1 \in E(\D)$. %and the construction of $E(\D)$. 
    Hence, we have $\lev_{\D}(f_1) \geq \lev_{\D}(e_1) \ge \lev_{\C}(e_1)\geq \lev_{\C}(f_1)$. Since $\lev_{\D}(f_1) = \lev_{\C}(f_1)$, all the equalities hold.
\end{itemize}
Thus in all the cases, we have $\lev_{\C}(e_1)=\lev_{\D}(e_1)\leq k$ and $e_1\in I_1\setminus A$.
\end{proof}

Our next claim is the following. Recall that $I_1:=(I\cap C_k)+f_1 \in \I$.

\begin{claim}
 \label{claim-e2-f2}
    There exists $v_2 \in V$ such that there are $e_2, f_2 \in \delta(v_2)$ satisfying the following properties: 
    \begin{enumerate}
        \item $f_2 \in A\setminus I_1$,
        %\ $\lev_{\C}(f_2)=\lev_{\D}(f_2)$, 
        \ $I_{2}:=I_1-e_1+f_2 \in \I$, \ $I_2\subseteq E(\C)$, and $\lev_{\C}(e_1)=\lev_{\C}(f_2)$, 
        \item $e_2 \in I_2 \setminus A$ and $\lev_{\C}(e_2)=\lev_{\D}(e_2)\leq k$.
    \end{enumerate}
\end{claim}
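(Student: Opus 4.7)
My plan is to mirror the structure of the proof of Claim~\ref{claim-e1-f1} but run the argument in reverse: instead of producing $f_1$ from $A$ and then recovering $e_1$ via lex-maximality, I will start from $e_1\in I_1\setminus A$ and produce $f_2\in A$ via the strong exchange property, then recover $e_2$ via lex-maximality of $I$. Since $A$ is a base of $M$ (as $|A|=n=\rank(E)$), we have $A+e_1\notin\I$, so Fact~\ref{fact:exchange} applied with $X=I_1$, $Y=A$, $e=e_1$ yields some $f_2\in A\setminus I_1$ satisfying both $I_2:=I_1-e_1+f_2\in\I$ and $A+e_1-f_2\in\I$.

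The heart of the argument---and the step I expect to be the main obstacle---is pinning down the level of $f_2$ in $\C$. Setting $j:=\lev_{\C}(e_1)=\lev_{\D}(e_1)$, I will show $\lev_{\C}(f_2)=\lev_{\D}(f_2)=j$ via a two-sided squeeze. For the upper bound $\lev_{\D}(f_2)\le j$: since $\spn(A\cap D_j)=D_j$ implies $|A\cap D_j|=\rank(D_j)$, if $f_2\notin D_j$ then $|(A+e_1-f_2)\cap D_j|=\rank(D_j)+1$, contradicting $A+e_1-f_2\in\I$. For the lower bound $\lev_{\C}(f_2)\ge j$: if $j\ge 2$ and $f_2\in C_{j-1}$, then neither $e_1$ nor $f_1$ lies in $C_{j-1}$ (their $\C$-levels are $j$ and $k\ge j$), so $I_2\cap C_{j-1}=(I\cap C_{j-1})+f_2$, whose cardinality $\rank(C_{j-1})+1$ (using the invariant $\spn(I\cap C_{j-1})=C_{j-1}$) contradicts $I_2\in\I$; the case $j=1$ is immediate since $C_0=\emptyset$. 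Combined with the universal bound $\lev_{\C}(\cdot)\le\lev_{\D}(\cdot)$ from $D_i\subseteq C_i$, we obtain $\lev_{\C}(f_2)=\lev_{\D}(f_2)=j$, and Claim~\ref{claim:1} then gives $f_2\in E(\C)$, whence $I_2\subseteq E(\C)$.

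To locate $e_2$, I will count levels: using $\lev_{\C}(e_1)=\lev_{\C}(f_2)=j\le k=\lev_{\C}(f_1)$, a short calculation yields $|I_2\cap C_i|=|I\cap C_i|$ for $i<k$ and $|I_2\cap C_k|=|I\cap C_k|+1$, so $I_2$ is lexicographically strictly larger than $I$. Since $I$ is the lex-maximal branching in $E(\C)$ and $I_2\in\I$ lies in $E(\C)$, the set $I_2$ must violate the partition constraint; as $(I\cap C_k)+f_1-e_1$ already satisfies it, the violation occurs at the vertex $v_2$ with $f_2\in\delta(v_2)$, producing $e_2\in\delta(v_2)\cap((I\cap C_k)+f_1-e_1)$ with $e_2\ne f_2$. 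The case $v_2=v_1$ (which would force $e_2=f_1$) is excluded because $f_2\in A\cap\delta(v_1)=\{f_1\}$ would then contradict $f_2\ne f_1$; hence $e_2\in (I\cap C_k)-e_1$, giving $\lev_{\C}(e_2)\le k$ and $e_2\notin A$ (since $A\cap\delta(v_2)=\{f_2\}$). The identity $\lev_{\C}(e_2)=\lev_{\D}(e_2)$ then follows from the very same three-case preference analysis as in Claim~\ref{claim-e1-f1}, applied to $e_2\in E(\C)$ and $f_2\in E(\D)$ with $\lev_{\C}(f_2)=\lev_{\D}(f_2)$ already in hand.
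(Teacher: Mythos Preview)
Your proposal is correct and follows essentially the same approach as the paper: apply the strong exchange property to $I_1$, $A$, and $e_1$ to produce $f_2$, squeeze the level of $f_2$ between $\lev_{\C}(e_1)$ and $\lev_{\D}(e_1)$ (the paper phrases this via spans rather than cardinality counts, but the content is identical), invoke Claim~\ref{claim:1} to place $f_2$ in $E(\C)$, and then use lex-maximality of $I$ to extract $e_2$ and rerun the three-case preference analysis from Claim~\ref{claim-e1-f1}. Your explicit exclusion of $v_2=v_1$ is a harmless extra step; the paper simply notes $e_2\in I_2\setminus A$ and $f_1\in A$ to get $e_2\ne f_1$ directly.
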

\begin{proof}
We know from Claim~\ref{claim-e1-f1} that %$f_1 \notin \spn(I \cap C_k)$.
$I_1=(I\cap C_k)+f_1\in \I$.
The set $I_1$ satisfies 
$\spn(I_1\cap C_i)=\spn(I\cap C_i)=C_i$ for each $1\leq i\leq k-1$; this is because $I_1\cap C_i=I\cap C_i$ for each $i\leq k-1$.
Let us apply the exchange axiom in Fact~\ref{fact:exchange} to $I_{1}, A \in \I$ and $e_1\in I_1\setminus A$. 
Since $A$ is maximal in $\I$, we have $A+e_1\not\in \I$, and hence there exists $f_2\in A\setminus I_1$ such that $I_1-e_1+f_2$ and $A+e_1-f_2$ are in $\I$. 

 Using that $\spn(A\cap D_i)=D_i$ for $1\leq i\leq q$, from $e_1 \notin \spn(A-f_2)$ we obtain $\lev_{\D}(f_2)\leq \lev_{\D}(e_1)$: indeed, 
assuming $\lev_{\D}(f_2)=\ell \geq 2$ we get $D_{\ell-1}=\spn(A \cap D_{\ell-1}) \subseteq \spn(A-f_2)$, which implies $e_1 \notin D_{\ell-1}$ and hence also $\lev_{\D}(e_1) \geq \ell = \lev_{\D}(f_2)$.
 Similarly,  from $f_2 \notin \spn(I_1 - e_1)$, $\lev_{\C}(e_1)\leq k$, and  $\spn(I_1\cap C_i)=C_i$ for $1\leq i\leq k-1$, 
we obtain $\lev_{\C}(e_1)\leq \lev_{\C}(f_2)$.
Thus we have
$\lev_{\C}(e_1)\leq \lev_{\C}(f_2)\leq \lev_{\D}(f_2)\leq \lev_{\D}(e_1)=\lev_{\C}(e_1)$,
implying  all the equalities. Hence we have
\[f_2\in A\setminus I_1,\quad \lev_{\C}(f_2)=\lev_{\D}(f_2),\quad \lev_{\C}(e_1)=\lev_{\C}(f_2).\]
As $f_2\in A\subseteq E(\D)$, Claim~\ref{claim:1} implies $f_2\in E(\C)$.

Observe that $I_{2}:=I_1-e_1+f_2=(I\cap C_k)+f_1-e_1+f_2\subseteq E(\C)$, and recall $I_2 \in \I$. Since $\lev_{\C}(e_1)=\lev_{\C}(f_2)$ and $\lev_{\C}(f_1)=k$, $I_2$ is lexicographically better than $I$. This implies that $I_2$ must violate the partition matroid constraint.
By the same argument as used in Claim~\ref{claim-e1-f1} to show $\lev_{\C}(e_1)=\lev_{\D}(e_1)$,  
we see that 
there exists $e_2$ such that $e_2, f_2\in \delta(v_2)$ for some $v_2\in V$, satisfying  
\[e_2\in I_2\setminus A, \quad \lev_{\C}(e_2)=\lev_{\D}(e_2)\leq k.\]
This completes the proof of this claim. 
\end{proof}

Note that $f_2 \ne f_1$ since $f_1 \in I_1$ and $f_2\in A\setminus I_1$.
Let $t \in \Zp$. As shown in Claim~\ref{claim-e2-f2} for $t = 3$, suppose we have constructed for $2 \le j \le t-1$:
\begin{enumerate}
    \item $f_j\in A\setminus I_{j-1}$,
    %\ $\lev_{\C}(f_j)=\lev_{\D}(f_j)$, and 
    \ $I_j := I_{j-1} -e_{j-1}+f_j \in \I$, \ $I_j\subseteq E(\C)$, and
    $\lev_{\C}(e_{j-1})=\lev_{\C}(f_j)$,
    \item $e_j\in I_j\setminus A$ and $\lev_{\C}(e_j)=\lev_{\D}(e_j)\leq k$.
\end{enumerate}

For each $j$ with $2 \leq j \leq t-1$, note that $I_j$ satisfies $\spn(I_j\cap C_i)=\spn(I\cap C_i)=C_i$ 
for each $1\leq i\leq k-1$. Indeed, since $\lev_{\C}(e_{j-1})=\lev_{\C}(f_j)$, we have $|I_j\cap C_i|=|I\cap C_i|=\rank(C_i)$ for each $i\leq k-1$. This implies $\spn(I_j\cap C_i)=C_i$. Claim~\ref{claim-next} generalizes Claim~\ref{claim-e2-f2} for any $t \ge 3$.

\begin{claim}
 \label{claim-next}
    There exists $v_t \in V$ such that there are $e_t, f_t \in \delta(v_t)$ satisfying the following properties: 
    \begin{enumerate}
            \item $f_t \in A\setminus I_{t-1}$, 
            %$\lev_{\C}(f_t)=\lev_{\D}(f_t)$ and 
            \ $I_t:=I_{t-1}-e_{t-1}+f_t \in \I$,\  $I_t\subseteq E(\C)$, and 
            $\lev_{\C}(e_{t-1})=\lev_{\C}(f_t)$,
            \item $e_t \in I_t \setminus A$ and $\lev_{\C}(e_t)=\lev_{\D}(e_t)\leq k$.
    \end{enumerate}
\end{claim}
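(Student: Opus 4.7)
The plan is to mirror the proof of Claim~\ref{claim-e2-f2} almost verbatim, letting $e_{t-1}, I_{t-1}$ play the role of $e_1, I_1$. First I would apply the strong exchange property (Fact~\ref{fact:exchange}) to $I_{t-1}, A \in \I$ and $e_{t-1} \in I_{t-1} \setminus A$. Since $A$ is a common base and hence maximal in $\I$, we have $A + e_{t-1} \notin \I$, so there exists $f_t \in A \setminus I_{t-1}$ with both $I_{t-1} - e_{t-1} + f_t \in \I$ and $A + e_{t-1} - f_t \in \I$.

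Next I would derive the level equalities, which is the real content of the step. The preamble to the claim already records that $\spn(I_{t-1} \cap C_i) = C_i$ for $i \le k-1$; combined with $\lev_\C(e_{t-1}) \le k$ and $f_t \notin \spn(I_{t-1} - e_{t-1})$ (which follows from $I_{t-1} - e_{t-1} + f_t \in \I$), this yields $\lev_\C(e_{t-1}) \le \lev_\C(f_t)$, exactly as in Claim~\ref{claim-e2-f2}. Symmetrically, $e_{t-1} \notin \spn(A - f_t)$ together with $\spn(A \cap D_i) = D_i$ gives $\lev_\D(f_t) \le \lev_\D(e_{t-1})$. Using the induction hypothesis $\lev_\D(e_{t-1}) = \lev_\C(e_{t-1})$ and the general fact that $\lev_\C(x) \le \lev_\D(x)$ (because $D_i \subseteq C_i$), the four inequalities
\[
\lev_\C(e_{t-1}) \le \lev_\C(f_t) \le \lev_\D(f_t) \le \lev_\D(e_{t-1}) = \lev_\C(e_{t-1})
\]
collapse to equalities, giving in particular $\lev_\C(f_t) = \lev_\D(f_t)$ and $\lev_\C(e_{t-1}) = \lev_\C(f_t)$. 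Since $f_t \in A \subseteq E(\D)$, Claim~\ref{claim:1} then yields $f_t \in E(\C)$, so $I_t := I_{t-1} - e_{t-1} + f_t \subseteq E(\C)$.

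For the lex-maximality argument I would observe that the swap $e_{t-1} \to f_t$ preserves the full level-tuple, so by induction the level-tuple of $I_t$ equals that of $I_1 = (I \cap C_k) + f_1$; as $\lev_\C(f_1) = k$, this tuple is lexicographically strictly larger than that of $I$. The lex-maximality of $I$ therefore forces $I_t$ to violate the partition matroid constraint: there exist $e_t \ne f_t$ with $e_t, f_t \in \delta(v_t)$ for some $v_t \in V$. Because $|A \cap \delta(v_t)| \le 1$ and $f_t$ already occupies this slot, $e_t \notin A$. Finally, to obtain $\lev_\C(e_t) = \lev_\D(e_t) \le k$, I would re-run the three-case analysis (according to whether $f_t \succ_{v_t} e_t$, $e_t \succ_{v_t} f_t$, or $f_t \sim_{v_t} e_t$) from the end of Claim~\ref{claim-e1-f1}, applied now at level $\lev_\C(f_t) = \lev_\C(e_{t-1}) \le k$; the same squeeze of inequalities using $e_t \in E(\C)$, $f_t \in E(\D)$, and $\lev_\C(f_t) = \lev_\D(f_t)$ forces $\lev_\C(e_t) = \lev_\D(e_t) \le k$.

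The only real subtlety is bookkeeping: one must verify, simultaneously with the inductive step, that the invariant $\spn(I_j \cap C_i) = C_i$ for $i \le k-1$ propagates from $j = t-1$ to $j = t$. This is where the equality $\lev_\C(e_{t-1}) = \lev_\C(f_t)$ is essential, since it guarantees $|I_t \cap C_i| = |I_{t-1} \cap C_i| = \rank(C_i)$ for each $i \le k-1$, thereby keeping the subsequent induction (and the resulting unbounded sequence of distinct $f_j \in A$, which will give the contradiction with $|A| = n$) intact.
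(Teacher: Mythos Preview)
Your proposal is correct and follows essentially the same route as the paper's own proof: apply the strong exchange property to $I_{t-1}$ and $A$, squeeze the chain $\lev_\C(e_{t-1}) \le \lev_\C(f_t) \le \lev_\D(f_t) \le \lev_\D(e_{t-1}) = \lev_\C(e_{t-1})$ to equalities, invoke Claim~\ref{claim:1} to get $f_t \in E(\C)$, use lex-maximality to force a partition-matroid violation, and rerun the three-case analysis from Claim~\ref{claim-e1-f1}. Your final paragraph on propagating the invariant $\spn(I_j \cap C_i)=C_i$ is exactly the observation the paper records in the preamble to the claim rather than in its proof.
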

\begin{proof}
    Let us apply the exchange axiom in Fact~\ref{fact:exchange} to $I_{t-1}, A \in \I$ and $e_{t-1}\in I_{t-1}\setminus A$. 
Since $A+e_{t-1}\not\in \I$, there exists $f_t\in A\setminus I_{t-1}$ such that $I_{t-1}-e_{t-1}+f_t$ and $A+e_{t-1}-f_t$ 
are in $\I$. 

By the conditions $\spn(A\cap D_i)=D_i$ for $1\leq i\leq q$ we have $\lev_{\D}(f_t)\leq \lev_{\D}(e_{t-1})$,  
and by
$\spn(I_{t-1}\cap C_i)=C_i$ for $1\leq i\leq k-1$ and $\lev_{\C}(e_{t-1})\leq k$ we have $\lev_{\C}(e_{t-1})\leq \lev_{\C}(f_t)$. Then
$\lev_{\C}(e_{t-1})\leq \lev_{\C}(f_t)\leq \lev_{\D}(f_t)\leq \lev_{\D}(e_{t-1})=\lev_{\C}(e_{t-1})$,
and hence all the equalities hold. 

So we have
$f_t\in A\setminus I_{t-1},\lev_{\C}(f_t)=\lev_{\D}(f_t)$, and $\lev_{\C}(e_{t-1})=\lev_{\C}(f_t)$.
As $f_t\in A\subseteq E(\D)$, Claim~\ref{claim:1} implies $f_t\in E(\C)$.

Observe that $I_t:=I_{t-1}-e_{t-1}+f_t=(I\cap C_k)+f_1-e_1+\ldots + f_{t-1}-e_{t-1}+f_t\subseteq E(\C)$, and recall $I_t \in \I$. 
Since $\lev_{\C}(e_{j-1})=\lev_{\C}(f_{j})$ for $2 \leq j \le t$ and $\lev_{\C}(f_1)=k$, the set $I_t$ is lexicographically better than~$I$. This implies 
that $I_t$ must violate the partition matroid constraint.
By the same argument as used in Claim~\ref{claim-e1-f1} to show $\lev_{\C}(e_1)=\lev_{\D}(e_1)$,  
we see that there exists $e_t$ such that $e_t, f_t\in \delta(v_t)$ for some~$v_t$, satisfying also
$e_t\in I_t\setminus A$ and $\lev_{\C}(e_t)=\lev_{\D}(e_t)\leq k$.
This completes the proof of this claim. 
\end{proof}

Observe that $f_t$ is distinct from $f_1,\ldots,f_{t-1}$ since $\{f_1,\dots, f_{t-1}\}\subseteq I_{t-1}$ 
while $f_t \in A \setminus I_{t-1}$. Thus, for each~$t \in \Zp$, we have shown
distinct elements $f_1,\ldots,f_t$ in $A$, contradicting that $|A| \le n$. 
Therefore,
it has to be the case that $D_k\subseteq \spn(I\cap C_k)$.

This completes the proof of Lemma~\ref{prop:correctness}. 
\end{myproof}

We conclude this section with the proof of Lemma~\ref{lem:duality}, which was postponed in Section~\ref{sec:dual}.

%\lempropduality*
\begin{myproof}{Proof of Lemma~\ref{lem:duality}} % moved from Appendix
The optimal value of \ref{LP1} is at least~$0$ since $\wt_A(A) = 0$. Thus if there exists a feasible solution 
$(\vec{y},\vec{\alpha})$ to \ref{LP2} whose objective value is $0$, then $(\vec{y},\vec{\alpha})$ is an optimal solution to \ref{LP2}.
Since the optimal value of \ref{LP2} is 0, $A$ is a popular arborescence in $G$.

If $A$ is a popular arborescence, 
then the optimal value of \ref{LP2} is 0.
We will now show there always exists an optimal solution $(\vec{y},\vec{\alpha})$ to \ref{LP2} that satisfies properties~1-4.

\smallskip

    1. It is a well-known fact on matroid intersection (see \cite[Theorem~41.12]{Schrijver} or \cite[Lecture 12, Claim~2]{Goemans})
that there exists an integral optimal solution to \ref{LP2} such that the support of the dual variables corresponding to the matroid $M$ is a chain. 
Thus property~1 follows.

    \smallskip
    
2. Among all the optimal solutions to \ref{LP2} that satisfy property~1, let $(\vec{y}, \vec{\alpha})$ be the one that
minimizes $\sum_{C\in \C}|\spn(C)\setminus C|$, where $\C$ is the support of $\vec{y}$. We claim that $\spn(A\cap C)=C$ holds for all $C\in \C$. 
Observe that each $C\in \C$ satisfies $y_C>0$, and hence complementary slackness implies that the characteristic vector~$x$ of~$A$ satisfies $\sum_{e\in C} x_e=\rank(C)$, i.e., $|A\cap C|=\rank(C)$. Therefore, to obtain $\spn(A\cap C)=C$ for all $C\in \C$, it suffices to show $\spn(C)=C$ for all $C\in \C$.
Suppose contrary that it does not hold. Then there exists at least one~$C\in \C$ with $\spn(C)\neq C$. Among all such $C$, let $C^*\in \C$ be the maximal one. 

Define $\vec{z}$ as follows: (i)~$z_{\spn(C^*)} = y_{\spn(C^*)}+y_{C^*}$, (ii)~$z_{C^*}=0$, and (iii)~$z_S=y_S$ for all other $S\subseteq E$. 
Then $\C'=(\C\setminus \{C^*\})\cup \{\spn(C^*)\}$ is the support of $\vec{z}$. Note that $\C'$ is again a chain because any $C\in \C$ with $C^*\subsetneq C$ satisfies $\spn(C)=C$ by the choice of $C^*$, hence $\spn(C^*)\subseteq \spn(C)=C$. %
    
Observe that $(\vec{z}, \vec{\alpha})$ is a feasible solution to \ref{LP2}.
Moreover, since $\rank(C^*)=\rank(\spn(C^*))$, it does not change the objective value. Thus 
$(\vec{z}, \vec{\alpha})$ is an optimal solution to \ref{LP2} that satisfies property~1 and $\sum_{C\in \C'}|\spn(C)\setminus C|<\sum_{C\in \C}|\spn(C)\setminus C|$. This contradicts the choice of $(\vec{y},\vec{\alpha})$. 

\medskip

3.
Suppose $(\vec{y}, \vec{\alpha})$ satisfies properties~1--2 but not property~3. If $\emptyset\in \C$, then remove $\emptyset$ from $\C$ and modify $\vec{y}$ by setting $y_{\emptyset}=0$. 
This does not change the objective value and does not violate feasibility constraints.

If $E\not\in \C$, then add $E$ to $\C$ and modify $(\vec{y}, \vec{\alpha})$ by (i)~setting $y_E=1$ and (ii)~decreasing every $\alpha_v$ value by $1$. 
Since $\rank(E)=|V|$, the objective value does not change. Also, all constraints in \ref{LP2} are preserved. 
Hence the new solution satisfies properties~1--3.

\medskip
4. Among all the optimal solutions to \ref{LP2} that satisfy properties~1--3, let $(\vec{y}, \vec{\alpha})$ be the one that
minimizes $\sum_{S\subseteq  E}y_S$ and let $\C$ be the support of $y$. 
Note that $\alpha_v = -\sum_{C\in \C: A(v)\in C}y_C$ holds for any $v\in V$ by complementary slackness (observe that $x_{A(v)}>0$ for $A$'s characteristic vector $x$).

Suppose $y_{C^*}\geq 2$ for some $C^*\in \C$. Define $(\vec{z}, \vec{\beta})$ as follows: $z_{C*}=y_{C^*}-1$ and $z_S=y_S$ for
every other $S\subseteq E$. For any $v \in V$, let $\beta_v = -\sum_{C\in \C: A(v)\in C}z_C$. We will show below that
$(\vec{z}, \vec{\beta})$ is a feasible solution to \ref{LP2}. Let us first see what is the objective value attained by 
$(\vec{z}, \vec{\beta})$. 

\smallskip

This value is $\sum_{C\in \C} \rank(C)\cdot z_C+\sum_{v\in V}\beta_v$. 
When compared to $\sum_{C\in \C} \rank(C)\cdot y_C+\sum_{v\in V}\alpha_v$, the first term has decreased by 
$\rank(C^*)$ and the second term has increased by $|\set{v\in V: A(v)\in C^*}|=|A\cap C^*|\leq \rank(C^*)$. Thus the objective value does not increase.

We will now show that $(\vec{z},\vec{\beta})$ is a feasible solution to \ref{LP2}, that is, $\sum_{C\in \C:e\in C}z_C+\beta_v\geq \wt_A (e)$ for each~$e\in \delta(v)$, $v\in V$. 
Since $(\vec{y}, \vec{\alpha})$ is feasible and the first term $\sum_{C\in \C:e\in C}z_C$  decreases by at most $1$ and the second term $\beta_v=-\sum_{C\in \C:A(v)\in C}z_C$ never decreases, the only case we need to worry about is when the first term decreases and the second term  does not increase. This implies that $e\in C^*$ and $A(v)\not\in C^*$; hence 
$\sum_{C\in \C:e\in C}z_C+\beta_v=\sum_{C\in \C:e\in C}z_C-\sum_{C\in \C:A(v)\in C}z_C\geq z_{C*}\geq 1\geq \wt_A(e)$. Thus 
$(\vec{z},\vec{\beta})$ is a feasible solution to \ref{LP2}; furthermore, it is an optimal solution to \ref{LP2}. Since $\sum_{S\subseteq  E}z_S<\sum_{S\subseteq  E}y_S$, this contradicts the choice of~$(\vec{y}, \vec{\alpha})$.

Thus, we have shown that $(\vec{y}, \vec{\alpha})$ satisfies properties~1--3 and $y_C=1$ for all $C\in \C$.
Since we have $\alpha_v= -\sum_{C\in \C: A(v)\in C}y_C$, it follows that $\alpha_v = -|\set{C\in \C:A(v)\in C}|$ for each $v\in V$.
\end{myproof}

\section{Popular Colorful Forests}
\label{sec:colorful}
This section proves Corollary~\ref{cor:pop-common} and Theorem~\ref{thm:polytope} (in terms of the \myproblem{popular colorful forest} problem). 
Let $H=(U_H, E_H)$ be an undirected graph where $E_H = E_1\cupdot \cdots \cupdot E_n$, i.e., $E_H$ is partitioned into $n$ color classes. Equivalently,
there are $n$ agents $1,\ldots,n$ where agent~$i$ owns the elements in $E_i$. For each $i$, there is a 
%weak ranking 
% wo->po
partial order $\succ_i$ 
over elements in $E_i$. 

Recall that $S \subseteq E_H$ is a {\em colorful forest} if 
(i)~$S$ is a forest in $H$ and (ii)~$|S\cap E_i|\leq 1$ for every $i \in \{1,\ldots,n\}$. 
We refer to Section~\ref{sec:intro} on how every agent compares any pair of colorful forests; for any pair of colorful
forests $F$ and $F'$, let $\phi(F,F')$ be the number of agents that prefer $F$ to $F'$.

\begin{Definition}
  \label{def:color-forest}
   A colorful forest $F$ is popular if $\phi(F,F') \ge \phi(F',F)$ for any colorful forest $F'$.
\end{Definition}

The \myproblem{popular colorful forest} problem is to decide if a given instance $H$ admits a popular colorful forest or not.
We will now show that Algorithm~\ref{alg:pop-arb} solves the popular colorful forest problem.

Observe that a popular colorful forest is a popular common independent set in the intersection of the partition matroid defined by 
$E_H = E_1\cupdot \cdots \cupdot E_n$ and the graphic matroid of $H$. In order to
use the popular common {\em base} algorithm to solve this problem, we will augment the ground set $E_H$.

\medskip

\paragraph{An auxiliary instance $G$.} 
For each $i\in \{1,\ldots,n\}$, add a dummy edge $e_i = (u_i,v_i)$ with endpoints $u_i, v_i$, where $u_i$ and $v_i$ are new vertices 
that we introduce; call the resulting graph $G$. 
The vertex and edge sets of~$G=(U,E)$ are given by $U=U_H \cup \bigcup_{i=1}^n\{u_i,v_i\}$ and  $E = E_H  \cup \bigcup_{i=1}^n\{e_i\}$.
Furthermore, for each $i$, the edge $e_i$ will be the {\em worst} element in $i$'s preference order
$\succ_i$, i.e., every $f\in E_i$ satisfies $f\succ_i e_i$. 

In the setting of general matroids, $n$ dummy elements $e_1,\ldots,e_n$ are being introduced into the ground set $E$ as 
{\em free} elements, i.e., for any~$i$, no set $S \subseteq E$ such that $e_i \notin S$ can span $e_i$. 
The partitions in the constructed matroid are $E_i \cup \{e_i\}$ for all $i \in \{1,\ldots,n\}$.

Observe that there exists a one-to-one correspondence between colorful forests in $H$ and colorful forests of size~$n$ in $G$.
Suppose $F_H$ is a colorful forest in $H$ and let $C \subseteq \{1,\ldots,n\}$ be the set of colors missing in $F_H$, i.e., $F_H \cap E_i = \emptyset$
exactly if $i \in C$. Let $F_G = F_H \cup \bigcup_{i\in C}\{e_i\}$. 
Then
$F_G$ is a colorful forest of size~$n$ in $G$. Conversely, given a colorful forest $F_G$ of
size~$n$ in $G$, we can obtain a colorful forest $F_H$ in $H$ by %
deleting the dummy elements.

\medskip

%\noindent{\bf Colorful forests in $G$.}
\paragraph{Colorful forests in $G$.}
Let $F_H$ and $F'_H$ be colorful forests in $H$ and let $F_G$ and $F'_G$ be the corresponding forests (of size~$n$) in $G$.
Observe that $\phi(F_H, F'_H)=\phi(F_G, F'_G)$. Thus popular colorful forests in $H$ correspond to popular colorful forests 
of size~$n$ in $G$ and vice-versa. We want popular colorful forests of size~$n$ to be popular common {\em bases} in the
intersection of the partition matroid and the graphic matroid of $G$.

Hence we will consider the $n$-truncation of the graphic matroid of $G$, i.e., all sets of size larger than $n$ will be deleted 
from the graphic matroid of $G$. 
The function 
$\rank(\cdot)$ now denotes the rank function of the truncation and we have $\rank(E) = n$. 
Thus solving the popular common base problem in the intersection of the 
partition matroid defined by the color classes on $E$ and the truncated graphic matroid of $G$ solves the popular colorful forest problem in $H$. 
Observe that such a reduction holds for the \myproblem{popular common independent set} problem; hence
Corollary~\ref{cor:pop-common} follows.

\medskip

\paragraph{The popular colorful forest polytope.}
We will henceforth refer to a colorful forest of size~$n$ in the auxiliary instance $G$ as a {\em colorful base} in $G$.
Every popular colorful base $F$ in $G$ has a dual certificate as given in Lemma~\ref{lem:duality}\footnote{In \ref{LP1} and \ref{LP2} defined with respec to~$F$, the set~$\delta(v)$ for $v \in V$ will be replaced by $E_i \cup \{e_i\}$ for $i \in \{1,\ldots,n\}$, and in the definition of~$\wt_F$, the edge $A(v)$ will be replaced by the unique element in $F\cap (E_i \cup \{e_i\})$,  %will be 
denoted by $F(i)$.} and Lemma~\ref{lem:chain}.
We will now show these dual certificates are even more special than what is given in Lemma~\ref{lem:chain}---along with 
%properties~1-4,
the properties described there,
the following property is also satisfied.

\begin{lemma}
\label{lem:level-two}
Let $F$ be a popular colorful base in the auxiliary instance  $G$ and let $\C = \{C_1,\ldots,C_p\}$ be a dual certificate for $F$. 
Then $p \le 2$.
\end{lemma}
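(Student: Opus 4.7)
The plan is to show that from any dual certificate $\C=\{C_1,\ldots,C_p\}$ for $F$, the two-element chain $\C^\star:=\{C_{p-1},E\}$ is itself a dual certificate, which reduces any longer chain to length at most two. The crux is the free-element property of the dummies: each $e_j$ has two private endpoints in $G$, so in the truncated graphic matroid of rank $n$, $e_j\in\spn(S)$ requires $e_j\in S$ (as long as $|S|<n$). Consequently, for $i<p$ we have $e_j\in C_i=\spn(F\cap C_i)$ iff $e_j\in F\cap C_i$, and in particular $e_j\notin F$ forces $\lev_\C(e_j)=p$ and hence $\lev^*_\C(j)=p$.

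First I would pin down the levels of $F$-edges using $F\subseteq E(\C)$. For a dummy $F(j)=e_j$, since $e_j$ is the worst element of its partition, option~(b) in the definition of $E(\C)$ is impossible; option~(a) then gives $\lev_\C(e_j)=\lev^*_\C(j)=:\ell_j$, with $e_j$ the \emph{unique} color-$j$ edge at level $\ell_j$ and every real color-$j$ edge at some level below $\ell_j$. For real $F(j)$, the equality $\lev^*_\C(j)=p$ together with $F(j)\in E(\C)$ yields $\lev_\C(F(j))\in\{p-1,p\}$.

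Together these imply that for every $i\le p-2$, $F\cap C_i$ contains only dummies in $F$, and because dummies span nothing outside themselves, $C_i=F\cap C_i$ is a dummies-only set. A direct consequence is that whenever partition $j$ contains a real color-$j$ edge and $F(j)=e_j$, we must have $\ell_j=p$; otherwise the real color-$j$ edges would have to live inside $C_{\ell_j-1}\subseteq C_{p-2}$, contradicting that $C_{p-2}$ has no real edges. In this case each real color-$j$ edge then sits at level exactly $p-1$.

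Finally I verify that $\C^\star=\{C_{p-1},E\}$ is a dual certificate. The span conditions $\spn(F\cap C_{p-1})=C_{p-1}$ and $\spn(F)=E$ are inherited from $\C$. For $F\subseteq E(\C^\star)$, I split into cases. For real $F(j)$, the old levels $\{p-1,p\}$ become new levels $\{1,2\}$ with $\lev^*_{\C^\star}(j)=2$, and since the sets of color-$j$ edges at old levels $p-1$ and $p$ coincide with those at new levels $1$ and $2$, the maximality and strict-preference conditions in $E(\C)$ transfer verbatim to $E(\C^\star)$. For dummy $F(j)=e_j$ with no real color-$j$ partner, $e_j$ is the only color-$j$ edge and is trivially max. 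For dummy $F(j)=e_j$ with some real partner, $\ell_j=p$ places $e_j$ at $\C^\star$-level $2$ while all real color-$j$ edges lie in $C_{p-1}$, i.e., at $\C^\star$-level $1$; thus $e_j$ is the unique color-$j$ edge at $\lev^*_{\C^\star}(j)=2$, and option~(a) is satisfied. The main obstacle is establishing the dummies-only structure of lower levels, which hinges on a careful interplay of the free-element property, the span condition $\spn(F\cap C_i)=C_i$, and the rigidity imposed by $F\subseteq E(\C)$; once this is in place, the verification for $\C^\star$ is essentially mechanical.
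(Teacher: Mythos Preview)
Your stated plan does not prove the lemma. Exhibiting a dual certificate $\C^\star=\{C_{p-1},E\}$ of length two only shows that \emph{some} certificate of length at most~$2$ exists; the lemma asserts that the \emph{given} certificate~$\C$ has $p\le 2$. These are genuinely different statements, and the paper needs the stronger one (in the proof of Lemma~\ref{lem:face} it is applied to an arbitrary certificate~$\D$ of an arbitrary popular base~$F'$, as well as to the algorithm's certificate for~$F$).

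That said, your intermediate work already contains the correct argument, and it coincides with the paper's. You show that for $i\le p-2$ one has $C_i=\spn(F\cap C_i)=F\cap C_i$ consisting only of dummies $e_j\in F$, and that any such dummy with a nonempty real color class must satisfy $\ell_j=p$. Now simply stop here: if $p\ge 3$, every $e_j\in C_1$ has $\ell_j=1<p$, hence $E_j=\emptyset$. Under the natural assumption that every color class~$E_j$ is nonempty (which may be assumed without loss of generality, since empty color classes do not affect popularity), this forces $F\cap C_1=\emptyset$ and thus $C_1=\spn(\emptyset)=\emptyset$, contradicting $C_1\ne\emptyset$. The detour through~$\C^\star$ is unnecessary and, as a stand-alone conclusion, insufficient for the lemma as stated.
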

\begin{proof}
Suppose not, i.e., $p \ge 3$. From the definition of a dual certificate $\C$, we have $\emptyset\subsetneq C_1\subsetneq C_2\subsetneq \cdots \subsetneq C_p = E$ 
(see Lemma~\ref{lem:chain}).
We will now show that $F \cap C_1 = \emptyset$. Since $\spn(F \cap C_1) = C_1$, this means $C_1 = \emptyset$;
however this contradicts $C_1 \ne \emptyset$. This will give us the desired contradiction, 
%hence it has to be the case that 
proving $p \le 2$.

In order to show that $F \cap C_1 = \emptyset$, it suffices to prove that for each $i\in \{1,\ldots,n\}$, 
the unique element in~$F\cap (E_i \cup \{e_i\})$, denoted by  $F(i)$, is not contained in $C_1$. 
\begin{itemize}
\item If $F(i) \neq e_i$, then the dummy edge $e_i$ is not in $F$. Since $e_i$ is not spanned by any set $S\subseteq E$ with $e_i\not\in S$ and $\rank(S)<n$, the condition $\spn(F \cap C_j)=C_j$, yielding also $|F\cap C_j|=\rank(C_j)$, for all $j=1,2,\dots,p$ implies that $e_i\not\in C_j$ for any $j < p$. Hence $\lev_{\C}(e_i)=p$,  
which implies that
every edge in $E(\C)\cap E_i$ has level either $p$ or $p-1$. Because $p \ge 3$, this means 
that no edge of $C_1$ is present in $E(\C)\cap E_i$. 
Thus we have 
$F(i)\not\in C_1$.  
\item If $F(i)=e_i$, then $e_i\in E(\C)$. This implies $\lev_{\C}(e_i) > 1$ because $e_i$ is the worst element in $E_i \cup \{e_i\}$. Hence $F(i)$ is not in $C_1$.
\end{itemize}
In both cases,
$F(i)\not\in C_1$ for any $i \in \{1,\ldots,n\}$. 
Thus
we have $F\cap C_1=\emptyset$, as desired.
 \end{proof}

Lemma~\ref{lem:level-two} shows that any dual certificate $\C$ for a popular colorful base $F$ in $G$ has length at most $2$, i.e., 
$F$ has a dual certificate either of the form $\C = \{E\}$ or of the form $\C = \{C, E\}$. Let $F$ be the popular colorful base computed 
by Algorithm~\ref{alg:pop-arb} in $G$ and let $\C$ be a dual certificate for $F$. The following 
%theorem 
lemma shows that if preferences are weak rankings, then $\C$ is a dual certificate for all popular colorful bases. 
%The proof relies on the assumption that the preferences are weak rankings.
Note that this proof crucially uses the fact that preferences are weak rankings---recall that we use this assumption in Theorem~\ref{thm:polytope} as well. 
Indeed, assuming weak rankings is indispensable there, since the \myproblem{min-cost popular colorful forest} problem for partial order preferences is $\mathsf{NP}$-hard, due to the $\mathsf{NP}$-hardness of its special case, the  \myproblem{min-cost popular branching} problem with partial order preferences~\cite{KKMSS20}.

\begin{lemma}
    \label{lem:face}
    Assume that preferences are weak rankings and suppose that $F$ is the popular colorful base computed 
by Algorithm~\ref{alg:pop-arb} in the auxiliary instance~$G$, and $\C$ is a dual certificate for $F$.
    Then for any arbitrary popular colorful base $F'$ in $G$, we have (i)~$F'\subseteq E(\C)$ and (ii)~if 
    $\C = \{C,E\}$, then $|F'\cap C|=\rank(C)$. 
\end{lemma}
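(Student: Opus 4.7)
The plan is to exploit LP duality and apply complementary slackness in two directions: the dual certificate $\C$ from Algorithm~\ref{alg:pop-arb} is dual optimal for \ref{LP1}/\ref{LP2} with edge weights $\wt_F$, and by Lemma~\ref{lem:chain} any dual certificate $\D$ of the popular base $F'$ is dual optimal for \ref{LP1}/\ref{LP2} with weights $\wt_{F'}$. Since both $F$ and $F'$ are popular and $\wt_F(F') = -\wt_{F'}(F)$, we have $\wt_F(F') = \wt_{F'}(F) = 0$, so $F'$ is primal optimal for the LP with weights $\wt_F$ and $F$ is primal optimal for the LP with weights $\wt_{F'}$. Lemma~\ref{lem:level-two} and Lemma~\ref{prop:correctness} together give $|\C|, |\D| \le 2$; moreover, if $\C = \{C_1, E\}$ then $\D = \{D_1, E\}$ with $D_1 \subseteq C_1$.

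Complementary slackness between the primal optimum $F'$ and the dual optimum $\C$ yields $|F' \cap C| = \rank(C)$ for every $C \in \C$, which gives (ii) directly. It also yields the level identity
\[
\lev_{\C}(F(v)) - \lev_{\C}(F'(v)) = \wt_F(F'(v)) \qquad \text{for every } v,
\]
and symmetrically, complementary slackness between $F$ and $\D$ gives
\[
\lev_{\D}(F'(v)) - \lev_{\D}(F(v)) = \wt_{F'}(F(v)).
\]
I will use both to establish (i) via a case analysis on the pair $(\lev_{\C}(F(v)), \lev_{\C}(F'(v)))$. When $\C = \{E\}$, both levels equal $1$, forcing $F'(v) \sim_v F(v)$, so the top-tier status of $F(v)$ in $\delta(v)$ transfers to $F'(v)$ under weak rankings. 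When $\C = \{C_1, E\}$, the first identity restricts the pair to $(1,1), (1,2), (2,1), (2,2)$.

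In the ``indifference'' cases $(1,1)$ and $(2,2)$ we again have $F'(v) \sim_v F(v)$, and since $F(v) \in E(\C)$ is top-tier at its level, weak rankings immediately imply the same for $F'(v)$. In case $(1,2)$ (so $F(v) \succ_v F'(v)$), the second identity together with $D_1 \subseteq C_1$ forces $F(v) \in D_1$ and $F'(v) \in \delta(v) \setminus D_1$; since $F'(v) \in E(\D)$ is then top-tier in $\delta(v) \setminus D_1$ and $\delta(v) \setminus C_1 \subseteq \delta(v) \setminus D_1$, top-tier status transfers to the smaller set. The main obstacle is the remaining case $(2,1)$, where $F'(v) \succ_v F(v)$ and I must show that $F'(v)$ is top-tier in $C_1 \cap \delta(v)$. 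The $\D$-identity together with $F(v) \notin C_1 \supseteq D_1$ forces $\lev_{\D}(F'(v)) = 1$, so $F'(v) \in D_1$; the membership $F'(v) \in E(\D)$ via the level-$1$ clause of the definition of $E(\D)$ then says that $F'(v)$ is top-tier in $D_1 \cap \delta(v)$ and strictly preferred to every element of $\delta(v) \setminus D_1$. Any hypothetical $g \in C_1 \cap \delta(v)$ with $g \succ_v F'(v)$ would either lie in $D_1$ and contradict top-tier status in $D_1 \cap \delta(v)$, or lie in $C_1 \setminus D_1 \subseteq \delta(v) \setminus D_1$ and contradict the strict-preference clause; the strict-preference condition needed for $F'(v) \in E(\C)$ then follows from $F'(v) \succ_v F(v)$ and the top-tier status of $F(v)$ at level~$2$ in $\C$.
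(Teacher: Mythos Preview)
Your argument is correct and rests on the same pillars as the paper's proof: both $F$ and $F'$ are primal optimal for each other's LP (since $\wt_F(F')=-\wt_{F'}(F)$ forces both to vanish), complementary slackness in both directions yields your two level identities, and Lemma~\ref{prop:correctness} supplies $D_1\subseteq C_1$ when $|\C|=2$ (together with Lemma~\ref{lem:level-two} to bound $|\D|$). Part~(ii) is obtained identically via the slackness condition $y_C>0\Rightarrow |F'\cap C|=\rank(C)$.

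Where you diverge is in the case decomposition for part~(i). The paper splits on the position of $F'(i)$ relative to $D$ and $C$: whenever $\lev_\C(F'(i))=\lev_\D(F'(i))$ (i.e., $F'(i)\notin C$ or $F'(i)\in D$) it invokes Claim~\ref{claim:1} directly, and only in the residual case $F'(i)\in C\setminus D$ does it unpack both complementary-slackness identities to force $F(i)\sim_i F'(i)$ and $F(i)\in C$. You instead split on the pair $(\lev_\C(F(v)),\lev_\C(F'(v)))$ and argue each of the four cells explicitly, never invoking Claim~\ref{claim:1}. Your route is a little more elementary in that sense, and makes the role of the two identities uniform across all cases; the paper's is shorter because Claim~\ref{claim:1} absorbs two of your four cells in one line. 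One minor simplification available in your case~$(2,1)$: once you have established $F'(v)\succ_v g$ for all $g\in\delta(v)\setminus D_1\supseteq\delta(v)\setminus C_1$, the strict-preference clause required for $F'(v)\in E(\C)$ follows immediately, without routing through the top-tier status of $F(v)$ at level~$2$.
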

\begin{proof}
%Recall that $F$ is the popular colorful base computed by Algorithm~\ref{alg:pop-arb} in $G$.
Let $(\vec{y}, \vec{\alpha})$ be the dual variables defined from $\C$ as given in Lemma~\ref{lem:duality}. 
That is, $y_{\hat{C}}=1$ for each $\hat{C}\in \C$ and $y_S=0$ for any other $S\subseteq E$, and $\alpha_i=-|\set{\hat{C}\in \C: F(i)\in \hat{C}}|$ for every 
$i\in \{1,\ldots,n\}$.
Note that the length of $\C$ is at most two by Lemma~\ref{lem:level-two}.

Consider \ref{LP1} and \ref{LP2} defined with respect to $F$. 
Since both $F$ and $F'$ are popular, their characteristic vectors are both optimal solutions to \ref{LP1}. 
Since $(\vec{y}, \vec{\alpha})$ is an optimal solution to \ref{LP2}, if $\C = \{C,E\}$ then we have $|F'\cap C|=\rank(C)$ by complementary slackness. Then, what is left is to show $F' \subseteq E(\C)$. We consider the cases where the length of $\C$ is one and two.

\begin{enumerate}
    \item Suppose $\C = \{E\}$. Let $\D$ be a dual certificate of $F'$ as described in Lemma~\ref{lem:chain}. 
Then $F'\subseteq E(\D)$. 
Assume that $\D=\{D, E\}$ (otherwise $\D = \{E\} = \C$). 

Take any $i \in \{1,\ldots,n\}$. We now show $F'(i)\in E(\C)$.
If $F'(i) \in D$ then $\lev_{\D}(F'(i)) = 1 = \lev_{\C}(F'(i))$; %
along with $F'(i)\in E(\D)$, this implies 
$F'(i)\in E(\C)$ by Claim~\ref{claim:1}.
We thus assume that $F'(i) \notin D$. 

Since the characteristic vector $x$ of $F$ and $x'$ of $F'$ are optimal solutions to \ref{LP1} (defined with respect to $F$) and 
$(\vec{y}, \vec{\alpha})$ is an optimal solution to \ref{LP2} (its dual LP), we will use complementary slackness.
Because $x_{F(i)}=1$, we have $\sum_{\hat{C}\in \C:F(i)\in \hat{C}}y_{\hat{C}} + \alpha_i = \wt_F(F(i))\, (=0)$. %
Similarly, because $x'_{F'(i)}=1$, we have $\sum_{\hat{C}\in \C:F'(i)\in \hat{C}}y_{\hat{C}}+\alpha_i = \wt_F(F'(i))$. 
By subtracting the former from the latter, we obtain 
\begin{equation}
\sum_{\hat{C}\in \C: F'(i)\in \hat{C}}y_{\hat{C}} \ \ -\sum_{\hat{C}\in \C: F(i)\in \hat{C}}y_{\hat{C}} \ = \ \wt_F(F'(i)).
\label{eq:CS}
\end{equation}
Since $\C = \{E\}$, the left hand side is $1-1 = 0$. 
By this
$\wt_F(F'(i)) = 0$, which implies $F(i) \sim_i F'(i)$. 
The fact $F(i)\in E(\C)$ implies that $F(i)$ is maximal with respect to $\succ_i$ in $E_i\cup \{e_i\}$. Because $\succ_i$ is a weak ranking, $F(i) \sim_i F'(i)$ means that $F'(i)$ is also maximal, and hence $F'(i) \in E(\C)$ follows.
%Thus we have $F'(i) \in E(\C)$ by the definition of $E(\C)$.

\item Suppose $\C = \{C, E\}$. Let $\D$ be a dual certificate of $F'$. 
Then we have $\D = \{D,E\}$ and $D \subseteq C$ (by Lemma~\ref{prop:correctness}).
Take any $i \in \{1,\ldots,n\}$. We now show $F'(i)\in E(\C)$.
If $F'(i)\not\in C$ (resp., if $F'(i)\in D$), then $F'(i)\not\in D$ (resp., $F'(i)\in C$); hence $\lev_{\C}(F'(i)) = \lev_{\D}(F'(i))$. This fact
along with $F'(i)\in E(\D)$ implies that $F'(i)\in E(\C)$, by Claim~\ref{claim:1}. 
Therefore, let us assume that $F'(i)\in C\setminus D$.

By the same analysis as given in Case~1, Equation~\eqref{eq:CS} holds. Let us also consider \ref{LP1} and \ref{LP2} defined with respect to $F'$ 
(instead of $F$). Let $(\vec{z}, \vec{\beta})$ be the optimal solution of \ref{LP2} corresponding to $\D$. As before, 
the characteristic vectors of $F$ and $F'$ are optimal solutions to \ref{LP1}. 
By the same argument (with $F'$, $F$ and $\D$ taking the places of $F$, $F'$, and $\C$, resp.), we have:
\begin{equation}
\sum_{\hat{D}\in \D:F(i)\in \hat{D}}z_{\hat{D}} \ \ -\sum_{\hat{D}\in \D:F'(i)\in \hat{D}}z_{\hat{D}} \ = \ \wt_{F'}(F(i)).
\label{eq:CS2}
\end{equation}

Since $F'(i)\in C$, the left hand side of \eqref{eq:CS} is $1$ or $0$, and so is $\wt_F(F'(i))$, which implies that we have $F'(i)\succ_i F(i)$ 
or $F'(i) \sim_i F(i)$. Furthermore, since $F'(i)\notin D$, the left hand side of~\eqref{eq:CS2} is $1$ or~$0$, and so is~$\wt_{F'}(F(i))$, 
which implies that 
$F(i)\succ_i F'(i)$ or $F(i)\sim_i F'(i)$.
Therefore we must have $F'(i) \sim_i F(i)$. Hence $F(i)\in C$ follows from \eqref{eq:CS}. 

We have shown that $F'(i)\sim_i F(i)$ and $F(i)\in C$. We also have $F'(i) \in C$. Since $F(i)\in E(\C)$, we see that $F(i)$ is maximal in $C\cap (E_i\cup \{e_i\})$ and dominates all elements in $(E_i\cup \{e_i\})\setminus C$ with respect to $\succ_i$. Since $\succ_i$ is a weak ranking and $F'(i)\sim_i F(i)$, the element $F'(i)\in C$ also satisfies these conditions, and hence $F'(i)\in E(\C)$.
%Thus it follows from the definition of $E(\C)$ that $F'(i)\in E(\C)$.
\end{enumerate}
Thus we have $F'(i) \in E(\C)$ for every $i \in \{1,\ldots,n\}$. Hence $F' \subseteq E(\C)$.
\end{proof}

By Lemma~\ref{lem:face}, any popular colorful base $F'$ in $G$ satisfies $F'\subseteq E(\C)$ and  $|F'\cap C|=\rank(C)$ if $\C = \{C,E\}$. 
Conversely, any popular colorful base $F'$ in $G$ that satisfies these conditions is popular by Lemma~\ref{lem:chain}.
Therefore the set of all popular colorful bases in $G$ can be described as a face of the matroid intersection polytope.
Since a popular colorful forest in the given instance $H$ is obtained by 
deleting the dummy elements from popular colorful bases in $G$,
Theorem~\ref{thm:polytope} follows. 

We also state this result explicitly in Theorem~\ref{thm:colorful-forest} in the setting of popular colorful forests.
Let $\C = \{C,E\}$ be a dual certificate for the popular colorful base $F$ in $G$ computed by Algorithm~\ref{alg:pop-arb}.

\begin{theorem}
\label{thm:colorful-forest}
If preferences are weak rankings, an extension of the popular colorful forest polytope of the given instance $H$ is defined by the constraints $\sum_{e\in C}x_e=\rank(C)$ 
and $x_e=0$ for all $e\in E\setminus E(\C)$ along with all the constraints of \ref{LP1}.
\end{theorem}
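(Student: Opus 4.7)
The plan is to show that the polytope~$P$ defined by the constraints in Theorem~\ref{thm:colorful-forest} equals the convex hull of characteristic vectors of popular colorful bases in the auxiliary instance~$G$; the popular colorful forest polytope of~$H$ is then obtained as the projection of~$P$ onto the coordinates in~$E_H$ (i.e., by deleting the dummy edges $e_1,\dots,e_n$), so that $P$ is an extension of it in the required sense.

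First I would observe that $P$ is a face of the common base polytope of the intersection of the partition matroid on $E=E_H\cupdot\{e_1,\dots,e_n\}$ with the $n$-truncation of the graphic matroid of~$G$. Indeed, the constraints of~\ref{LP1} (with $\delta(v)$ replaced by $E_i\cup\{e_i\}$) describe this common base polytope, and we carve out a face by imposing the equalities $\sum_{e\in C}x_e=\rank(C)$ and $x_e=0$ for $e\in E\setminus E(\C)$, each of which is tight on some common base (namely, on the algorithm's output~$F$, which satisfies $F\subseteq E(\C)$ and $|F\cap C|=\rank(C)$ by construction). By integrality of the matroid intersection polytope, $P$ is the convex hull of its integral points; these are exactly the characteristic vectors of colorful bases $F'$ in $G$ with $F'\subseteq E(\C)$ and $|F'\cap C|=\rank(C)$.

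Next I would argue that these integral points are precisely the popular colorful bases in $G$. The forward containment is immediate from Lemma~\ref{lem:face}: every popular colorful base $F'$ satisfies $F'\subseteq E(\C)$ and, if $\C=\{C,E\}$, also $|F'\cap C|=\rank(C)$ (when $\C=\{E\}$ the equality reads $|F'\cap E|=n=\rank(E)$, which is automatic). For the reverse containment, take any colorful base $F'$ in~$G$ with $F'\subseteq E(\C)$ and $|F'\cap C|=\rank(C)$. To invoke Lemma~\ref{lem:chain} with~$\C$ as dual certificate of~$F'$, I need to verify that $\spn(F'\cap \hat{C})=\hat{C}$ for every $\hat{C}\in\C$. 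For $\hat{C}=E$ this follows from $|F'|=n=\rank(E)$; for $\hat{C}=C$ (when $\C=\{C,E\}$) it follows from $|F'\cap C|=\rank(C)$ together with $\spn(C)=C$, which holds by Observation~\ref{obs1} applied to the multichain returned by Algorithm~\ref{alg:pop-arb}. Hence $F'$ is popular by Lemma~\ref{lem:chain}.

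Finally, combining the two directions yields $P=\mathrm{conv}\{\chi^{F'}:F'\text{ popular colorful base in }G\}$. Projecting $P$ onto the $E_H$-coordinates (eliminating the dummy coordinates indexed by $e_1,\dots,e_n$) gives, by the one-to-one correspondence between colorful forests in~$H$ and colorful bases in~$G$ established in Section~\ref{sec:colorful} (which preserves popularity via $\phi(F_H,F'_H)=\phi(F_G,F'_G)$), exactly the popular colorful forest polytope of~$H$. Hence $P$ is an extension of this polytope, as claimed. The main delicate point I anticipate is the case distinction on whether $\C=\{E\}$ or $\C=\{C,E\}$, and taking care that in either case the dual-certificate conditions of Lemma~\ref{lem:chain} are recovered from the two inequality/equality families defining~$P$ without implicitly using properties peculiar to the algorithm's computed~$F$ rather than to an arbitrary integral point of~$P$.
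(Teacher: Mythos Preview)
Your proposal is correct and follows essentially the same approach as the paper: show that the integral points of the face~$P$ coincide with popular colorful bases in~$G$ by combining Lemma~\ref{lem:face} (forward direction) with Lemma~\ref{lem:chain} (reverse direction), then project out the dummy coordinates. You supply a bit more detail than the paper does---in particular the verification of $\spn(F'\cap\hat C)=\hat C$ needed to invoke Lemma~\ref{lem:chain}, and the explicit handling of the case $\C=\{E\}$---but the argument is the same.
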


\section{Min-Cost Popular Arborescence}
\label{sec:hardness-mincost}
We prove Theorem~\ref{thm:min-cost} in this section.
We present a reduction from the \textsc{Vertex Cover} problem, whose input is an undirected graph~$H$ and an integer~$k$, 
and asks whether $H$ admits a set of~$k$ vertices that is a vertex cover, that is, contains an endpoint from each edge in~$H$.

Our reduction is strongly based on the reduction used in~\cite[Theorem 6.3]{KKMSS20} which showed the $\NP$-hardness of the 
\myproblem{min-cost popular branching} problem when vertices have partial order preferences. Recall that the \myproblem{min-cost popular branching} problem
is polynomial-time solvable when vertices have weak rankings~\cite{KKMSS20} (also implied by Theorem~\ref{thm:polytope}).
Note also that neither the hardness of \myproblem{min-cost popular branching} for partial order preferences~\cite{KKMSS20}, nor the hardness of \myproblem{min-cost popular assignment} for strict preferences~\cite{KKMSS22} implies Theorem~\ref{thm:min-cost}, since the \myproblem{min-cost popular arborescence} problem does not contain either of these problems.

To show the $\NP$-hardness of the \myproblem{min-cost popular arborescence} problem when vertices have strict rankings, we construct 
a directed graph~$G=(V \cup \{r\},E=E_1 \cup E_2 \cup E_3)$ as follows; see Figure~\ref{fig:min-cost-hardness} for an illustration. 
We set
\begin{align*}
V&=\{w\} \cup \{v_0,v_1:v \in V(H) \} \cup \{e_u,e_v:e=uv \in E(H) \},\\
E_1 & = \{(e_u,e_v),(e_v,e_u),(e_u,w),(e_v,w) : e=uv \in E(H) \} \\
 & \qquad \cup \{(v_0,v_1),(v_1,v_0): v \in V(H)\},  \\
E_2 & = \{(r,w)\} \cup \{ (w,x): x \in V(G) \setminus \{r,w\} \}, \\
E_3 &= \{(r,v_1):  v \in V(H)\} \cup \{(u_0,e_u),(v_0,e_v): e=uv \in E(H) \}.
\end{align*}

To define the preferences of each vertex in~$G$, we let 
all vertices prefer edges of~$E_1$ to edges of~$E_2$, which in turn are preferred to edges of~$E_3$. 
Whenever some vertex has more than one incoming edge in some $E_i$, $i \in \{1,2,3\}$, then it orders them in some arbitrarily fixed strict order.
We set the cost of each edge in $E_3$, as well as the cost of all edges entering $w$ except for~$(r,w)$ as $\infty$.
We set the cost of $(w,v_1)$  as~$1$ for each~$v \in V(H)$, and we set the cost of all remaining edges as~$0$.
We define our budget to be~$k$, finishing the construction of our instance of \myproblem{min-cost popular arborescence}.

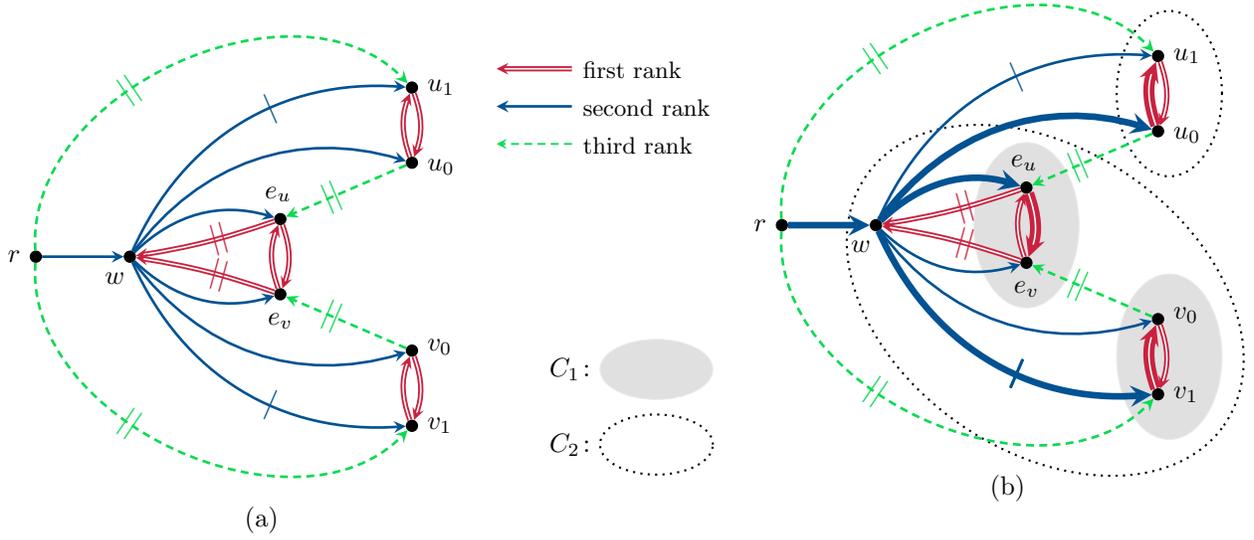
\begin{figure}[t]
  \centering
  \begin{subfigure}[b]{0.58\textwidth}
    \centering  
        \begin{tikzpicture}[xscale=0.5, yscale=0.5]
    
  % vertices
  \node[draw, circle, fill=black, inner sep=1.5pt] (r) at (0, 0) {};
  \node[draw, circle, fill=black, inner sep=1.5pt] (w) at (2.5, 0) {};
  \node[draw, circle, fill=black, inner sep=1.5pt] (eu) at (6.5, 1) {};
  \node[draw, circle, fill=black, inner sep=1.5pt] (ev) at (6.5, -1) {};
  \node[draw, circle, fill=black, inner sep=1.5pt] (u1) at (10, 4.5) {};
  \node[draw, circle, fill=black, inner sep=1.5pt] (u0) at (10, 2.5) {};  
  \node[draw, circle, fill=black, inner sep=1.5pt] (v0) at (10, -2.5) {};    
  \node[draw, circle, fill=black, inner sep=1.5pt] (v1) at (10, -4.5) {};      
  
  % names of vertices
  \node[left] at (r.west) {$r$};
  \node[xshift=-2mm, yshift=-2mm] at (w.south) {$w$};
  \node[xshift=-0.3mm, yshift=2.5mm] at (eu.north) {$e_u$};
  \node[xshift=0mm, yshift=-2.7mm] at (ev.south) {$e_v$};
  \node[right,  yshift=-0.5mm] at (u0.east) {$u_0$};
  \node[right,  yshift=0mm] at (u1.east) {$u_1$};
  \node[right,  yshift=0.5mm] at (v0.east) {$v_0$};
  \node[right] at (v1.east) {$v_1$};
  
  % subfigure label
  \node at (6, -7){{(a)}};

  % arcs    
    \draw[->, >=stealth, line width=1.0pt, densely dashed, green] (u0) to node{} node[pos=0.62, sloped] {\large $\mathbf{||}$} (eu);
    \draw[->, >=stealth, line width=1.0pt, densely dashed, green] (v0) to node{}  node[pos=0.62, sloped] {\large $\mathbf{||}$} (ev);
    \draw[->, >=stealth, line width=1.0pt, densely dashed, green, bend left=70] (r) to node{} node[pos=0.4, sloped] {\large $\mathbf{||}$} (u1);
    \draw[->, >=stealth, line width=1.0pt, densely dashed, green, bend right=70] (r) to node{} node[pos=0.4, sloped] {\large $\mathbf{||}$} (v1);
    
    \draw[->, >=stealth, line width=0.7pt, double, red, bend left=5] (eu) to node{}  node[pos=0.4, sloped] {\large $\mathbf{||}$} (w);    
    \draw[->, >=stealth, line width=0.7pt, double, red, bend right=5] (ev) to node{}  node[pos=0.4, sloped] {\large $\mathbf{||}$} (w);    
    \draw[->, >=stealth, line width=0.7pt, double, red, bend left=20] (ev) to (eu);
    \draw[->, >=stealth, line width=0.7pt, double, red, bend left=20] (eu) to (ev);  
    \draw[->, >=stealth, line width=0.7pt, double, red, bend left=20] (u1) to  (u0);
    \draw[->, >=stealth, line width=0.7pt, double, red, bend left=20] (u0) to (u1);
    \draw[->, >=stealth, line width=0.7pt, double, red, bend left=20] (v0) to  (v1);
    \draw[->, >=stealth, line width=0.7pt, double, red, bend left=20] (v1) to   (v0);

    \draw[->, >=stealth, line width=1.0pt, blue] (r) to (w);
    \draw[->, >=stealth, line width=1.0pt, blue, bend left=30] (w) to  (eu);
    \draw[->, >=stealth, line width=1.0pt, blue, bend right=30] (w) to (ev);    
    \draw[->, >=stealth, line width=1.0pt, blue, bend left=35] (w) to node{} node[pos=0.6, sloped] {\large $\mathbf{|}$} (u1);
    \draw[->, >=stealth, line width=1.0pt, blue, bend left=35] (w) to  (u0);
    \draw[->, >=stealth, line width=1.0pt, blue, bend right=35] (w) to node{}  node[pos=0.6, sloped] {\large $\mathbf{|}$} (v1);
    \draw[->, >=stealth, line width=1.0pt, blue, bend right=35] (w) to  (v0);
    
  % explanation of arcs
  \node (f1) at (14.5, 5) {};
  \node (f2) at (12, 5) {};
  \node[right, xshift=-6pt, yshift=-4pt] at (f1.north) {\small ~first rank};
  \draw[->, >=stealth, line width=0.7pt, double, red] (f1) -- (f2) ;
  \node (s1) at (14.5, 4) {};
  \node (s2) at (12, 4) {};
  \node[right, xshift=-6pt, yshift=-4pt] at (s1.north) {\small ~second rank};
  \draw[->, >=stealth, line width=1.0pt, blue] (s1) -- (s2) ;
  \node (t1) at (14.5, 3) {};
  \node (t2) at (12, 3) {};
  \node[right, xshift=-6pt, yshift=-4pt] at (t1.north) {\small ~third rank};
  \draw[->, >=stealth, line width=0.7pt, densely dashed, green] (t1) -- (t2) ;
    
  \node at (14.2, -3){{$C_1 \!:$}};    
  \draw[dual] (16.5, -3) ellipse (1.5cm and 0.8cm);
  \node at (14.2, -5){{$C_2 \!:$}};    
  \draw[thick, dotted] (16.5, -5) ellipse (1.5cm and 0.8cm) ;

  \end{tikzpicture}
  \end{subfigure} 
  \hspace{-5pt}
  \begin{subfigure}[b]{0.4\textwidth}
    \centering  
    \begin{tikzpicture}[xscale=0.5, yscale=0.5]

  % chain sets
  \draw[dual] (6.5, 0) ellipse (1.4cm and 2.2cm);
  \draw[dual] (10.3, -3.5) ellipse (1.4cm and 2.2cm);
  \draw[thick,dotted] (10.3, 3.5) ellipse (1.4cm and 2.2cm);
  \draw[thick,dotted,rotate around={-35:(7,-2)}] (7, -2) ellipse (5.8cm and 4cm);

  % vertices
  \node[draw, circle, fill=black, inner sep=1.5pt] (r) at (0, 0) {};
  \node[draw, circle, fill=black, inner sep=1.5pt] (w) at (2.5, 0) {};
  \node[draw, circle, fill=black, inner sep=1.5pt] (eu) at (6.5, 1) {};
  \node[draw, circle, fill=black, inner sep=1.5pt] (ev) at (6.5, -1) {};
  \node[draw, circle, fill=black, inner sep=1.5pt] (u1) at (10, 4.5) {};
  \node[draw, circle, fill=black, inner sep=1.5pt] (u0) at (10, 2.5) {};  
  \node[draw, circle, fill=black, inner sep=1.5pt] (v0) at (10, -2.5) {};    
  \node[draw, circle, fill=black, inner sep=1.5pt] (v1) at (10, -4.5) {};      
  
  % names of vertices
  \node[left] at (r.west) {$r$};
  \node[xshift=-2mm, yshift=-2mm] at (w.south) {$w$};
  \node[xshift=-0.3mm, yshift=2.5mm] at (eu.north) {$e_u$};
  \node[xshift=0mm, yshift=-2.7mm] at (ev.south) {$e_v$};
  \node[right,  yshift=-0.5mm] at (u0.east) {$u_0$};
  \node[right,  yshift=0mm] at (u1.east) {$u_1$};
  \node[right,  yshift=0.5mm] at (v0.east) {$v_0$};
  \node[right] at (v1.east) {$v_1$};  
  
  % subfigure label
  \node at (6, -7){{(b)}};
     
  % arcs    
    \draw[->, >=stealth, line width=1.0pt, densely dashed, green] (u0) to node{} node[pos=0.62, sloped] {\large $\mathbf{||}$}  (eu);
    \draw[->, >=stealth, line width=1.0pt, densely dashed, green] (v0) to node{}  node[pos=0.62, sloped] {\large $\mathbf{||}$} (ev);
    \draw[->, >=stealth, line width=1.0pt, densely dashed, green, bend left=70] (r) to node{} node[pos=0.4, sloped] {\large $\mathbf{||}$} (u1);
    \draw[->, >=stealth, line width=1.0pt, densely dashed, green, bend right=70] (r) to node{} node[pos=0.4, sloped] {\large $\mathbf{||}$} (v1);
    
    \draw[->, >=stealth, line width=0.7pt, double, red, bend left=5] (eu) to node{}  node[pos=0.4, sloped] {\large $\mathbf{||}$} (w);    
    \draw[->, >=stealth, line width=0.7pt, double, red, bend right=5] (ev) to node{}  node[pos=0.4, sloped] {\large $\mathbf{||}$} (w);    
    \draw[->, >=stealth, line width=1.8pt, double, red, bend left=20] (eu) to  (ev);
    \draw[->, >=stealth, line width=0.7pt, double, red, bend left=20] (ev) to (eu);
    \draw[->, >=stealth, line width=0.7pt, double, red, bend left=20] (u1) to  (u0);
    \draw[->, >=stealth, line width=1.8pt, double, red, bend left=20] (u0) to (u1);
    \draw[->, >=stealth, line width=0.7pt, double, red, bend left=20] (v0) to  (v1);
    \draw[->, >=stealth, line width=1.8pt, double, red, bend left=20] (v1) to   (v0);

    \draw[->, >=stealth, line width=2.5pt, blue] (r) to (w);
    \draw[->, >=stealth, line width=2.5pt, blue, bend left=30] (w) to  (eu);
    \draw[->, >=stealth, line width=1.0pt, blue, bend right=30] (w) to (ev);    
    \draw[->, >=stealth, line width=1.0pt, blue, bend left=35] (w) to node{} node[pos=0.6, sloped] {\large $\mathbf{|}$} (u1);
    \draw[->, >=stealth, line width=2.5pt, blue, bend left=35] (w) to  (u0);
    \draw[->, >=stealth, line width=2.5pt, blue, bend right=35] (w) to node{}  node[pos=0.6, sloped] {\scalebox{3}[1.2]{$\mathbf{|}$}} (v1);
    \draw[->, >=stealth, line width=1.0pt, blue, bend right=35] (w) to  (v0);
    
  \end{tikzpicture}  
  \end{subfigure}
  \caption{Illustration of the reduction in the proof of Theorem~\ref{thm:min-cost}. Figure~(a) illustrates the construction showing a subgraph of~$G$, assuming that the input graph~$H$ contains an edge~$e=uv$. Edges in $E_1$, $E_2$, and $E_3$ are depicted with double red, single blue, and dashed green lines, respectively. 
  Edges marked with two, one, and zero crossbars have cost $\infty$, $1$, and~$0$, respectively. Figure~(b) illustrates the popular arborescence~$A$ in bold, assuming $v \in S$ and $u \notin S$. The chain $C_1 \subsetneq C_2 \subsetneq C_3=E$ certifying the popularity of~$A$ is shown using grey and dotted ellipses for edges in~$C_1$ and~$C_2$, respectively.
  }
  \label{fig:min-cost-hardness}
\end{figure}

\medskip
We are going to show that $H$ admits a vertex cover of size at most~$k$ if and only if $G$ has a popular arborescence of cost at most~$k$. 

\medskip
Suppose first that $A$ is a popular arborescence in~$G$ with cost at most~$k$. 
We prove that 
the set $S=\{v \in V(H): (w,v_1) \in A\}$ is a vertex cover in~$H$. Since each edge $(w,v_1)$ has cost~$1$, our budget implies $|S| \leq k$. 

For a vertex $v \in V(H)$ and an edge $e=uv \in E(H)$, let $A_v=A \cap (\delta(v_0) \cup \delta(v_1))$ and 
$A_e=A \cap (\delta(e_u) \cup \delta(e_v))$, respectively. 
We note that any $v \in V(H)$ satisfies that 
$A_v$ is either $\{(w,v_0), (v_0,v_1)\}$ or $\{(w,v_1), (v_1,v_0)\}$. 
Indeed, if it is not the case, we have $A_v=\{(w,v_0), (w,v_1)\}$, since $A$ is an arborescence with finite cost.  
However, this contradicts the popularity of $A$, since $A\setminus \{(w,v_1)\} \cup \{(v_0,v_1)\}$ is more popular than $A$. 
We can similarly show that each $e=uv \in E(H)$ satisfies that 
$A_e$ is either $\{(w,e_u), (e_u,e_v)\}$ or $\{(w,e_v), (e_v,e_u)\}$. 
Note also that $(r,w) \in A$, as all other edges entering~$w$ have infinite cost.

Assume for the sake of contradiction that $S$ is not a vertex cover of $H$, i.e.,  
 there exists an edge $e=uv \in E(H)$ such that neither $(w,u_1)$ nor $(w,v_1)$ is  contained in $A$. 
Then we have 
$
A_u =\{(w,u_0), (u_0,u_1)\}$ and $A_v= (w,v_0), (v_0,v_1)\}. $
By symmetry, we assume without loss of generality that  $A_e=\{(w,e_u), (e_u,e_v)\}$. 
Define an edge set $A'$ by 
\[
A'=(A \setminus  (A_e\cup A_v \cup \{(r,w)\})) \cup \{ (r,v_1), (v_1,v_0), (v_0,e_v), (e_v,e_u), (e_u, w)\}. 
\]
We can see that $A'$ is an arborescence and is more popular than $A$, since 
three vertices, $v_0$, $e_u$,  and $w$, prefer $A'$ to $A$, while 
two vertices, $v_1$ and $e_v$, prefer $A$ to $A'$, 
and all others are indifferent between them. 
This proves that $S$ is a vertex cover of $H$.

\medskip
For the other direction, assume that $S$ is a vertex cover in~$H$. We construct a popular arborescence~$A$ of cost~$|S|$ in~$G$.
For each $e \in E(H)$ we fix an endpoint $\sigma(e)$ of~$e$ that is contained in~$S$, and we denote by $\bar\sigma(e)$ the other endpoint of~$e$ (which may or may not be in~$S$). Let
\begin{align*}
 A = \{(r,w)\} & \cup \{(w,v_1),(v_1,v_0):v  \in S\} \\   
 & \cup  \{(w,v_0),(v_0,v_1):v  \in V(H) \setminus S\}  \\
 & \cup \{(w,e_{\bar{\sigma}(e)}),(e_{\bar{\sigma}(e)},e_{\sigma(e)}):e  \in E(H)\}. 
\end{align*}
It is straightforward to verify that $A$ is an arborescence and its cost is exactly $|S|$.  
Hence it remains to prove its popularity, which is done by showing a dual certificate~$\C$ for~$A$.

To define~$\C$, let us first define a set $X=\{w\} \cup \{e_u,e_v:e=uv \in E(H)\} \cup \{v_0,v_1:v \in S\}$ of vertices in~$G$. Then we set $\C=\{C_1,C_2,C_3\}$ where 
\begin{align*}
    C_1= & \, \{(e_u,e_v),(e_v,e_u): e=uv \in E(H)\} \cup \{ (v_0,v_1),(v_1,v_0):v \in S \}, \\
    C_2= & \, \{f \in E(H):f  \textrm{ has two endpoints in~$X$} \} \cup  \{ (v_0,v_1),(v_1,v_0):v \in V(H) \setminus S \}, \\
    C_3 = & \, E. 
\end{align*}
Let us first check that $\rank(C_i)=|A \cap C_i|$ for each $C_i \in \C$. Clearly, $C_1$ consists of mutually vertex-disjoint 2-cycles, 
and $A$ contains an edge from each of them. 
Thus $\rank(C_1)=|A \cap C_1|$ follows. 
The edge set~$C_2$ consists of all edges induced by the vertices of~$X$, together with another set of mutually vertex-disjoint 2-cycles that share no vertex with~$X$.
It is easy to verify that $A\cap C_2$ contains an edge from each of the 2-cycles in question, as well as 
a directed tree containing all vertices of~$X$. Thus, $\rank(C_2)=|A \cap C_2|$ holds. 
Since~$A$ is an arborescence, $\rank(C_3)=\rank(E)=|V|=|A \cap C_3|$ is obvious. 
Observe that for each $i \in \{1,2,3\}$ we have $\spn(C_i)=C_i$, 
and hence
$\rank(C_i)=|A \cap C_i|$ implies $\spn(A \cap C_i)=C_i$.

It remains to see that $A \subseteq E(\C)$. First, $A(w)=(r,w)$ is the unique incoming edge of~$w$ with $\C$-level~$3$. 
For some $v \in S$, $\lev^*_\C(v_0)=2$ while $\lev_\C^*(v_1)=3$, and by their preferences  both $A(v_0)=(v_1,v_0)$ and $A(v_1)=(w,v_1)$ are in~$E(\C)$. 
For some $v \in V(H) \setminus S$, $\lev^*_\C(v_0)=\lev_\C^*(v_1)=3$, 
and hence both
$A(v_0)=(w,v_0)$ and $A(v_1)=(v_0,v_1)$ are in~$E(\C)$. 
Finally, consider an edge $e=uv \in E(H)$ with $\sigma(e)=v \in S$. As $\lev^*_\C(e_u) \leq 3$, and since $e_u$ prefers $(w,e_u)$ to~$(u_0,e_u)$, 
we know that the edge $A(e_u)=(w,e_u) \in C_2$ is contained in~$E(\C)$.
By contrast, since $v \in S$ implies $v_0 \in X$, we obtain $\lev^*_\C(e_v) =2$, and therefore the edge $A(e_v)=(e_u,e_v) \in C_1$ is contained in~$E(\C)$.
By Lemma~\ref{lem:chain}, this proves that $A$ is indeed a popular arborescence.

\section{Popular Arborescences with Forced/Forbidden Edges}
\label{sec:forbidden}
We prove Theorem~\ref{thm:forced-forbidden} in this section. Observe that the problem of deciding if there exists a popular 
arborescence~$A$ such that $A \supseteq E^+$ for a given set $E^+ \subseteq E$ of {\em forced} edges can be reduced to the
problem of deciding if there exists a popular arborescence $A$ such that certain edges are {\em forbidden} for $A$. 

Let $V' \subseteq V$ be the set of those vertices $v$ such that $\delta(v) \cap E^+ \ne \emptyset$; clearly, we may assume $|\delta(v) \cap E^+|=1$ for each $v \in V'$. Let 
$E' = \bigcup_{v \in V'} (\delta(v)\setminus E^+)$. Since $A \supseteq E^+$ if and only if $A \cap E' = \emptyset$,
it follows that the problem of deciding if there exists a popular arborescence $A$ such that $E^+ \subseteq A$ and 
$E^- \cap A = \emptyset$ reduces to the problem of deciding if there exists a popular arborescence $A$ such that
$A \cap E_0 = \emptyset$ for a set $E_0 \subseteq E$ of forbidden edges.

\medskip

\paragraph{Forbidden edges.} We present our algorithm that decides if $G$ admits a popular arborescence
that avoids $E_0$ for a given subset $E_0$ of $E$ as Algorithm~\ref{alg:pop-arb-forbidden}. 
The only difference from the original popular arborescence algorithm (Algorithm~\ref{alg:pop-arb}) is in line~4: the new
algorithm finds a lexicographically maximal branching in the set $E(\C)\setminus E_0$ instead of $E(\C)$.
Recall that $\rank(E) = |V| = n$.

\begin{algorithm}
\caption{The popular arborescence algorithm with the forbidden edge set $E_0$}\label{alg:pop-arb-forbidden}
\begin{algorithmic}[1]
\State Initialize $p=1$ and $C_1 = E$.
\Comment{Initially we set $\C = \{E\}$.}
\While{$p \le n$}
\State Compute the edge set $E(\C)$ from the current multichain $\C$.
\State Find a branching  $I\subseteq E(\C)\setminus E_0$ that lexicographically maximizes $(|I\cap C_1|, \dots, |I\cap C_{p}|)$.
\If{$|I\cap C_i|=\rank(C_i)$ for every $i=1,\dots, p$} return $I$. \EndIf
\State Let $k$ be the minimum index such that $|I\cap C_k|<\rank(C_k)$.  
\State Update $C_k \leftarrow \spn(I \cap C_k)$.
\If{$k = p$} $p \leftarrow p+1$, $C_p \leftarrow E$, and $\C \leftarrow \C \cup \{C_p\}$. \EndIf
\EndWhile
\State Return ``$G$ has no popular arborescence that avoids $E_0$''.  
\end{algorithmic}
\end{algorithm}

\begin{theorem}
Let $E_0 \subseteq E$. The instance $G = (V \cup \{r\}, E)$ admits a popular arborescence $A$ such that 
$A \cap E_0 = \emptyset$ if and only if Algorithm \ref{alg:pop-arb-forbidden} returns a popular arborescence 
with no edge of $E_0$.
\end{theorem}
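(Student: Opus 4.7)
\medskip

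\noindent\textbf{Proof proposal.} The plan is to mirror the structure of the correctness proof of Algorithm~\ref{alg:pop-arb}, substituting $E(\C)\setminus E_0$ for $E(\C)$ in the lex-maximality step, and observing that the exchange arguments never leave~$E(\C)\setminus E_0$.

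For the easy direction, suppose Algorithm~\ref{alg:pop-arb-forbidden} returns a branching~$I$. By construction $I\subseteq E(\C)\setminus E_0$, so $I$ avoids $E_0$. Moreover $|I\cap C_i|=\rank(C_i)$ for all $i$, and the multichain invariant (Observation~\ref{obs1}) gives $\spn(I\cap C_i)=C_i$. Pruning $\C$ to a chain $\C'$ as in the proof of Algorithm~\ref{alg:pop-arb} preserves $I\subseteq E(\C)\subseteq E(\C')$, and Lemma~\ref{lem:chain} then certifies that $I$ is a popular arborescence.

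For the nontrivial direction, suppose $G$ has a popular arborescence $A$ with $A\cap E_0=\emptyset$, and let $\D=\{D_1,\dots,D_q\}$ be a dual certificate of $A$. The plan is to re-establish Lemma~\ref{prop:correctness} for Algorithm~\ref{alg:pop-arb-forbidden}, namely that throughout the execution $p\le q$ and $D_i\subseteq C_i$ for $i=1,\dots,p$. The initial and ``$C_p\leftarrow E$'' updates are identical, so the only new issue is showing that when line~7 replaces $C_k$ by $\spn(I\cap C_k)$, we still have $D_k\subseteq \spn(I\cap C_k)$. The key observation that makes the original argument go through verbatim is that every element $f_1,f_2,\dots$ produced by successive applications of the exchange axiom (Fact~\ref{fact:exchange}) lies in~$A$, and every $e_1,e_2,\dots$ lies either in~$I$ or is one of the previously introduced $f_j$'s; since $A\cap E_0=\emptyset$ and $I\subseteq E(\C)\setminus E_0$, the successive augmented sets $I_1,I_2,\dots$ all satisfy $I_t\subseteq E(\C)\setminus E_0$. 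Consequently the lex-maximality of $I$ \emph{over the restricted ground set} $E(\C)\setminus E_0$ still forces each $I_t$ to violate the partition matroid constraint, and Claims analogous to Claims~\ref{claim-e1-f1}, \ref{claim-e2-f2}, \ref{claim-next} yield an infinite sequence of distinct edges $f_1,f_2,\dots$ in~$A$, contradicting $|A|\le n$.

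With the invariant established, the rest of the argument is immediate: if Algorithm~\ref{alg:pop-arb-forbidden} ever reaches $p=n+1$, then $q\ge n+1$ contradicts Claim~\ref{clm:level}; hence whenever a popular arborescence avoiding $E_0$ exists, the algorithm must terminate by returning some branching~$I$, which by the easy direction is a popular arborescence with $I\cap E_0=\emptyset$. The main (and only) obstacle is verifying that the chain of applications of Fact~\ref{fact:exchange} in the proof of Lemma~\ref{prop:correctness} is compatible with the restriction to $E(\C)\setminus E_0$; once one notes that all newly introduced edges come from $A\subseteq E\setminus E_0$, the rest is a transcription of the original proof.
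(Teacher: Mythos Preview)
Your proposal is correct and follows essentially the same route as the paper's proof: both directions match, and for the converse you correctly isolate the single point of difference from Lemma~\ref{prop:correctness}, namely that each exchange element~$f_j$ lies in~$A\subseteq E\setminus E_0$, so the augmented sets $I_j$ remain inside $E(\C)\setminus E_0$ and the lex-maximality argument over the restricted ground set goes through unchanged. The paper's proof is slightly terser but makes exactly the same observation.
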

\begin{proof}
The easy side is to show that if Algorithm~\ref{alg:pop-arb-forbidden} returns an arborescence $I$, then (i)~$I$ is popular
and (ii)~$I \cap E_0 = \emptyset$. As done in Section~\ref{sec:algo}, let us prune the multichain $\C$ into a chain $\C'$.
Because $I \subseteq E(\C)\setminus E_0$ and $E(\C) \subseteq E(\C')$, we have $I \subseteq E(\C') \setminus E_0$. Since 
$I \subseteq E(\C')$ and $|I\cap C'_i|=\rank(C'_i)$ (and hence $\spn(I\cap C'_i)=C'_i$) for every $C'_i \in \C'$, it follows from  Lemma~\ref{lem:chain} that $I$ is a popular arborescence.

We now show the converse. 
Suppose that $G$ admits a popular arborescence $A$ with $A \cap E_0 = \emptyset$. 
Let
$\D=\{D_1, \dots, D_q\}$ be a dual certificate for $A$.
Then we have $A\subseteq E(\D)\setminus E_0$.
It suffices to show that Algorithm~\ref{alg:pop-arb-forbidden} maintains the following invariant: the multichain $\C=\{C_1, \dots, C_p\}$ maintained in the algorithm satisfies $p\leq q$ and $D_i\subseteq C_i$ for any $i=1, 2, \dots, p$. 

We can show a variant of Lemma~\ref{prop:correctness}, i.e., we can show that when $C_k$ is updated in the algorithm, $D_k\subseteq \spn(I\cap C_k)$ holds where $I$ is a lexicographically maximal branching in $E(\C)\setminus E_0$.
The proof of Lemma~\ref{prop:correctness} works almost as it is. %
Recall that we sequentially find elements $f_1, e_1, f_2, e_2,\dots$ in the proof of Lemma~\ref{prop:correctness}.
For each $j=1,2,\dots$, in addition to the condition $f_j\in E(\C)$, we have $f_j\not\in E_0$ since $f_j\in A\subseteq E\setminus E_0$. 
By this, 
$I_{j}=(I\cap C_k)+f_1-e_1+f_2\cdots -e_{j-1}+f_j$ satisfies $I_j\subseteq E(C)\setminus E_0$ for each~$j$. Hence the
proof of Lemma~\ref{prop:correctness} works with ``lex-maximality subject to $I\subseteq E(\C)\setminus E_0$'' replacing
``lex-maximality subject to $I\subseteq E(\C)$''.
\end{proof}

\switch{}{   %%%%%%%%%%%%%%%%%%%%% the beginning of switch (2)
\section{Minimum Unpopularity Margin Arborescence}
\label{sec:hardness-min-unpop-margin}
We prove Theorem~\ref{thm:min-unpop-margin} in this section.
%We defer the proofs for some claims used in the proof to Appendix~\ref{app:proofs}; we mark all such results by an asterisk ($\star$).
It is easy to see that the problem is in $\NP$, since given an arborescence~$A$ we can verify $\mu(A)\leq k$ efficiently, assuming that a dual certificate for~$A$ (i.e., a solution for~\ref{LP2} with objective value~$k$) is provided. 

To prove $\NP$-hardness,
we present a reduction from the following $\NP$-hard variant of the \textsc{Exact 3-Cover} problem~\cite{Gonzalez85}. The input contains a set~$U$ of size $3n$ and a set family $\calS=\{S_1,\dots, S_{3n}\}$ where $S_i \subseteq U$ and $|S_i|=3$ for each $S_i \in \calS$, and each $u \in U$ is contained in exactly three sets from~$\calS$. 
The task is to decide whether there exist $n$ sets in~$\calS$ whose union is $U$. 

Our reduction draws inspiration from the reduction used in~\cite[Theorem 4.6]{KKMSS20} which proved the $\NP$-hardness of the 
\myproblem{$k$-unpopularity margin branching} problem when vertices have partial order preferences.  
Recall that this problem was shown to be polynomial-time solvable when vertices have weak rankings~\cite{KKMSS20}.
Note also that Theorem~\ref{thm:min-unpop-margin} does not follow from the $\NP$-hardness of either the \myproblem{$k$-unpopularity margin branching} problem~\cite{KKMSS20} or the \myproblem{$k$-unpopularity margin assignment} problem~\cite{KKMSS22}.

To show the $\NP$-hardness of the \myproblem{$k$-unpopularity margin arborescence} problem when vertices have strict rankings,
we construct a directed graph~$G=(V \cup \{r\},E=E_1 \cup E_2 \cup E_3)$ as follows; see Figure~\ref{fig:unpop-hardness} for an illustration. 
For each $u \in U$ we construct a gadget $G_u$ whose vertex set is $\{u_0,u_1\} \cup A_u \cup B_u$ where $A_u=\{a_{u,1},a_{u,2},a_{u,3}\}$ and $B_u=\{b_{u,1},b_{u,2},b_{u,3}\}$. First we add four 2-cycles, with all their edges in~$E_1$, on vertex sets $\{a_{u,i},b_{u,i}\}$ for each $i=1,2,3$, as well as on $\{u_0,u_1\}$; these $8|U|$ edges comprise~$E_1$. 
We next add edges of~$E_2$: first, we stitch together the three 2-cycles on~$A_u \cup B_u$ 
with edges $(a_{u,3},b_{u,2})$, $(a_{u,2},b_{u,1})$, and  
$(a_{u,1},b_{u,3})$;
second, we add all possible edges between $\{u_0,u_1\}$ and~$A_u$,
creating a bidirected $K_{2,3}$.
We denote the unique 6-cycle on $A_u \cup B_u$ as $C_u$.
This finishes the construction of our gadget~$G_u$. 
To complete the definition of~$G$, it remains to define $E_3$.
To this end, for each $u \in U$ we fix an arbitrary ordering over the three sets of~$\calS$ containing~$u$, and denote them as $S(u,1)$, $S(u,2)$, and $S(u,3)$.
We then let
\[E_3=\{(r,u_0),(r,u_1):u \in U\}
\cup \{(b_{u,i},b_{v,j}): \exists S \in \calS \textrm{ s.t. } S=S(u,i)=S(v,j)\}.
\]

To define the preferences of each vertex in~$G$, we let all vertices prefer edges of~$E_1$ to those in~$E_2$, which in turn are preferred to edges of~$E_3$. 
Whenever some vertex has more than one incoming
edge in some $E_i$, $i \in \{1,2,3\}$, then it orders them in some fixed strict order with the only constraint that edges from~$u_0$ are preferred to edges from~$u_1$ for each $u \in U$. 

\begin{figure}[t]
  \centering
      \begin{tikzpicture}[xscale=0.5, yscale=0.5]
    
  % vertices
  \node[draw, circle, fill=black, inner sep=1.5pt] (a1) at (-7.2, 0) {};
  \node[draw, circle, fill=black, inner sep=1.5pt] (b1) at (-4.8, 0) {};
  \node[draw, circle, fill=black, inner sep=1.5pt] (a2) at (-1.2, 0) {};
  \node[draw, circle, fill=black, inner sep=1.5pt] (b2) at (1.2, 0) {};
  \node[draw, circle, fill=black, inner sep=1.5pt] (a3) at (4.8, 0) {};
  \node[draw, circle, fill=black, inner sep=1.5pt] (b3) at (7.2, 0) {};  
  \node[draw, circle, fill=black, inner sep=1.5pt] (u0) at (-1.2, -4) {};    
  \node[draw, circle, fill=black, inner sep=1.5pt] (u1) at (1.2, -4) {};      
  \node[draw, circle, fill=black, inner sep=1.5pt] (r) at (0, -6) {};      
  \path (b1) ++(40:2)   node (t1) {};
  \path (b1) ++(70:1.7) node (s1) {};
  \path (b2) ++(40:2)   node (t2) {};
  \path (b2) ++(70:1.7) node (s2) {};
  \path (b3) ++(40:2)   node (t3) {};
  \path (b3) ++(70:1.7) node (s3) {};
      
  % names of vertices
  \node[above left, xshift=4mm, yshift=1mm] at (a1) {$a_{u,1}$};
  \node[above left, xshift=1mm, yshift=1mm] at (b1) {$b_{u,1}$};
  \node[above, xshift=-1mm] at (a2.north) {$a_{u,2}$};
  \node[above left, xshift=1mm, yshift=1mm] at (b2) {$b_{u,2}$};
  \node[above, xshift=-1mm] at (a3.north) {$a_{u,3}$};
  \node[above left, xshift=1mm, yshift=1mm] at (b3) {$b_{u,3}$};
  \node[left, yshift=-2mm] at (u0) {$u_0$};
  \node[right, yshift=-2mm] at (u1) {$u_1$};  
  \node[below] at (r.south) {$r$};
  
  % braces for Q sets  and C_u
  %\draw [decorate, decoration = {brace}] (-7.9,1.5) --  (-4.6,1.5) node[above,pos=0.5] {$Q_{u,1}$};
  %\draw [decorate, decoration = {brace}] (-1.9,1.5) --  (1.4,1.5) node[above,pos=0.5] {$Q_{u,2}$};
  %\draw [decorate, decoration = {brace}] (4.3,1.5) --  (7.4,1.5) node[above,pos=0.5] {$Q_{u,3}$};    
  \draw [decorate, decoration = {brace}] (9,0.5) --  (9,-1.5) node[right, xshift = 2mm, pos=0.5] {$C_u$};    
  
%  \path (b3) ++(0,-1.8) node (Cu) {$C_u$};

  % arcs
  \draw[->, >=stealth, line width=1.0pt, densely dashed, green] (t1) to  (b1);
  \draw[->, >=stealth, line width=1.0pt, densely dashed, green] (t2) to  (b2);  
  \draw[->, >=stealth, line width=1.0pt, densely dashed, green] (t3) to  (b3);    
  \draw[->, >=stealth, line width=1.0pt, densely dashed, green] (s1) to  (b1);
  \draw[->, >=stealth, line width=1.0pt, densely dashed, green] (s2) to  (b2);  
  \draw[->, >=stealth, line width=1.0pt, densely dashed, green] (s3) to  (b3);    
  \draw[->, >=stealth, line width=1.0pt, densely dashed, green] (r) to  (u0);      
  \draw[->, >=stealth, line width=1.0pt, densely dashed, green] (r) to  (u1);        

  \draw[->, >=stealth, line width=0.7pt, double, red, bend left=20] (a1) to (b1);  
  \draw[->, >=stealth, line width=0.7pt, double, red, bend left=20] (b1) to (a1);    
  \draw[->, >=stealth, line width=0.7pt, double, red, bend left=20] (a2) to (b2);  
  \draw[->, >=stealth, line width=0.7pt, double, red, bend left=20] (b2) to (a2);    
  \draw[->, >=stealth, line width=0.7pt, double, red, bend left=20] (a3) to (b3);  
  \draw[->, >=stealth, line width=0.7pt, double, red, bend left=20] (b3) to (a3);    
  \draw[->, >=stealth, line width=0.7pt, double, red, bend left=20] (u0) to (u1);  
  \draw[->, >=stealth, line width=0.7pt, double, red, bend left=20] (u1) to (u0);    
  
  \draw[<->, >=stealth, line width=1.0pt, blue, out= -100, in=150] (a1) to (u0);  
  \draw[<->, >=stealth, line width=1.0pt, blue, out= -100, in = 100] (a2) to (u0);  
  \draw[<->, >=stealth, line width=1.0pt, blue, out=55, in=-150] (u0) to (a3);  
  \draw[<->, >=stealth, line width=1.0pt, blue, in= -60, out = 120] (u1) to (a1);  
  \draw[<->, >=stealth, line width=1.0pt, blue, out=80, in=-60] (u1) to (a2);  
  \draw[<->, >=stealth, line width=1.0pt, blue, out=30, in=-100] (u1) to (a3);  
  \draw[->, >=stealth, line width=1.0pt, blue] (a3) to (b2);  
  \draw[->, >=stealth, line width=1.0pt, blue] (a2) to (b1);    
  \draw[bend left=-160, ->, >=stealth, line width=1.0pt, blue, bend left=-160] (a1) to (b3);    
    
  \end{tikzpicture}  
  \caption{Illustration of a gadget $G_u$ in the proof of Theorem~\ref{thm:min-unpop-margin}. Preferences are encoded using line types and colors as in Figure~\ref{fig:unpop-hardness}.
  }
  \label{fig:unpop-hardness}
\end{figure}
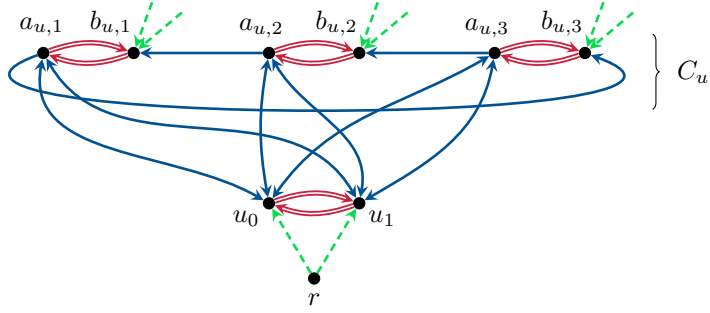

\medskip
We are going to show that our instance of \textsc{Exact 3-Cover} is solvable if and only if $G$ admits an arborescence with $\mu(A)\leq 2n$. 
\medskip

First, assume that there exists some $\T \subseteq \calS$ of size~$n$ that covers each $u \in U$ exactly once. Let $\sigma(u)$ denote the index in $\{1,2,3\}$ for which  $S(u,\sigma(u)) \in \T$. We then let 
\[A=\bigcup_{u \in U} \{
(r,u_0),(u_0,u_1),(u_0,a_{u,\sigma(u)}),
(a_{u,\sigma(u)},b_{u,\sigma(u)})\} \cup
(C_u \setminus \{e \in C_u: e \textrm{ is incident to }b_{u,\sigma(u)}\}).
\]

Note that $A$ is an arborescence in~$G$. To prove that the unpopularity margin of~$A$ is at most~$2n$, we will use the fact that, by definition, $\mu(A)=\max_{A'}\phi(A',A)-\phi(A,A')$ is the optimal value of~\ref{LP1}. Therefore, to show that $\mu(A)\leq 2n$ it suffices to give a dual feasible solution with objective value~$2n$. 
To this end, we define a chain $\C=\{C_1,C_2,C_3\}$ with $C_1 \subsetneq C_2 \subsetneq C_3=E$ by setting 
\begin{align*}
C_1 = &\,\, \{ (a_{u,i},b_{u,i}),(b_{u,i},a_{u,i}): u \in U, i \in \{1,2,3\} \}, \\
C_2= & \,\,\bigcup_{\{u,v,z\} \in \T} \{ e \in E: e \textrm{ has both endpoints in } V(G_u) \cup V(G_v) \cup V(G_z) \}.
\end{align*}
Note that 
    $\rank(C_1)=3|U|$, $\rank(C_2)=(3\cdot 8 -1)n = 7|U|+2n$, and $\rank(C_3)=8|U|$. 

To define a feasible solution $(\vec{y},\vec{\alpha})$ for~\ref{LP2}, for each $S \subseteq E$ we let $y_S=1$ if $S \in \C$, and $y_S=0$ otherwise;  
we also set
\begin{align*}
  \alpha_{a_{u,i}}&=
  \left\{\begin{array}{ll}
  -3 \quad & \textrm{if $i \neq \sigma(u)$,} \\
  -2 & \textrm{if $i = \sigma(u)$,} \end{array}\right. \qquad \qquad \alpha_{u_0}=-1, \\
  \alpha_{b_{u,i}}&=\left\{\begin{array}{ll}
  -2 \quad & \textrm{if $i \neq \sigma(u)$,} \\
  -3 & \textrm{if $i = \sigma(u)$,} \end{array}\right.  \qquad \qquad \alpha_{u_1}=-2, 
\end{align*}
for each $u \in U$.
See Figure~\ref{fig:unpop-dualsol} for an illustration.
The objective value of $(\vec{y},\vec{\alpha})$ is 
\[
\sum_{C_i \in \C} \rank(C_i)+\sum_{v \in V} \alpha_v = 3|U|+7|U|+2n+8|U| - 18|U|=2n.
\]
Therefore, to prove  that $A$ has unpopularity margin at most~$2n$, it suffices to show that $(\vec{y},\vec{\alpha})$ is a feasible solution for~\ref{LP2}, as stated by Claim~\ref{clm:margin_0} below. %whose proof we defer to Appendix~\ref{app:proofs}.
The proofs of claims marked by an asterisk ($\star$) are deferred to Appendix~\ref{app:proofs}.

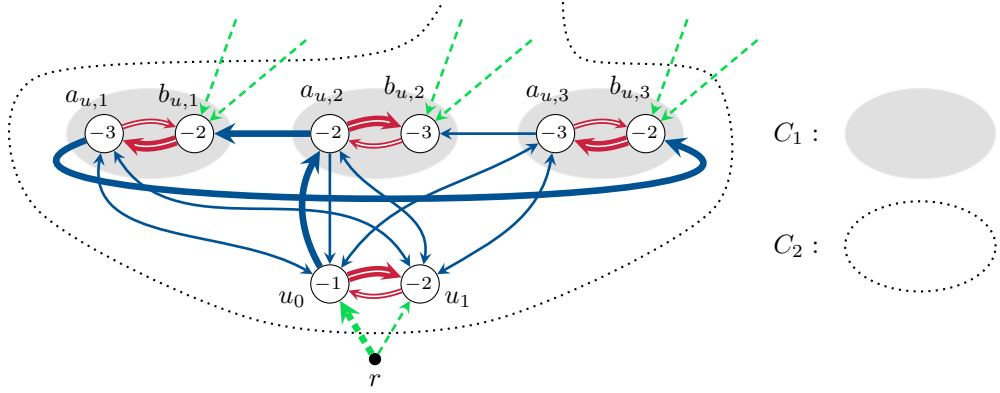
\begin{figure}[t]
    \centering
        \begin{tikzpicture}[xscale=0.5, yscale=0.5]
    
  % certificate:
  \draw[dual] (-6, 0) ellipse (2.2cm and 1.2cm);
  \draw[dual] (-0, 0) ellipse (2.2cm and 1.2cm);
  \draw[dual] (6, 0) ellipse (2.2cm and 1.2cm);
  \draw[dual] (14.5, 0) ellipse (2.0cm and 1.2cm);
  \draw[thick, dotted] (14.5, -3) ellipse (2.0cm and 1.2cm);

  % vertices
  \node[draw, circle, fill=white, inner sep=1.5pt] (a1) at (-7.2, 0) {$\scriptstyle \! -3$};
  \node[draw, circle, fill=white, inner sep=1.5pt] (b1) at (-4.8, 0) {$\scriptstyle \! -2$};;
  \node[draw, circle, fill=white, inner sep=1.5pt] (a2) at (-1.2, 0) {$\scriptstyle \! -2$};;
  \node[draw, circle, fill=white, inner sep=1.5pt] (b2) at (1.2, 0)  {$\scriptstyle \! -3$};;
  \node[draw, circle, fill=white, inner sep=1.5pt] (a3) at (4.8, 0)  {$\scriptstyle \! -3$};
  \node[draw, circle, fill=white, inner sep=1.5pt] (b3) at (7.2, 0)  {$\scriptstyle \! -2$};  
  \node[draw, circle, fill=white, inner sep=1.5pt] (u0) at (-1.2, -4) {$\scriptstyle \! -1$};  
  \node[draw, circle, fill=white, inner sep=1.5pt] (u1) at (1.2, -4) {$\scriptstyle \! -2$};        
  \node[draw, circle, fill=black, inner sep=1.5pt] (r) at (0, -6) {};      
  \path (b1) ++(40:4.2)   node (t1) {};
  \path (b1) ++(70:3.5) node (s1) {};
  \path (b2) ++(40:4.2)   node (t2) {};
  \path (b2) ++(70:3.5) node (s2) {};
  \path (b3) ++(40:4.2)   node (t3) {};
  \path (b3) ++(70:3.5) node (s3) {};
  \path (b3) ++(4,0)  node (C1) {$C_1:$};
  \path (C1) ++(0,-3) node (C2) {$C_2:$};  
  \draw[thick,dotted] plot [smooth] coordinates {(5, 3.5) (5.5,2) (9,1.5) (9,-2) (3,-5) (-3,-5) (-9,-2) (-9,1.5) (-1,2) (1, 2.5) (1.8,3.5)};
      
  % names of vertices
  \node[above left, xshift=2mm, yshift=2mm] at (a1) {$a_{u,1}$};
  \node[above left, xshift=2mm, yshift=2mm] at (b1) {$b_{u,1}$};
  \node[above, xshift=-1mm] at (a2.north) {$a_{u,2}$};
  \node[above left, xshift=2mm, yshift=3mm] at (b2) {$b_{u,2}$};
  \node[above, xshift=-1mm] at (a3.north) {$a_{u,3}$};
  \node[above left, xshift=2mm, yshift=3mm] at (b3) {$b_{u,3}$};
  \node[left, yshift=-2mm, xshift=-2mm] at (u0) {$u_0$};
  \node[right, yshift=-2mm, xshift=2mm] at (u1) {$u_1$};  
  \node[below] at (r.south) {$r$};

  % braces for Q sets  and C_u
  %\draw [decorate, decoration = {brace}] (-7.9,1.5) --  (-4.6,1.5) node[above,pos=0.5] {$Q_{u,1}$};
  %\draw [decorate, decoration = {brace}] (-1.9,1.5) --  (1.4,1.5) node[above,pos=0.5] {$Q_{u,2}$};
  %\draw [decorate, decoration = {brace}] (4.3,1.5) --  (7.4,1.5) node[above,pos=0.5] {$Q_{u,3}$};    
  %\draw [decorate, decoration = {brace}] (9,0.5) --  (9,-1.5) node[right, xshift = 2mm, pos=0.5] {$C_u$};    
  
%  \path (b3) ++(0,-1.8) node (Cu) {$C_u$};

  % arcs
  \draw[->, >=stealth, line width=1.0pt, densely dashed, green] (t1) to  (b1);
  \draw[->, >=stealth, line width=1.0pt, densely dashed, green] (t2) to  (b2);  
  \draw[->, >=stealth, line width=1.0pt, densely dashed, green] (t3) to  (b3);    
  \draw[->, >=stealth, line width=1.0pt, densely dashed, green] (s1) to  (b1);
  \draw[->, >=stealth, line width=1.0pt, densely dashed, green] (s2) to  (b2);  
  \draw[->, >=stealth, line width=1.0pt, densely dashed, green] (s3) to  (b3);    
  \draw[->, >=stealth, line width=2.5pt, densely dashed, green] (r) to  (u0);      
  \draw[->, >=stealth, line width=1.0pt, densely dashed, green] (r) to  (u1);        

  \draw[->, >=stealth, line width=0.7pt, double, red, bend left=20] (a1) to (b1);  
  \draw[->, >=stealth, line width=1.7pt, double, red, bend left=20] (b1) to (a1);    
  \draw[->, >=stealth, line width=1.7pt, double, red, bend left=20] (a2) to (b2);  
  \draw[->, >=stealth, line width=0.7pt, double, red, bend left=20] (b2) to (a2);    
  \draw[->, >=stealth, line width=0.7pt, double, red, bend left=20] (a3) to (b3);  
  \draw[->, >=stealth, line width=1.7pt, double, red, bend left=20] (b3) to (a3);    
  \draw[->, >=stealth, line width=1.7pt, double, red, bend left=20] (u0) to (u1);  
  \draw[->, >=stealth, line width=0.7pt, double, red, bend left=20] (u1) to (u0);    
  
  \draw[<->, >=stealth, line width=1.0pt, blue, out= -100, in=150] (a1) to (u0);  
  \draw[<-, >=stealth, line width=2.5pt, blue, out= -120, in = 120] (a2) to (u0);  
  \draw[->, >=stealth, line width=1.0pt, blue, out= -90, in = 90] (a2) to (u0);  
  \draw[<->, >=stealth, line width=1.0pt, blue, out=55, in=-150] (u0) to (a3);  
  \draw[<->, >=stealth, line width=1.0pt, blue, in= -60, out = 120] (u1) to (a1);  
  \draw[<->, >=stealth, line width=1.0pt, blue, out=80, in=-60] (u1) to (a2);  
  \draw[<->, >=stealth, line width=1.0pt, blue, out=30, in=-100] (u1) to (a3);  
  \draw[->, >=stealth, line width=1.0pt, blue] (a3) to (b2);  
  \draw[->, >=stealth, line width=2.5pt, blue] (a2) to (b1);    
  \draw[bend left=-160, ->, >=stealth, line width=2.5pt, blue, bend left=-160] (a1) to (b3);

  \end{tikzpicture}
    \caption{Illustration of the arborescence~$A$ in the proof of Theorem~\ref{thm:min-unpop-margin}, shown in bold, together with a feasible dual solution $(\vec{y},\vec{\alpha})$  certifying  $\mu(A) \leq 2n$. The figure assumes $\sigma(u)=2$. 
    The chain $C_1 \subsetneq C_2 \subsetneq C_3=E$ is shown using grey and dotted ellipses for edges in~$C_1$ and~$C_2$, respectively, while the values~$\alpha_v$, $v \in V$, are written within the corresponding vertices.}
    \label{fig:unpop-dualsol}
\end{figure}

\begin{restatable}{claim}{clmmargindual}[$\star$]
\label{clm:margin_0} 
     $(\vec{y},\vec{\alpha})$ is a feasible solution for~\ref{LP2}.
\end{restatable}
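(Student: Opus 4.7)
The plan is to verify the inequality
\[ \sum_{S \,:\, e \in S} y_S \;+\; \alpha_v \;\geq\; \wt_A(e) \]
for every $v \in V$ and $e \in \delta(v)$; the non-negativity constraints $y_S \geq 0$ are immediate from the definition of $\vec{y}$. Since $y_S = 1$ exactly for $S \in \C = \{C_1,C_2,C_3\}$, the left-hand side simplifies to $(4 - \lev_\C(e)) + \alpha_v$, so feasibility reduces to a case analysis over small integer comparisons. The key preliminary step is to tabulate $\lev_\C(e)$ by edge type: every $E_1$-edge has level~$1$, every $E_2$-edge has level~$2$, the edges $(r,u_0)$ and $(r,u_1)$ have level~$3$, and an $E_3$-edge of the form $(b_{v,j}, b_{u,i})$ with $S(u,i) = S(v,j) = T \in \calS$ has level~$2$ precisely when $T \in \T$ and level~$3$ otherwise.

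Equipped with this, the verification proceeds vertex by vertex. For $v \in \{u_0, u_1\}$ I plan to use $A(u_0) = (r, u_0)$, $\alpha_{u_0} = -1$ and $A(u_1) = (u_0, u_1)$, $\alpha_{u_1} = -2$, together with the convention that $u_1$ prefers edges from $u_0$ over those from $u_1$, to reduce each incoming-edge inequality to a direct integer comparison. For $v = a_{u,i}$ I split on whether $i = \sigma(u)$: the values $\alpha_{a_{u,i}} \in \{-2, -3\}$ and the identity of $A(a_{u,i})$ (which lies in $E_2$ when $i = \sigma(u)$ and in $E_1$ otherwise) are calibrated so that the tight inequality arises at $A(a_{u,i})$ itself, while the other two or three incoming edges satisfy the constraint with slack.

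The main obstacle is the case $v = b_{u,i}$, because of the cross-gadget $E_3$-edges whose level in $\C$ depends on $\T$. The crucial observation is that when $i = \sigma(u)$, the set $S(u,i) \in \T$ is exactly the triple of $\T$ containing $u$, so every $E_3$-edge entering $b_{u, \sigma(u)}$ originates at a $b$-vertex belonging to the \emph{same} triple of $\T$; by the definition of $C_2$, each such edge therefore sits at level~$2$. Combined with $A(b_{u,\sigma(u)}) = (a_{u,\sigma(u)}, b_{u,\sigma(u)}) \in E_1$ and $\alpha_{b_{u,\sigma(u)}} = -3$, this yields the tight inequality $-1 \geq -1 = \wt_A(e)$ for each such $E_3$-edge, while the $E_1$ and $E_2$ edges entering $b_{u,\sigma(u)}$ are handled immediately. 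For $i \neq \sigma(u)$ the edge $A(b_{u,i})$ is a stitch edge in $E_2$ and $\alpha_{b_{u,i}} = -2$: the $E_1$-edge $(a_{u,i}, b_{u,i})$ gives $3 - 2 = 1 \geq 1$, the stitch edge gives $2 - 2 = 0 = \wt_A$, and each incoming $E_3$-edge yields either $2 - 2 = 0 \geq -1$ or $1 - 2 = -1 \geq -1$ depending on whether it sits at level~$2$ or~$3$. Pinning down these levels via the triple structure of $\T$ is the conceptual heart of the argument; once it is in place, the rest is a uniform mechanical check that completes the proof of feasibility.
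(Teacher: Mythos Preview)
Your approach is correct and essentially the same as the paper's: both reduce feasibility to checking $|\{C \in \C : e \in C\}| + \alpha_v \geq \wt_A(e)$ edge by edge, with the paper casing on the level of~$e$ and you casing on the vertex type of~$v$. The crucial observation you single out---that every $E_3$-edge entering $b_{u,\sigma(u)}$ lies in~$C_2$ because its tail belongs to the same triple of~$\T$---is exactly the point the paper uses in its level-$3$ case.

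One small inaccuracy to fix: not every $E_1$-edge has level~$1$. The edges $(u_0,u_1)$ and $(u_1,u_0)$ are in~$E_1$ but are absent from $C_1 = \{(a_{u,i},b_{u,i}),(b_{u,i},a_{u,i})\}$; they lie in~$C_2$ and hence have level~$2$. This does not break your verification (level~$1$ would only make the left-hand side larger), but you should correct the tabulation. Also, the phrase ``$u_1$ prefers edges from $u_0$ over those from $u_1$'' is garbled---the preference convention in the paper concerns the vertices $a_{u,i}$, which receive edges from both $u_0$ and $u_1$; it is irrelevant for $u_1$'s own incoming edges.
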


\medskip
For the other direction, 
assume that $G$ admits an arborescence~$A$ with $\mu(A) \leq 2n$. Let $B$ be an arborescence that yields an optimal solution for~\ref{LP1}, maximizing  $\phi(B,A)-\phi(A,B) \leq 2n$. First note that we can assume that~$A$ is Pareto-optimal in the sense that there is no arborescence that is weakly preferred by all vertices to~$A$, and strictly preferred by at least one vertex to~$A$. Similarly, we can choose~$B$ to be Pareto-optimal as well. 
Consequently, for any two edges $e,e' \in E_1$ forming a 2-cycle, both~$A$ and~$B$  uses at least one of $e$ and~$e'$.

For some~$X \subseteq V$ and two arborescences~$A'$ and~$A''$, let $\phi_X(A',A'')$ denote the number of vertices in~$X$ that prefer~$A'$ to~$A''$.
We say that a gadget~$G_u$ is \emph{clean}, if $\phi_{V(G_u)}(B,A)-\phi_{V(G_u)}(A,B) \leq 0$.
By $\mu(A)=\phi(B,A)-\phi(A,B)=\sum_{u \in U} \phi_{V(G_u)}(B,A)-\phi_{V(G_u)}(A,B) \leq 2n$, we know that 
there are at least $|U|-2n=n$ clean gadgets. Let $U^\star = \{u: G_u \textrm{ is clean}\}$.

\begin{restatable}{claim}{clmmarginunique}[$\star$]
\label{clm:margin_1}
If~$G_u$ is clean, then a unique edge of~$A$ enters~$G_u$, and it comes from~$r$.    
\end{restatable}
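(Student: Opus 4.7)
The plan is to argue by contradiction. Suppose $G_u$ is clean, yet either at least two edges of $A$ enter $G_u$ or $A$'s unique entering edge is $(b_{v,j},b_{u,i}) \in E_3$ coming from another gadget. I will construct an arborescence $B^\star$ that agrees with $B$ on every edge whose head lies outside $V(G_u)$, and whose restriction to $V(G_u)$ satisfies $\phi_{V(G_u)}(B^\star,A) - \phi_{V(G_u)}(A,B^\star) \ge 1$. Since the only incoming edges that differ between $B^\star$ and $B$ are those of vertices in $V(G_u)$, this yields $\phi(B^\star,A) - \phi(A,B^\star) > \phi(B,A) - \phi(A,B)$, contradicting the maximality of $B$.

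The key structural fact is that, by Pareto-optimality of $A$, the arborescence $A$ uses exactly one edge from each of the four vertex-disjoint $E_1$-edge 2-cycles of $G_u$ (namely $\{u_0,u_1\}$ and the three pairs $\{a_{u,i},b_{u,i}\}$), so exactly four vertices of $V(G_u)$ receive an $E_1$ incoming edge under $A$. When $A$ enters via $(b_{v,j},b_{u,i})$, the vertex $b_{u,i}$ is the local root of $A$'s $G_u$-subtree, forcing $(b_{u,i},a_{u,i}) \in A$ and hence making $a_{u,i}$ one of the four $E_1$-vertices of $A$. Let $\epsilon^\star \in \{0,1\}$ be the index for which $A$ places $u_{\epsilon^\star}$ among its four $E_1$-vertices, and let $i^\star$ be the unique index with $(a_{u,i^\star},b_{u,i}) \in C_u$.

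To build $B^\star$ on $V(G_u)$, I take the entering edge $(r,u_{\epsilon^\star}) \in E_3$ together with the $E_1$-edges $(u_{\epsilon^\star},u_{1-\epsilon^\star})$, $(b_{u,i},a_{u,i})$, and $(a_{u,k},b_{u,k})$ for each $k \in \{1,2,3\}\setminus\{i\}$; the $E_2$-edges $(u_0,a_{u,k})$ for each $k \neq i$ (deliberately chosen from $u_0$ rather than $u_1$ so as to secure the preferred within-class $E_2$-edge for $a_{u,k}$); and the $E_2$-edge $(a_{u,i^\star},b_{u,i}) \in C_u$. A direct check shows that these eight edges form an in-tree on $V(G_u)$ rooted at $u_{\epsilon^\star}$, so $B^\star$ is a valid arborescence.

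Verifying the claimed inequality is the main technical step. The crucial observations are: (a) $a_{u,i}$ has the same incoming edge in $A$ and in $B^\star$, contributing $0$; (b) $b_{u,i}$ strictly improves from $E_3$ in $A$ to $E_2$ in $B^\star$, contributing $+1$; (c) the $\{u_0,u_1\}$-pair contributes $0$, as $B^\star$ forces $u_{\epsilon^\star}$ to $E_3$ and $u_{1-\epsilon^\star}$ to $E_1$, while $A$ does the opposite type-swap on this pair (accounting also for the sub-case where $A$'s entering set contains $(r,u_{1-\epsilon^\star})$); (d) for each remaining 2-cycle $\{a_{u,k},b_{u,k}\}$, $k \neq i$, the pair contributes $0$ in all scenarios, since matching 2-cycle choices leave both members indifferent (with the $E_2$-tiebreak for $a_{u,k}$ favouring $B^\star$), and opposite 2-cycle choices produce an exactly cancelling upgrade/downgrade pair. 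Summing gives diff $\geq +1$; in the multi-entering case, every additional $E_3$-vertex of $A$ is strictly upgraded in $B^\star$, yielding strictly more than $+1$. The main obstacle is precisely this bookkeeping across the eight Pareto-optimal configurations for $A$'s 2-cycle directions, but once observations (a)--(d) are verified, the cancellations are uniform and the total gain is always at least $+1$.
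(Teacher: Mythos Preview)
Your argument is correct in substance and reaches the same contradiction as the paper, but it takes a genuinely different route in constructing the witness arborescence. The paper replaces the $G_u$-part of~$B$ by an arborescence that enters at~$u_{1-h}$ (where $h$ is determined by how~$B$, not~$A$, enters $\{u_0,u_1\}$), uses a \emph{single} $E_2$-link $(u_0,a_{u,j+1})$ into the cycle~$C_u$, and then walks along~$C_u$ as a path. Your~$B^\star$ instead picks the entry vertex~$u_{\epsilon^\star}$ according to~$A$'s orientation of the $\{u_0,u_1\}$ 2-cycle, uses \emph{two} $E_2$-links from~$u_0$ into the two $\{a_{u,k},b_{u,k}\}$ pairs with $k\neq i$, and reaches the remaining pair via one $C_u$-edge. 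Your choice makes the pairwise bookkeeping (your observations (a)--(d)) pleasantly uniform; the paper's choice keeps the construction closer to the ``canonical'' $C_u$-structure but then has to split into cases on whether $(u_0,a_{u,j+1})\in A$.

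Two small points to tighten. First, your remark that in the multi-entering case the total is ``strictly more than~$+1$'' is not right: if $A(b_{u,k})\in E_3$ for some $k\neq i$, then $b_{u,k}$ is indeed upgraded in~$B^\star$, but $a_{u,k}$ is simultaneously downgraded from~$E_1$ to~$E_2$, so that pair still nets~$0$. Fortunately you only need $\geq 1$, which your analysis already gives. Second, the step from $\phi_{V(G_u)}(B^\star,A)-\phi_{V(G_u)}(A,B^\star)\geq 1$ to $\phi(B^\star,A)-\phi(A,B^\star)>\phi(B,A)-\phi(A,B)$ silently uses the cleanness hypothesis $\phi_{V(G_u)}(B,A)-\phi_{V(G_u)}(A,B)\leq 0$; it is worth saying so explicitly.
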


By Claim~\ref{clm:margin_1}, for each $u \in U^\star$ there exists a vertex~$\hat{u} \in \{u_0,u_1\}$ for which $(r,\hat{u}) \in A$.
We can also assume w.l.o.g.~that $A$ and $B$ coincide on $G_u$, since otherwise we can replace~$B$ with the arborescence $B^\star=B \setminus \{\delta(x):x \in V(G_u), u \in U^\star\} \cup \{A(x): x \in V(G_u), u \in U^\star\}$, since $B^\star$ is also optimal for \ref{LP2}.
Furthermore, we get that for each $u \in U^\star$ there exists some~$i \in \{1,2,3\}$ for which $A(a_{u,i})$ comes from $\{u_0,u_1\}$; let $\sigma(u)$ denote this index.
\begin{restatable}{claim}{clmmargindescendant}[$\star$]
\label{clm:margin_2}
    If $u \in U^\star$, then the tail of each edge $f \in \delta(b_{u,\sigma(u)}) \cap E_3$ is a descendant of $\hat{u}$ in~$B$.
\end{restatable}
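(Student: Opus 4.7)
The plan is to argue by contradiction. Assume that some $E_3$-edge $f=(b_{v,j},b_{u,\sigma(u)})$ exists whose tail $b_{v,j}$ is not a descendant of $\hat{u}$ in $B$. I will exhibit an arborescence $A''$ with $\phi(A'',A)-\phi(A,A'')>\mu(A)$, contradicting the choice of $B$ as a maximizer of \ref{LP1}.

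Starting from $B$, define $A''$ by modifying the incoming edge of exactly three vertices: set $A''(b_{u,\sigma(u)}):=f$, $A''(a_{u,\sigma(u)}):=(b_{u,\sigma(u)},a_{u,\sigma(u)}) \in E_1$, and $A''(\hat{u}):=(a_{u,\sigma(u)},\hat{u}) \in E_2$; all other vertices keep their $B$-incoming edge. Intuitively, this reverses the short directed path from $\hat{u}$ down to $b_{u,\sigma(u)}$ inside $B|_{G_u}=A|_{G_u}$, making $b_{u,\sigma(u)}$ the new ``entry point'' of the $V(G_u)$-subtree, entered from outside via~$f$.

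The main obstacle will be to verify that $A''$ is acyclic; the rest is straightforward edge-by-edge preference counting. To check acyclicity, I trace the $A''$-parent pointers backward from any vertex. The trace within $V(G_u)$ eventually reaches $b_{u,\sigma(u)}$ (directly or through $\hat{u}$ and $a_{u,\sigma(u)}$, whose new parents have been redirected), from which it jumps to $b_{v,j}$ via $f$, and then follows $B$'s unchanged backward path from $b_{v,j}$ toward $r$. Since $b_{v,j}$ is not a descendant of $\hat{u}$ in $B$, no ancestor of $b_{v,j}$ in $B$ lies in $\hat{u}$'s $B$-subtree (which contains $V(G_u)$), so the trace never re-enters $V(G_u)$ and reaches $r$ without repetition. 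A trace that starts from a vertex outside $\hat{u}$'s $B$-subtree stays entirely in $B$ and reaches $r$ by $B$'s acyclicity; a trace that starts elsewhere inside $\hat{u}$'s subtree first enters $V(G_u)$ via an outgoing $E_3$-edge of $B$, then proceeds as above. Thus $A''$ is acyclic, hence an arborescence.

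For the preference comparison, outside $V(G_u)$ the arborescence $A''$ coincides with $B$ by construction, and since $A$ and $B$ coincide on $V(G_u)$ the contribution from vertices outside $V(G_u)$ to $\phi(A'',A)-\phi(A,A'')$ equals $\phi(B,A)-\phi(A,B)=\mu(A)$. Within $V(G_u)$, only the three vertices $b_{u,\sigma(u)}$, $a_{u,\sigma(u)}$, and $\hat{u}$ have new incoming edges; by Pareto-optimality of $A$ we have $A(b_{u,\sigma(u)})=(a_{u,\sigma(u)},b_{u,\sigma(u)}) \in E_1$ (since this $E_1$-edge can always be swapped in without creating a cycle, as $a_{u,\sigma(u)}$ is not a descendant of $b_{u,\sigma(u)}$ in $A$), $A(a_{u,\sigma(u)}) \in E_2$ by the defining property of $\sigma(u)$, and $A(\hat{u})=(r,\hat{u}) \in E_3$ by definition of $\hat{u}$. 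The three swaps $E_1\mapsto E_3$ at $b_{u,\sigma(u)}$, $E_2\mapsto E_1$ at $a_{u,\sigma(u)}$, and $E_3\mapsto E_2$ at $\hat{u}$ contribute $-1$, $+1$, $+1$, summing to $+1$. Therefore $\phi(A'',A)-\phi(A,A'')=\mu(A)+1>\mu(A)$, the desired contradiction, proving that $b_{v,j}$ must be a descendant of $\hat{u}$ in $B$.
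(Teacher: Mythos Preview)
Your proof is correct and follows the same strategy as the paper: modify $B$ so that $b_{u,\sigma(u)}$ is entered via~$f$, then show the result would beat $B$ unless the tail of~$f$ lies in $\hat{u}$'s $B$-subtree. The paper's modification $B_f$ replaces the incoming edges of all seven vertices in $A_u \cup B_u \cup \{\hat{u}\}$ (using the cycle $C_u$ minus one edge), whereas you only touch the three vertices $\hat{u}$, $a_{u,\sigma(u)}$, $b_{u,\sigma(u)}$; this is more economical and lets you skip the paper's extra bookkeeping ``if $b_{u,i}$ prefers $A$ to $B_f$, then $a_{u,i}$ prefers $B_f$ to $A$'' for $i\neq\sigma(u)$. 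Your justification of $A(b_{u,\sigma(u)})=(a_{u,\sigma(u)},b_{u,\sigma(u)})$ via Pareto-optimality is the same fact the paper uses implicitly (there it follows from the stated 2-cycle consequence of Pareto-optimality together with $A(a_{u,\sigma(u)})\in E_2$).
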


We claim that $\T=\{S(u,\sigma(u)):u \in U^\star\}$ is a solution to our instance of \textsc{Exact 3-Cover}. 
First observe that $\T$ contains at least $n$ sets by $|U^\star|\geq n$. 
It remains to show that the sets in~$\T$ are pairwise disjoint. 
We say that $G_v$ is \emph{assigned} to $u \in U^\star$, if $v \in S(u,\sigma(u))$. 
It suffices to show that no gadget $G_v$ can be assigned to more than one vertices in~$U^\star$. 

Assume for the sake of contradiction that $G_v$ is assigned to both~$u$ and $w$ for 
two distinct vertices $u,w \in U^\star$.
Then by Claim~\ref{clm:margin_2} there are two vertices in $B_v$, one of them a descendant of $\hat{u}$, the other a descendant of~$\hat{w}$.
Note that neither~$\hat{u}$ nor~$\hat{w}$ is a descendant of the other, since both $(r,\hat{u})$ and $(r,\hat{w})$ are edges in~$A$, and hence, in~$B$ (recall that $A$ and $B$ coincide on $G_u$ and on~$G_w$). This means that there are two distinct edges entering~$B_v$, one from a descendant of~$\hat{u}$, the other from a descendant of~$\hat{w}$. Thus for some $j \in \{1,2,3\}$, the edges $B(b_{v,j})$ and $B(b_{v,j+1})$ are both in~$E_3$, implying also 
$(b_{v,j},a_{v,j}) \in B$ and $(b_{v,j+1},a_{v,j+1}) \in B$, where indices are taken modulo~$3$ (so $a_{v,4}=a_{v,1}$ and $b_{v,4}=b_{v,1}$). However, this contradicts the Pareto-optimality of~$B$, since replacing $B(b_{v,j})$ with $(a_{v,j+1},b_{v,j})$ in~$B$ results in an arborescence that $b_{v,j}$ prefers to~$B$, with all other vertices being indifferent between the two.

This shows that any two sets in~$\T$ are disjoint, proving the correctness of our reduction.
}  %%%%%%%%%%%%%%%%%%%%  the end of switch (2)

\section{Conclusions}
We considered the \myproblem{popular arborescence} problem, which asks to determine whether a given directed rooted graph, in which vertices have preferences over incoming edges, admits a popular arborescence or not and to find one if so.
We provided a polynomial-time algorithm to solve this problem, which affirmatively answers an open problem posed in 2019~\cite{Kir19}.
Our algorithm and its correctness proof work in the generality of matroid intersection (where one of the matroids is a partition matroid), which means that we also solved the \myproblem{popular common base} problem.
Furthermore, we observed that the \myproblem{popular common independent set} problem, which includes the \myproblem{popular colorful forest} problem as a special case, can be reduced to the \myproblem{popular common base} problem, and hence can be solved by our algorithm. Utilizing structural observations, we also proved that the \myproblem{min-cost popular common independent set} problem is tractable 
% wo->po
if preferences are weak rankings.

% wo->po

\begin{comment}
For the sake of simplicity, we assumed throughout the paper that preferences are represented by weak rankings. In fact, however, all our tractability results hold for partial order preferences, with the exception of the \myproblem{min-cost popular common independent set} problem, whose tractability relies on Theorem~\ref{thm:polytope} which does not hold for partial orders.
In particular, our polynomial-time algorithm for the \myproblem{popular common base} problem (Theorem~\ref{thm:pop-largest})  works even for partial order preferences, and therefore generalizes the tractability results for both the \myproblem{popular branching} problem in~\cite{KKMSS20} and for the \myproblem{popular assignment} problem in~\cite{KKMSS22}.
\end{comment}

On the intractability side, we proved that the \myproblem{min-cost popular arborescence} problem and the \myproblem{$k$-unpopularity margin arborescence} problem are both $\NP$-hard even 
for strict preferences. 
Note that the min-cost problem is $\NP$-hard for 
\emph{popular common bases} (a fact implied by the $\NP$-hardness of the \myproblem{popular assignment} problem shown in~\cite{KKMSS22}, as well as by Theorem~\ref{thm:min-cost}), while it is tractable for \emph{popular common independent sets} by Theorem~\ref{thm:polytope}. By analogy, one may expect 
the problem of finding a common independent set with unpopularity margin at most~$k$ to be polynomial-time solvable. 
However, %
this is not the case (unless $\mathsf{P}=\NP$), since
the \myproblem{$k$-unpopularity matching} problem is $\NP$-hard even for strict rankings \cite{McCutchen}. Note that the \myproblem{$k$-unpopularity margin branching} problem is %
polynomial-time solvable when preferences are weak rankings, as shown in~\cite{KKMSS20}, but this does not contradict the above fact:
branchings and matchings are both special cases of common independent sets (where one matroid is a partition matroid), but neither of them includes the other.

\section*{Acknowledgments}
We are grateful for inspiring discussions on the popular arborescence problem to Chien-Chung Huang, Satoru Iwata, Tam\'as Kir\'aly, Jannik Matuschke, and Ulrike Schmidt-Kraepelin.
We thank the anonymous reviewers for their valuable comments.
Telikepalli Kavitha is supported by the Department of Atomic Energy, Government  of India, under project no.\ RTI4001.
Kazuhisa Makino is partially supported by JSPS KAKENHI Grant Numbers JP20H05967, JP19K22841, and JP20H00609. 
Ildik\'o Schlotter is supported by the Hungarian Academy of Sciences under its Momentum Programme (LP2021-2) and its J\'anos Bolyai Research Scholarship, and by the Hungarian Scientific Research Fund (OTKA grant K124171).
Yu Yokoi is supported by JST PRESTO Grant Number JPMJPR212B.
This work was partially supported by the joint project of Kyoto University and Toyota Motor Corporation, titled ``Advanced Mathematical Science for Mobility Society.''

\begin{appendices}
\section{Examples of Algorithm Execution}
\label{app:examples}
We illustrate how %
Algorithm~\ref{alg:pop-arb} works using some examples.
We provide three instances of the \myproblem{popular arborescence} problem. 
In all of these instances, a digraph is given as $G=(V\cup \{r\}, E)$ with $V=\{a, b, c, d\}$, and each node~$v \in V$ has a strict preference on $\delta(v)$.
For better readability, for a multichain~$\C=\{C_1,\dots,C_p\}$ with $C_1 \subseteq \dots \subseteq C_p$ we will also use the notation $\langle C_1,\dots,C_p\rangle$.

\subsection{Example~1.}
This instance is similar to the one illustrateded in Section~\ref{sec:intro}; the only difference is that now the edge $(r, d)$ is deleted.
In contrast to the case where $(r,d)$ exists, this instance admits a popular arborescence, which is found by Algorithm~\ref{alg:pop-arb} as follows.

\begin{minipage}[c]{0.58\textwidth}
The preference orders for the four vertices are as follows:
		\begin{align*}
                & (b,a) \succ_a (c,a) \succ_a (r,a) \\
                & (a,b) \succ_b (d,b) \succ_b (r,b)\\
                & (d,c) \succ_c (a,c) \succ_c (r,c)\\
                & (c,d) \succ_d (b,d).\\
                 \end{align*}
\end{minipage}
\begin{minipage}{0.35\textwidth}
\vspace{-2mm}
\begin{tikzpicture}[xscale=0.85, yscale=.85]
  % vertices
  \node[draw, circle, fill=black, inner sep=1.5pt] (r) at (0, 0) {};
  \node[draw, circle, fill=black, inner sep=1.5pt] (a) at (-1, -1.8) {};
  \node[draw, circle, fill=black, inner sep=1.5pt] (b) at (1, -1.8) {};
  \node[draw, circle, fill=black, inner sep=1.5pt] (c) at (-1, -3.5) {};
  \node[draw, circle, fill=black, inner sep=1.5pt] (d) at (1, -3.5) {};

  % names of vertices
  \node[right, yshift=1mm] at (r.north) {$r$};
  \node[left,   xshift=0.5mm] at (a.west) {$a$};
  \node[right] at (b.east) {$b$};
  \node[left,   yshift=-1.5mm] at (c.west) {$c$};
  \node[right,yshift=-1mm] at (d.east) {$d$};

  % arcs
  \draw[->, >=stealth, line width=1.0pt, densely dashed, green] (r) to[bend right=20] (a); %node[pos=0.95, above, xshift=0.5mm] {\scriptsize 3} (a)
  \draw[->, >=stealth, line width=1.0pt, densely dashed, green] (r) to[bend left=20] (b);
  \draw[->, >=stealth, line width=1.0pt, densely dashed, green] (r) to[out=180, in=170] (c);

  \draw[->, >=stealth, line width=0.7pt, double, red] (a) to[bend left=30] (b);
  \draw[->, >=stealth, line width=0.7pt, double, red] (b) to[bend left=30] (a);
  \draw[->, >=stealth, line width=0.7pt, double, red] (c) to[bend left=30] (d);
  \draw[->, >=stealth, line width=0.7pt, double, red] (d) to[bend left=30] (c);

  \draw[->, >=stealth, line width=1.0pt, blue] (a) to[bend left=30] (c);
  \draw[->, >=stealth, line width=1.0pt, blue] (c) to[bend left=30] (a);
  \draw[->, >=stealth, line width=1.0pt, blue] (b) to[bend left=30] (d);
  \draw[->, >=stealth, line width=1.0pt, blue] (d) to[bend left=30] (b);

  % explanation of arcs
  \node (f1) at (2.7, 0) {};
  \node (f2) at (1.8, 0) {};
  \node[right, xshift=-6pt, yshift=-4pt] at (f1.north) {\small ~first rank};
  \draw[->, >=stealth, line width=0.7pt, double, red] (f1) -- (f2) ;
  \node (s1) at (2.7, -0.5) {};
  \node (s2) at (1.8, -0.5) {};
  \node[right, xshift=-6pt, yshift=-4pt] at (s1.north) {\small ~second rank};
  \draw[->, >=stealth, line width=1.0pt, blue] (s1) -- (s2) ;
  \node (t1) at (2.7, -1) {};
  \node (t2) at (1.8, -1) {};
  \node[right, xshift=-6pt, yshift=-4pt] at (t1.north) {\small ~third rank};
  \draw[->, >=stealth, line width=0.7pt, densely dashed, green] (t1) -- (t2) ;
\end{tikzpicture}
\end{minipage}

%This instance is similar to the one described in Section~\ref{sec:intro}; the only difference is that now the edge $(r, d)$ is deleted.
%In contrast to the case where $(r,d)$ exists, this instance admits a popular arborescence, which is found by Algorithm~\ref{alg:pop-arb} as follows. 

For convenience, we denote by $E_1$, $E_2$, and $E_3$ the sets of the first, second and third choice edges, respectively. That is, 
$E_1=\{(b,a), (a, b), (d,c), (c,d)\}$, $E_2=\{(c,a), (d,b), (a, c), (b,d)\}$, and $E_3=\{(r, a), (r, b), (r,c)\}$.

\medskip

\paragraph{Algorithm Execution.} Below we describe the steps in our algorithm.
\begin{enumerate}
\item $p=1$ and $C_1=E$. Then $E(\C)=E_1$ and $I=\{(a, b), (c, d)\}$ is a lex-maximal branching in $E(\C)$.
Since $|I\cap C_1|=2<4=\rank(C_1)$, the set $C_1$ is updated to $\spn(I\cap C_1)=E_1$. Since $C_1=C_p$ is updated, $p$ is incremented and $E$ is added to $\C$ as $C_2$.

\item $p=2$ and  $\langle C_1, C_2\rangle=\langle E_1, E\rangle$. 
Then $E(\C)=E_1\cup E_2$ and $I=\{(a, b), (c, d), (a, c)\}$ is a lex-maximal branching in $E(\C)$.
Since $|I\cap C_1|=2=\rank(C_1)$ and $|I\cap C_2|=3<4=\rank(C_2)$, the set $C_2$ is updated to $\spn(I\cap C_2)=E_1\cup E_2$. Since $C_2=C_p$ is updated, $p$ is incremented and $E$ is added to $\C$ as $C_3$.

\item $p=3$ and  $\langle C_1, C_2, C_3\rangle=\langle E_1,\, E_1\cup E_2,\, E\rangle$. 
Then $E(\C)=\{(c,d)\}\cup E_2\cup E_3$ and $I=\{(c, d), (c, a), (d, b), (r, c)\}$ is a lex-maximal branching in $E(\C)$.
Since $|I\cap C_1|=1<2=\rank(C_1)$, the set $C_1$ is updated to $\spn(I\cap C_1)=\{(c, d), (d, c) \}$. 

\begin{minipage}{0.6\textwidth}
\item $p=3$ and  $\langle C_1, C_2, C_3\rangle =\langle \{(c,d), (d,c)\}, \,E_1\cup E_2,\, E\rangle$. 
Then we have $E(\C)=\{(r, a),(b,a), (r,b), (a, b), (r,c), (a,c), (c,d), (b,d)\}$ (all edges on the figure to the right)
and $I=\{(r, a), (a, b), (a, c), (c, d)\}$ (thick edges on the figure to the right) is a lex-maximal branching in~$E(\C)$.
Since $|I\cap C_i|=\rank(C_i)$ holds for $i=1,2,3$, the algorithm returns~$I$. 
\end{minipage}
\begin{minipage}{0.2\textwidth}
\vspace{-4mm}~~~~~~~~
\begin{tikzpicture}[xscale=0.8, yscale=.8]
  % vertices
  \node[draw, circle, fill=black, inner sep=1.5pt] (r) at (0, 0) {};
  \node[draw, circle, fill=black, inner sep=1.5pt] (a) at (-1, -1.8) {};
  \node[draw, circle, fill=black, inner sep=1.5pt] (b) at (1, -1.8) {};
  \node[draw, circle, fill=black, inner sep=1.5pt] (c) at (-1, -3.5) {};
  \node[draw, circle, fill=black, inner sep=1.5pt] (d) at (1, -3.5) {};

  % names of vertices
  \node[right, yshift=1mm] at (r.north) {$r$};
  \node[left,   xshift=0.5mm] at (a.west) {$a$};
  \node[right] at (b.east) {$b$};
  \node[left,   yshift=-1.5mm] at (c.west) {$c$};
  \node[right,yshift=-1mm] at (d.east) {$d$};

  % arcs
  \draw[->, >=stealth, line width=1.8pt, densely dashed, green] (r) to[bend right=20] (a); %node[pos=0.95, above, xshift=0.5mm] {\scriptsize 3} (a)
  \draw[->, >=stealth, line width=0.5pt, densely dashed, green] (r) to[bend left=20] (b);
  \draw[->, >=stealth, line width=0.5pt, densely dashed, green] (r) to[out=180, in=170] (c);
%  \draw[->, >=stealth, line width=0.5pt, densely dashed] (r) to[out=0, in=370] (d);

  \draw[->, >=stealth, line width=1.2pt, double, red] (a) to[bend left=30] (b);
  \draw[->, >=stealth, line width=0.5pt, double, red] (b) to[bend left=30] (a);
  \draw[->, >=stealth, line width=1.2pt, double, red] (c) to[bend left=30] (d);
%  \draw[->, >=stealth, line width=0.4pt, double] (d) to[bend left=30] (c);

  \draw[->, >=stealth, line width=1.8pt, blue] (a) to[bend left=30] (c);
%  \draw[->, >=stealth, line width=0.4pt] (c) to[bend left=30] (a);
  \draw[->, >=stealth, line width=0.5pt, blue] (b) to[bend left=30] (d);
%  \draw[->, >=stealth, line width=0.4pt] (d) to[bend left=30] (b);
%(r, a),(b,a), (r,b), (a, b), (r,c), (a,c), (c,d), (b,d)
\end{tikzpicture}
\end{minipage}
\end{enumerate}
Note that $I'=\{(r, b), (b,a), (a,c), (c,d)\}$ is also a possible output of the algorithm. Indeed, both $I$ and $I'$ are popular arborescences.

\subsection{Example~2.} We next demonstrate how the algorithm works for an instance that admits no popular arborescences.

\begin{minipage}{0.55\textwidth}
%We next demonstrate how the algorithm works for an instance which admits no popular arborescences.

\quad Consider the instance illustrated in the introduction. For the reader's convenience, we include the same figure again. As observed there, this instance has no popular arborescence.

\quad We denote by $E_1$, $E_2$, and $E_3$ the sets of the first, second and third rank edges, respectively. Note that, unlike in Example 1, here $E_3$ contains $(r, d)$.
\end{minipage}\quad
\begin{minipage}{0.4\textwidth}
\begin{tikzpicture}[xscale=0.87, yscale=0.87]
  % vertices
  \node[draw, circle, fill=black, inner sep=1.5pt] (r) at (0, 0) {};
  \node[draw, circle, fill=black, inner sep=1.5pt] (a) at (-1, -1.8) {};
  \node[draw, circle, fill=black, inner sep=1.5pt] (b) at (1, -1.8) {};
  \node[draw, circle, fill=black, inner sep=1.5pt] (c) at (-1, -3.5) {};
  \node[draw, circle, fill=black, inner sep=1.5pt] (d) at (1, -3.5) {};

  % names of vertices
  \node[right, yshift=1mm] at (r.north) {$r$};
  \node[left,   xshift=0.5mm] at (a.west) {$a$};
  \node[right] at (b.east) {$b$};
  \node[left,   yshift=-1.5mm] at (c.west) {$c$};
  \node[right,yshift=-1mm] at (d.east) {$d$};

  % arcs
  \draw[->, >=stealth, line width=1.0pt, densely dashed, green] (r) to[bend right=20] (a); %node[pos=0.95, above, xshift=0.5mm] {\scriptsize 3} (a)
  \draw[->, >=stealth, line width=1.0pt, densely dashed, green] (r) to[bend left=20] (b);
  \draw[->, >=stealth, line width=1.0pt, densely dashed, green] (r) to[out=180, in=170] (c);
  \draw[->, >=stealth, line width=1.0pt, densely dashed, green] (r) to[out=0, in=370] (d);

  \draw[->, >=stealth, line width=0.7pt, double, red] (a) to[bend left=30] (b);
  \draw[->, >=stealth, line width=0.7pt, double, red] (b) to[bend left=30] (a);
  \draw[->, >=stealth, line width=0.7pt, double, red] (c) to[bend left=30] (d);
  \draw[->, >=stealth, line width=0.7pt, double, red] (d) to[bend left=30] (c);

  \draw[->, >=stealth, line width=1.0pt, blue] (a) to[bend left=30] (c);
  \draw[->, >=stealth, line width=1.0pt, blue] (c) to[bend left=30] (a);
  \draw[->, >=stealth, line width=1.0pt, blue] (b) to[bend left=30] (d);
  \draw[->, >=stealth, line width=1.0pt, blue] (d) to[bend left=30] (b);

  % explanation of arcs
  \node (f1) at (2.7, -0.1) {};
  \node (f2) at (1.8, -0.1) {};
  \node[right, xshift=-6pt, yshift=-4pt] at (f1.north) {\small ~first rank};
  \draw[->, >=stealth, line width=0.7pt, double, red] (f1) -- (f2) ;
  \node (s1) at (2.7, -0.6) {};
  \node (s2) at (1.8, -0.6) {};
  \node[right, xshift=-6pt, yshift=-4pt] at (s1.north) {\small ~second rank};
  \draw[->, >=stealth, line width=1.0pt, blue] (s1) -- (s2) ;
  \node (t1) at (2.7, -1.1) {};
  \node (t2) at (1.8, -1.1) {};
  \node[right, xshift=-6pt, yshift=-4pt] at (t1.north) {\small ~third rank};
  \draw[->, >=stealth, line width=0.7pt, densely dashed, green] (t1) -- (t2) ;

\end{tikzpicture}
\end{minipage}

\paragraph{Algorithm Execution}
\begin{enumerate}
\item The first step is the same as Step 1 in Example 1. That is, $p=1$, $C_1=E$, $E(\C)=E_1$, and $I=\{(a, b), (c, d)\}$ is found as a lex-maximal branching in $E(\C)$. Then,  $C_1$ is updated to $\spn(I\cap C_1)=E_1$, $p$ is incremented, and $E$ is added to $\C$ as $C_2$.

\item The second step is also the same as Step 2 in Example 1. That is, 
$p=2$, $\langle C_1,C_2\rangle=\langle E_1, E\rangle$, $E(\C)=E_1\cup E_2$, and $I=\{(a, b), (c, d), (a, c)\}$ is found as a lex-maximal branching in $E(\C)$.
Then, $C_2$ is updated to $\spn(I\cap C_2)=E_1\cup E_2$, $p$ is incremented, and $E$ is added to $\C$ as $C_3$.

\item $p=3$ and  $\langle C_1, C_2, C_3\rangle=\langle E_1, E_1\cup E_2, E\rangle$.
Then $E(\C)=E_2\cup E_3$ (compared to Example~1, here $(r,d)$ is included while $(c,d)$ is excluded) and $I=\{(a, c), (b, d), (r, a), (r,b)\}$ is a lex-maximal branching in $E(\C)$. Since $|I\cap C_1|=0<2=\rank(C_1)$, the set $C_1$ is updated to $\spn(I\cap C_1)=\emptyset$.

\item $p=3$ and  $\langle C_1, C_2, C_3\rangle =\langle \emptyset, E_1\cup E_2, E\rangle$.
Then $E(\C)=E_1\cup E_3$ and $I=\{(a, b), (c, d), (r, a), (r,c)\}$ is a lex-maximal branching in $E(\C)$. Since $|I\cap C_1|=\rank(C_1)$ and $|I\cap C_2|=2<3=\rank(C_2)$, the set $C_2$ is updated to $\spn(I\cap C_2)=E_1$.

\item $p=3$ and  $\langle C_1, C_2, C_3\rangle=\langle\emptyset, E_1, E\rangle$.
Then $E(\C)=E_1\cup E_2$ and $I=\{(a, b), (c, d), (a, c)\}$ is a lex-maximal branching in $E(\C)$. (Observe that these $E(\C)$ and $I$ are the same as Step 2.) Since $|I\cap C_i|=\rank(C_i)$ for~$i=1,2$ and $|I\cap C_3|=3<4=\rank(C_3)$, the set $C_3$ is updated to $\spn(I\cap C_3)=E_1\cup E_2$, $p$ is incremented, and $E$ is added to $\C$ as $C_4$.

\item $p=4$ and $\langle C_1, C_2, C_3, C_4\rangle=\langle\emptyset, E_1, E_1\cup E_2, E\rangle$.
Then, as in Step 3, $E(\C)=E_2\cup E_3$ and $I=\{(a, c), (b, d), (r, a), (r,b)\}$ is a lex-maximal branching in $E(\C)$. Since $|I\cap C_1|=\rank(C_1)$ and $|I\cap C_2|=0<2=\rank(C_2)$, the set $C_2$ is updated to $\spn(I\cap C_2)=\emptyset$.

\item $p=4$ and $\langle C_1, C_2, C_3, C_4\rangle=\langle\emptyset, \emptyset, E_1\cup E_2, E\rangle$. By the same argument as in Step 4, the set $C_3$ is updated to $E_1$.

\item $p=4$ and $\langle C_1, C_2, C_3, C_4\rangle=\langle\emptyset, \emptyset, E_1, E\rangle$. By the same argument as in Step 5, the set $C_4$ is updated to~$E_1\cup E_2$, $p$ is incremented, and 
$E$ is added to $\C$ as $C_5$.

\item $p=5$ and $\langle C_1, C_2, C_3, C_4, C_5\rangle=\langle\emptyset, \emptyset, E_1, E_1\cup E_2, E\rangle$. Since $p=5>4=n=|V|$, the algorithm halts with returning ``$G$ has no popular arborescence.''
\end{enumerate}

The reader might observe that whenever $C_1$ becomes empty in the algorithm, then by Lemma~\ref{prop:correctness} we can conclude that the instance admits no popular arborescence, since the dual certificate contains only non-empty sets (Lemma~\ref{lem:chain}) 
and hence $D_1 \subseteq C_1=\emptyset$ is not possible. 
Therefore, we could in fact stop the algorithm already in Step~3 when $C_1$ gets updated to~$\emptyset$. Nevertheless, the algorithm will reach a correct answer even without using this observation, as illustrated by the above example.

\subsection{Example 3.}
We next provide an example that shows the importance of considering multichains. 
During the algorithm's execution on this instance, $\C$ does become a multichain that is not a chain. 

\medskip
\begin{minipage}[c]{0.58\textwidth}
\quad The preferences of the four vertices are as follows:
		\begin{align*}
                & (b,a) \succ_a  (r,a) \\
                & (c,b)  \succ_b (a,b)\\
                & (d,c) \succ_c (b,c)\\
                & (c,d)\\            
		\end{align*}
\end{minipage}\quad
\begin{minipage}{0.35\textwidth}
\begin{tikzpicture}[xscale=0.8, yscale=0.8]
  % vertices
  \node[draw, circle, fill=black, inner sep=1.5pt] (r) at (0, 0) {};
  \node[draw, circle, fill=black, inner sep=1.5pt] (a) at (1.5, 0) {};
  \node[draw, circle, fill=black, inner sep=1.5pt] (b) at (3, 0) {};
  \node[draw, circle, fill=black, inner sep=1.5pt] (c) at (4.5, 0) {};
  \node[draw, circle, fill=black, inner sep=1.5pt] (d) at (6, 0) {};

  % names of vertices
  \node[right, xshift=-6pt, yshift=7pt] at (r.north) {$r$};
  \node[right, xshift=-6pt, yshift=7pt] at (a.north) {$a$};
  \node[right, xshift=-6pt, yshift=7pt] at (b.north) {$b$};
  \node[right, xshift=-6pt, yshift=7pt] at (c.north) {$c$};
  \node[right, xshift=-6pt, yshift=7pt] at (d.north) {$d$};

  % arcs
  \draw[->, >=stealth, line width=1.0pt, blue] (r) to (a); 
  \draw[->, >=stealth, line width=1.0pt, blue] (a) to (b); 
  \draw[->, >=stealth, line width=1.0pt, blue] (b) to (c); 
  \draw[->, >=stealth, line width=0.7pt, double, red] (c) to (d); 

  \draw[->, >=stealth, line width=0.7pt, double, red] (b) to[bend left=60] (a);
  \draw[->, >=stealth, line width=0.7pt, double, red] (c) to[bend left=60] (b);
  \draw[->, >=stealth, line width=0.7pt, double, red] (d) to[bend left=60] (c);

  % explanation of arcs
  \node (f1) at (2.7, -1.5) {};
  \node (f2) at (1.5, -1.5) {};
  \node[right, xshift=-6pt, yshift=-4pt] at (f1.north) {\small ~first rank};
  \draw[->, >=stealth, line width=0.7pt, double, red] (f1) -- (f2) ;
  \node (s1) at (2.7, -2) {};
  \node (s2) at (1.5, -2) {};
  \node[right, xshift=-6pt, yshift=-4pt] at (s1.north) {\small ~second rank};
  \draw[->, >=stealth, line width=1.0pt, blue] (s1) -- (s2) ;
\end{tikzpicture}
\end{minipage}

\noindent where $(c,d)$ is the unique incoming edge of $d$. 
For convenience, we denote  by $E_{abcd}$, $E_{bcd}$, and $E_{cd}$ the edge sets of the induced subgraphs for the vertex sets $\{a,b,c,d\}$, $\{b,c,d\}$, and $\{c,d\}$, respectively.
That is, $E_{abcd}=E\setminus \{(r,a)\}$, $E_{bcd}=\{(b,c), (c,b), (c,d), (d,c)\}$, and $E_{cd}=\{(c,d), (d,c)\}$.
Note that $\{(r,a), (a,b), (b,c), (c,d)\}$ is the unique arborescence in this instance, and hence it is a popular arborescence.

\medskip

\paragraph{Algorithm Execution}
\begin{enumerate}
\item $p=1$ and $C_1=E$. Then $E(\C)=\{(b,a), (c,b), (d,c), (c,d)\}$ and $I=\{(b, a), (c, b), (c,d)\}$ is a lex-maximal branching in $E(\C)$.
Since $|I\cap C_1|=3<4=\rank(C_1)$, the set $C_1$ is updated to $\spn(I\cap C_1)=E_{abcd}$. Since $C_1=C_p$ is updated, $p$ is incremented and $E$ is added to $\C$ as $C_2$.

\medskip

\begin{minipage}{0.58\textwidth}
\item $p=2$ and  $\langle C_1, C_2\rangle=\langle E_{abcd}, E\rangle$ (shown by braces on the right). Then $E(\C)=\{ (r,a), (b,a), (c,b), (d,c), (c,d)\}$ (all edges on the right) and $I=\{(b, a), (c, b), (c,d)\}$ (thick edges on the right) is a lex-maximal branching in $E(\C)$.
Since $|I\cap C_1|=\rank(C_1)$ and $|I\cap C_2|=3<4=\rank(C_2)$,  $C_2$ is updated to $\spn(I\cap C_2)=E_{abcd}$. Since $C_2=C_p$ is updated, $p$ is incremented and $E$ is added to $\C$ as $C_3$.
\end{minipage}\quad
\begin{minipage}{0.35\textwidth}
\begin{tikzpicture}[xscale=0.7, yscale=0.7]
 % braces
  \draw [decorate, decoration = {brace}]  (6.1, -0.8)--(1.4, -0.8) node[below, yshift = -0.8mm, pos=0.5] {$C_1$}; 
  \draw [decorate, decoration = {brace}]  (6.1, -1.7)--(0.1, -1.7) node[below, yshift = -0.8mm, pos=0.5] {$C_2$}; 
  % vertices
  \node[draw, circle, fill=black, inner sep=1.5pt] (r) at (0, 0) {};
  \node[draw, circle, fill=black, inner sep=1.5pt] (a) at (1.5, 0) {};
  \node[draw, circle, fill=black, inner sep=1.5pt] (b) at (3, 0) {};
  \node[draw, circle, fill=black, inner sep=1.5pt] (c) at (4.5, 0) {};
  \node[draw, circle, fill=black, inner sep=1.5pt] (d) at (6, 0) {};

  % names of vertices
  \node[right, xshift=-6pt, yshift=7pt] at (r.north) {$r$};
  \node[right, xshift=-6pt, yshift=7pt] at (a.north) {$a$};
  \node[right, xshift=-6pt, yshift=7pt] at (b.north) {$b$};
  \node[right, xshift=-6pt, yshift=7pt] at (c.north) {$c$};
  \node[right, xshift=-6pt, yshift=7pt] at (d.north) {$d$};

  % arcs
  \draw[->, >=stealth, line width=0.9pt, blue] (r) to (a); 
%  \draw[->, >=stealth, line width=0.9pt, blue] (a) to (b); 
%  \draw[->, >=stealth, line width=0.9pt, blue] (b) to (c); 

  \draw[->, >=stealth, line width=1.3pt, double, red] (c) to (d); 
  \draw[->, >=stealth, line width=1.3pt, double, red] (b) to[bend left=60] (a);
  \draw[->, >=stealth, line width=1.3pt, double, red] (c) to[bend left=60] (b);
  \draw[->, >=stealth, line width=0.6pt, double, red] (d) to[bend left=60] (c);
\end{tikzpicture}
\end{minipage}

\medskip

\begin{minipage}{0.58\textwidth}
\item $p=3$ and  $\langle C_1, C_2, C_3\rangle=\langle E_{abcd},\, E_{abcd},\, E\rangle$ (so \mbox{$C_1=C_2$}). 
Then $E(\C)=\{(r,a), (c,b), (d,c), (c,d)\}$. Note that $(b,a)$ is not in $E(\C)$ as $\lev_{\C}((b,a))=1$ while $\lev_{\C}((r,a))=3$.  $I=\{(r, a), (c, b), (c, d)\}$ is a lex-maximal branching in $E(\C)$.
Since $|I\cap C_1|=2<3=\rank(C_1)$, the set $C_1$ is updated to $\spn(I\cap C_1)=E_{bcd}$. 
\end{minipage}\quad
\begin{minipage}{0.35\textwidth}
\begin{tikzpicture}[xscale=0.7, yscale=0.7]
 % braces
  \draw [decorate, decoration = {brace}]  (6.1, -0.8)--(1.4, -0.8) node[below, yshift = -0.5mm, pos=0.5] {$C_1=C_2$}; 
  \draw [decorate, decoration = {brace}]  (6.1, -1.7)--(0.1, -1.7) node[below, yshift = -0.8mm, pos=0.5] {$C_3$}; 
  % vertices
  \node[draw, circle, fill=black, inner sep=1.5pt] (r) at (0, 0) {};
  \node[draw, circle, fill=black, inner sep=1.5pt] (a) at (1.5, 0) {};
  \node[draw, circle, fill=black, inner sep=1.5pt] (b) at (3, 0) {};
  \node[draw, circle, fill=black, inner sep=1.5pt] (c) at (4.5, 0) {};
  \node[draw, circle, fill=black, inner sep=1.5pt] (d) at (6, 0) {};

  % names of vertices
  \node[right, xshift=-6pt, yshift=7pt] at (r.north) {$r$};
  \node[right, xshift=-6pt, yshift=7pt] at (a.north) {$a$};
  \node[right, xshift=-6pt, yshift=7pt] at (b.north) {$b$};
  \node[right, xshift=-6pt, yshift=7pt] at (c.north) {$c$};
  \node[right, xshift=-6pt, yshift=7pt] at (d.north) {$d$};

  % arcs
  \draw[->, >=stealth, line width=2.5pt, blue] (r) to (a); 
%  \draw[->, >=stealth, line width=0.9pt, blue] (a) to (b); 
%  \draw[->, >=stealth, line width=0.9pt, blue] (b) to (c); 

  \draw[->, >=stealth, line width=1.3pt, double, red] (c) to (d); 
%  \draw[->, >=stealth, line width=1.3pt, double, red] (b) to[bend left=60] (a);
  \draw[->, >=stealth, line width=1.3pt, double, red] (c) to[bend left=60] (b);
  \draw[->, >=stealth, line width=0.6pt, double, red] (d) to[bend left=60] (c);
\end{tikzpicture}
\end{minipage}

\medskip

\begin{minipage}{0.58\textwidth}
\item $p=3$ and  $\langle C_1, C_2, C_3\rangle=\langle E_{bcd},\, E_{abcd},\, E\rangle$. Then,
$E(\C)=E\setminus \{(b,c)\}$ and $I=\{(b, a), (c, b), (c, d)\}$ is a lex-maximal branching in $E(\C)$.
Since $|I\cap C_i|=\rank(C_i)$ for $i=1,2$ and $|I\cap C_3|=3<4=\rank(C_3)$, the set $C_3$ is updated to $\spn(I\cap C_3)=E_{abcd}$. 
Since $C_3=C_p$ is updated, $p$ is incremented and $E$ is added to $\C$ as $C_4$.
\end{minipage}\quad
\begin{minipage}{0.35\textwidth}
\begin{tikzpicture}[xscale=0.7, yscale=0.7]
 % braces
  \draw [decorate, decoration = {brace}]  (6.1, -0.8)--(2.9, -0.8) node[below, yshift = -0.8mm, pos=0.5] {$C_1$}; 
  \draw [decorate, decoration = {brace}]  (6.1, -1.6)--(1.4, -1.6) node[below, yshift = -0.8mm, pos=0.5] {$C_2$}; 
  \draw [decorate, decoration = {brace}]  (6.1, -2.4)--(0.1, -2.4) node[below, yshift = -0.8mm, pos=0.5] {$C_3$}; 
  % vertices
  \node[draw, circle, fill=black, inner sep=1.5pt] (r) at (0, 0) {};
  \node[draw, circle, fill=black, inner sep=1.5pt] (a) at (1.5, 0) {};
  \node[draw, circle, fill=black, inner sep=1.5pt] (b) at (3, 0) {};
  \node[draw, circle, fill=black, inner sep=1.5pt] (c) at (4.5, 0) {};
  \node[draw, circle, fill=black, inner sep=1.5pt] (d) at (6, 0) {};

  % names of vertices
  \node[right, xshift=-6pt, yshift=7pt] at (r.north) {$r$};
  \node[right, xshift=-6pt, yshift=7pt] at (a.north) {$a$};
  \node[right, xshift=-6pt, yshift=7pt] at (b.north) {$b$};
  \node[right, xshift=-6pt, yshift=7pt] at (c.north) {$c$};
  \node[right, xshift=-6pt, yshift=7pt] at (d.north) {$d$};

  % arcs
  \draw[->, >=stealth, line width=0.9pt, blue] (r) to (a); 
  \draw[->, >=stealth, line width=0.9pt, blue] (a) to (b); 
%  \draw[->, >=stealth, line width=0.9pt, blue] (b) to (c); 

  \draw[->, >=stealth, line width=1.3pt, double, red] (c) to (d); 
  \draw[->, >=stealth, line width=1.3pt, double, red] (b) to[bend left=60] (a);
  \draw[->, >=stealth, line width=1.3pt, double, red] (c) to[bend left=60] (b);
  \draw[->, >=stealth, line width=0.6pt, double, red] (d) to[bend left=60] (c);
\end{tikzpicture}
\end{minipage}
\medskip

\begin{minipage}[t]{0.58\textwidth}
\vspace{0pt}
\item $p=4$ and  $\langle C_1, C_2, C_3, C_4\rangle=\langle E_{bcd},\, E_{abcd},\, E_{abcd}, E\rangle$. 
 $E(\C)=E\setminus \{(b,a), (b,c)\}$ and $I=\{(r, a), (c, b), (c, d)\}$ is a lex-maximal branching in $E(\C)$.
Since $|I\cap C_1|=\rank(C_1)$ and $|I\cap C_2|=2<3=\rank(C_2)$, the set $C_2$ is updated to $\spn(I\cap C_2)=E_{bcd}$. 
\end{minipage}\quad
\begin{minipage}[t]{0.35\textwidth}
\vspace{-4pt}
\begin{tikzpicture}[xscale=0.7, yscale=0.7]
 % braces
  \draw [decorate, decoration = {brace}]  (6.1, -0.8)--(2.9, -0.8) node[below, yshift = -0.8mm, pos=0.5] {$C_1$}; 
  \draw [decorate, decoration = {brace}]  (6.1, -1.6)--(1.4, -1.6) node[below, yshift = -0.5mm, pos=0.5] {$C_2=C_3$}; 
  \draw [decorate, decoration = {brace}]  (6.1, -2.4)--(0.1, -2.4) node[below, yshift = -0.8mm, pos=0.5] {$C_4$}; 
  % vertices
  \node[draw, circle, fill=black, inner sep=1.5pt] (r) at (0, 0) {};
  \node[draw, circle, fill=black, inner sep=1.5pt] (a) at (1.5, 0) {};
  \node[draw, circle, fill=black, inner sep=1.5pt] (b) at (3, 0) {};
  \node[draw, circle, fill=black, inner sep=1.5pt] (c) at (4.5, 0) {};
  \node[draw, circle, fill=black, inner sep=1.5pt] (d) at (6, 0) {};

  % names of vertices
  \node[right, xshift=-6pt, yshift=7pt] at (r.north) {$r$};
  \node[right, xshift=-6pt, yshift=7pt] at (a.north) {$a$};
  \node[right, xshift=-6pt, yshift=7pt] at (b.north) {$b$};
  \node[right, xshift=-6pt, yshift=7pt] at (c.north) {$c$};
  \node[right, xshift=-6pt, yshift=7pt] at (d.north) {$d$};

  % arcs
  \draw[->, >=stealth, line width=2.5pt, blue] (r) to (a); 
  \draw[->, >=stealth, line width=0.9pt, blue] (a) to (b); 
%  \draw[->, >=stealth, line width=0.9pt, blue] (b) to (c); 

  \draw[->, >=stealth, line width=1.3pt, double, red] (c) to (d); 
%  \draw[->, >=stealth, line width=1.3pt, double, red] (b) to[bend left=60] (a);
  \draw[->, >=stealth, line width=1.3pt, double, red] (c) to[bend left=60] (b);
  \draw[->, >=stealth, line width=0.6pt, double, red] (d) to[bend left=60] (c);
\end{tikzpicture}
\end{minipage}

\medskip
\begin{minipage}[t]{0.58\textwidth}
\vspace{0pt}
\item $p=4$ and  $\langle C_1, C_2, C_3, C_4\rangle=\langle E_{bcd},\, E_{bcd},\, E_{abcd}, E\rangle$. Then 
 $E(\C)=E\setminus \{(b,c), (c,b)\}$ and $I=\{(r,a), (a, b), (c, d)\}$ is a lex-maximal branching in $E(\C)$.
Since $|I\cap C_1|=1<2=\rank(C_1)$, the set $C_1$ is updated to $\spn(I\cap C_1)=E_{cd}$. 
\end{minipage}\quad
\begin{minipage}[t]{0.35\textwidth}
\vspace{-4pt}
\begin{tikzpicture}[xscale=0.7, yscale=0.7]
 % braces
  \draw [decorate, decoration = {brace}]  (6.1, -0.8)--(2.9, -0.8) node[below, yshift = -0.5mm, pos=0.5] {$C_1=C_2$}; 
  \draw [decorate, decoration = {brace}]  (6.1, -1.6)--(1.4, -1.6) node[below, yshift = -0.8mm, pos=0.5] {$C_3$}; 
  \draw [decorate, decoration = {brace}]  (6.1, -2.4)--(0.1, -2.4) node[below, yshift = -0.8mm, pos=0.5] {$C_4$}; 
  % vertices
  \node[draw, circle, fill=black, inner sep=1.5pt] (r) at (0, 0) {};
  \node[draw, circle, fill=black, inner sep=1.5pt] (a) at (1.5, 0) {};
  \node[draw, circle, fill=black, inner sep=1.5pt] (b) at (3, 0) {};
  \node[draw, circle, fill=black, inner sep=1.5pt] (c) at (4.5, 0) {};
  \node[draw, circle, fill=black, inner sep=1.5pt] (d) at (6, 0) {};

  % names of vertices
  \node[right, xshift=-6pt, yshift=7pt] at (r.north) {$r$};
  \node[right, xshift=-6pt, yshift=7pt] at (a.north) {$a$};
  \node[right, xshift=-6pt, yshift=7pt] at (b.north) {$b$};
  \node[right, xshift=-6pt, yshift=7pt] at (c.north) {$c$};
  \node[right, xshift=-6pt, yshift=7pt] at (d.north) {$d$};

  % arcs
  \draw[->, >=stealth, line width=2.5pt, blue] (r) to (a); 
  \draw[->, >=stealth, line width=2.5pt, blue] (a) to (b); 
%  \draw[->, >=stealth, line width=0.9pt, blue] (b) to (c); 

  \draw[->, >=stealth, line width=1.3pt, double, red] (c) to (d); 
  \draw[->, >=stealth, line width=0.6pt, double, red] (b) to[bend left=60] (a);
%  \draw[->, >=stealth, line width=1.3pt, double, red] (c) to[bend left=60] (b);
  \draw[->, >=stealth, line width=0.6pt, double, red] (d) to[bend left=60] (c);
\end{tikzpicture}
\end{minipage}

\medskip
\begin{minipage}[b]{0.58\textwidth}
\item $p=4$ and  $\langle C_1, C_2, C_3, C_4\rangle=\langle E_{cd},\, E_{bcd},\, E_{abcd}, E\rangle$. Then 
 $E(\C)=E$ and $I=\{(r,a),(a,b), (b,c),(c, d)\}$ is a lex-maximal branching in $E(\C)$.
Since $|I\cap C_i|=\rank(C_i)$ holds for $i=1,2,3,4$, the algorithm returns~$I$. 
\end{minipage}\quad
\begin{minipage}{0.35\textwidth}
\begin{tikzpicture}[xscale=0.7, yscale=0.7]
 % braces
  \draw [decorate, decoration = {brace}]  (6.1, -0.8)--(4.4, -0.8) node[below, yshift = -0.8mm, pos=0.5] {$C_1$}; 
  \draw [decorate, decoration = {brace}]  (6.1, -1.6)--(2.9, -1.6) node[below, yshift = -0.8mm, pos=0.5] {$C_2$}; 
  \draw [decorate, decoration = {brace}]  (6.1, -2.4)--(1.4, -2.4) node[below, yshift = -0.8mm, pos=0.5] {$C_3$}; 
  \draw [decorate, decoration = {brace}]  (6.1, -3.2)--(0.1, -3.2) node[below, yshift = -0.8mm, pos=0.5] {$C_4$}; 
  % vertices
  \node[draw, circle, fill=black, inner sep=1.5pt] (r) at (0, 0) {};
  \node[draw, circle, fill=black, inner sep=1.5pt] (a) at (1.5, 0) {};
  \node[draw, circle, fill=black, inner sep=1.5pt] (b) at (3, 0) {};
  \node[draw, circle, fill=black, inner sep=1.5pt] (c) at (4.5, 0) {};
  \node[draw, circle, fill=black, inner sep=1.5pt] (d) at (6, 0) {};

  % names of vertices
  \node[right, xshift=-6pt, yshift=7pt] at (r.north) {$r$};
  \node[right, xshift=-6pt, yshift=7pt] at (a.north) {$a$};
  \node[right, xshift=-6pt, yshift=7pt] at (b.north) {$b$};
  \node[right, xshift=-6pt, yshift=7pt] at (c.north) {$c$};
  \node[right, xshift=-6pt, yshift=7pt] at (d.north) {$d$};

  % arcs
  \draw[->, >=stealth, line width=2.5pt, blue] (r) to (a); 
  \draw[->, >=stealth, line width=2.5pt, blue] (a) to (b); 
  \draw[->, >=stealth, line width=2.5pt, blue] (b) to (c); 

  \draw[->, >=stealth, line width=1.3pt, double, red] (c) to (d); 
  \draw[->, >=stealth, line width=0.6pt, double, red] (b) to[bend left=60] (a);
  \draw[->, >=stealth, line width=0.6pt, double, red] (c) to[bend left=60] (b);
  \draw[->, >=stealth, line width=0.6pt, double, red] (d) to[bend left=60] (c);
\end{tikzpicture}
\end{minipage}

\end{enumerate}

\switch{}{  %%%%%%%%%%%%%%%%%%%%%%%%%%%%% the beginning of switch (1)
\section{Deferred proofs}
\label{app:proofs}

Here we present all proofs that we omitted from the main body of the paper, namely, from Section~\ref{sec:hardness-min-unpop-margin}. 
For the convenience of the reader, we re-state each claim before providing its proof.

%We next present all missing proofs from Section~\ref{sec:hardness-min-unpop-margin}.

\clmmargindual*
\begin{proof}
We need to verify that 
\begin{equation}
\label{eqn:LP2-feasible}
|\{C \in \C: e \in C\}| + \alpha_v \geq \wt_A(e)    
\end{equation}
 holds for each edge~$e$ entering some vertex~$v$.
First assume $e \in C_1$, in which case $e$ is contained in three sets of~$\C$. If~$e \in C_1\cap A$, then $\wt_A(e)=0$ and $\alpha_v = -3$ ensures~(\ref{eqn:LP2-feasible}). 
If $e \in C_1 \setminus A$, then $\wt_A(e)=1$ but $\alpha_v=-2$, so~(\ref{eqn:LP2-feasible}) is again satisfied. 

Second, assume $e \in C_2$, in which case $e$ is contained in two sets from~$\C$. 
If~$e \in C_2 \cap A$, then $\alpha_v=-2$, 
which implies (\ref{eqn:LP2-feasible}). 
If $e \in C_2 \setminus A$, then we distinguish between two cases: if $v = u_0$ for some $u \in U$, then $\wt_A(e)=1$ and $\alpha_v=-1$; otherwise $\wt_A(e)=-1$ and $\alpha_v \geq -3$ (note that here we used that all vertices $a_{u,i}$ prefer $(u_0,a_{u,i})$ to~$(u_1,a_{u,i})$). Hence, $e$ again satisfies~(\ref{eqn:LP2-feasible}).

Third, assume $e \in C_3$, in which case $e$ is contained in one set from~$\C$. Let $G_u$ be the gadget entered by~$e$. 
If $e=(r,u_0) \in A$, then $\wt_A(e)=0$ and $\alpha_v=\alpha_{u_0}=-1$, and thus (\ref{eqn:LP2-feasible}) holds. 
Otherwise $\wt_A(e)=-1$.
Let $T$ be the set in~$\T$ containing~$u$.
Note that either $v=u_1$ or $v=b_{u,j}$ for some $j \neq \sigma(u)$, 
because all edges entering $b_{u,\sigma(u)}$ are contained in $C_2$, since they each originate in some gadget $G_v$ with $v \in T$.
Therefore, we have $\alpha_v = -2$ in both cases, 
which implies  (\ref{eqn:LP2-feasible}) for~$e$.
\end{proof}

\clmmarginunique*
\begin{proof}
Assume for the sake of contradiction that the claim does not hold for some $u \in U^\star$; this means that $A$ must reach~$G_u$ through an edge $e \in E_3$ pointing to some vertex of~$B_u$; let $b_{u,j}$ denote this vertex. 
Let $u_h$ be the vertex where $B$ enters~$\{u_0,u_1\}$; then $(u_h,u_{1-h}) \in B$. 
Define~$B'$ as follows: 
\begin{align*}
    B'= & B \setminus \{\delta(x): x \in V(G)\} \\
    & \cup 
\{ (r,u_{1-h}),(u_{1-h},u_h), (u_0,a_{u,j+1}), (a_{u,j+1},b_{u,j+1})\} \\
    & \cup (C_u \setminus \delta(a_{u,j+1}) \setminus \delta(b_{u,j+1}))
\end{align*}
where indices are taken modulo~3 (so $a_{u,4}=a_{u,1}$ and $b_{u,4}=b_{u,1}$).

Observe that $B'$ is an arborescence.
If $(u_0,a_{u,j+1}) \notin A$, then $(b_{u,j+1},a_{u,j+1}) \in A$, and thus vertices $u_h$, $b_{u,j}$, and $b_{u,j+1}$ all prefer~$B'$ to~$A$, while vertices $u_{1-h}$ and $a_{u,j+1}$ prefer~$A$ to~$B'$. 
If $(u_0,a_{u,j+1}) \in A$, then vertices $u_h$ and $b_{u,j}$ prefer~$B'$ to~$A$, vertex $u_{1-h}$ prefers~$A$ to~$B'$, while $a_{u,j+1}$ and $b_{u,j+1}$ are indifferent between them; note $(a_{u,j+1},b_{u,j+1}) \in A \cap B'$. 
Furthermore, if $b_{u,j-1}$ prefers~$A$ to~$B'$, then $(a_{u,j-1},b_{u,j-1}) \in A$, and therefore $a_{u,j-1}$ prefers~$B'$ to~$A$. Summing up all these facts, we obtain $\phi_{V(G_u)}(B',A)-\phi_{V(G_u)}(A,B')\geq 1$ which in turn implies
$\phi(B',A)-\phi(A,B')>\phi(B,A)-\phi(A,B)$, a contradiction to our choice of~$B$. 
\end{proof}

\clmmargindescendant*
\begin{proof}    
 Define $B_f$ as follows:
\begin{align*}
    B_f=& B  \setminus \{\delta(x): x \in V(C_u) \textrm{ or } x=\hat{u} \}  
     \cup  
\{ f, (a_{u,\sigma(u)},\hat{u})\} \cup (C_u \setminus \delta(b_{u,\sigma(u)})).
\end{align*}
Observe that there is an edge from $\hat{u}$ to the other vertex of~$\{u_0,u_1\}$ shared by~$A$ and~$B_f$.

Suppose that $B_f$ is an arborescence.
Note that vertices~$\hat{u}$ and $a_{u,\sigma(u)}$ prefer~$B_f$ to~$A$, while vertex~$b_{u,\sigma(u)}$ prefers~$A$ to~$B_f$ (because $(a_{u,\sigma(u)},b_{u,\sigma(u)}) \in A$).  Furthermore, if some $b_{u,i}$, $i \neq \sigma(u)$, prefers~$A$ to~$B_f$, then $a_{u,i}$ prefers~$B_f$ to~$A$. 
Hence, $\phi_{V(G_u)}(B_f,A)-\phi_{V(G_u)}(A,B_f)\geq 1$, which implies also $\phi(B_f,A)-\phi(A,B_f)>\phi(B,A)-\phi(A,B)$, contradicting our choice of~$B$. Hence, $B_f$ cannot be an arborescence, which can only happen if the tail of~$f$ is a descendant of~$\hat{u}$ in~$B$.
\end{proof}

\section{Extensions and Related Results}
\label{sec:discussion}

\paragraph{Popularity under size constraints.}
As mentioned in the introduction, the \myproblem{popular largest common independent set} problem can be reduced to the \myproblem{popular common base} problem. 
More generally, we can reduce the \myproblem{popular size $[\ell, u]$ common independent set} problem to the \myproblem{popular common base} problem, where the goal of the former problem is to find a common independent set that is popular within the set of all common independent sets whose size is at least $\ell$ and at most $u$ (if such a solution exists).

We now describe the reduction. Let $E=E_1\cupdot \cdots\cupdot E_n$ be a given partition of $E$ and $M=(E, \I)$ be a given matroid. 
We define a new instance as follows. For each $i\in \{1, \dots, n\}$, we create a new element $e_i$ and extend the domain of $\succ_i$ to $E_i\cup \{e_i\}$, where $e_i$ is the unique worst element. The new partition is defined as $E':=\bigcupdot_{i=1}^{n}(E_i\cup\{e_i\})$.
We define a new matroid $M'=(E', \I')$ by \[\I'=\{\,X\subseteq E':X\cap E\in \I,\, |X\cap E|\leq u,\, |X\cap \{e_1, \dots, e_n\}|\leq n-\ell,\, |X|\leq n\,\}.\] Note that we can assume that the rank of $M$ (i.e., the size of a base in $M$) is at least $\ell$ since otherwise the given instance clearly has no solution. Therefore, the rank of $M'$ is $n$. 

There exists a one-to-one correspondence between common independent sets of sizes in $[\ell, u]$ in the original instance and common bases of the new instance. 
Suppose $I$ is a common independent set with $\ell\leq |I|\leq u$ in the original instance. Let $B$ be obtained from $I$ by adding $e_i$ for any unassigned $i\in \{1,\dots, n\}$ (that is, where~$I \cap E_i = \emptyset$). Then, $B$ is a common base in the new instance. Conversely, given a common base of the new instance, we can obtain a common independent set satisfying the size constraint by projecting out the dummy elements. 
Furthermore, $\phi(I, I')=\phi(B, B')$ holds for any common independent sets $I$ and $I'$ of the original instance and their corresponding bases $B$ and $B'$. Thus, the reduction is completed.

The reduction used in Section~\ref{sec:colorful} (to reduce the popular colorful forest problem to the popular colorful base problem) is a special case of this reduction where $u=n$ and $\ell=0$.

\medskip

\paragraph{Popularity under category-wise size constraints.}
We can also use our popular common base algorithm (Algorithm~\ref{alg:pop-arb}) to solve the problem of finding a common independent set that is popular under a kind of diversity constraints.

Similarly to the above, suppose that a partition $E=E_1\cupdot \cdots\cupdot E_n$ and a matroid $M=(E, \I)$ are given. We regard $\{1, \dots, n\}$ as the set of agents. 
Suppose that the set $\{1, \dots n\}$ is partitioned into $q$ categories $P_1\cupdot \cdots \cupdot P_q$, and each category $P_k$ is associated with integers $\ell_k$ and $u_k$ where $\ell_k\leq u_k$. We call a common independent set $X\subseteq E$ {\em admissible} if, for each $k=1,\dots, q$, we have $\ell_k\leq |\set{i\in P_k: E_i\cap X\neq \emptyset}|\leq u_k$. 
That is, a set $X$ is admissible if, among the agents in each category $P_k$, at least $\ell_k$ and at most $u_k$ agents are assigned an element. 

The problem of finding a common independent set that is popular within the set of admissible common independent sets can be reduced to the \myproblem{popular common base} problem as follows.
Similarly to the case of size constraints above, for each $i\in \{1,\dots, n\}$, we introduce a dummy element $e_i$ that is worst in $i$'s preferences. 
Moreover, for each category $P_k$, we create a set~$D_k$ of dummy agents with $|D_k|=u_k-\ell_k$.
%$g_k:=u_k-\ell_k$ dummy agents and denote by $D_k$ the set of these agents. 
With each agent $j\in D_k$ we associate a set $\{f_{j}, g_{j}\}$ of two new elements, and these are tied in the preferences of~$j$, 
that is, $f_j \not\succ_j g_j$ and $g_j \not\succ_j f_j$.
%$\succeq_{j}$.
Thus, the new ground set is $E^*=\bigcup_{i=1}^n(E_i\cup \{e_i\})\cup \bigcup_{j\in D_1\cup \cdots \cup D_q}\{f_j, g_j\}$, and its partitions are the sets~$E_i \cup \{e_i\}$ for $i \in \{1,\dots,n\}$ and the sets $\{f_j,g_j\}$ for $j \in D_k$ and $k \in \{1,\dots,q\}$. 

We define a matroid on~$E^*$. 
First, for $k=1,\dots, q$, let $F_k:=\set{e_i:i\in P_k}\cup \set{f_j:j\in D_k}$ and let $(F_k, \I_k)$ be a uniform matroid defined by $\I_k=\{\,X\subseteq F_k: |X|\leq |P_k|-\ell_k\,\}$.

Next, let $E':=E\cup \set{g_j:j\in D_1\cup \cdots \cup D_q}$ and define a matroid $(E', \I')$ 
as the truncation of the direct sum of $M$ and the free matroid on $\set{g_j:j\in D_1\cup \cdots \cup D_q}$, that is, 
$\I':=\{\,X\subseteq E':X\cap E\in \I, |X|\leq \sum_{k=1}^q u_k\,\}$.
Let $(E^*, \I^*)$ be the direct sum of all these matroids, i.e., $\I^*$ is defined as \[\textstyle \I^*=\{\,X\subseteq E^*:X\cap E\in \I, ~|X\cap E'|\leq \sum_{k=1}^q u_k, ~|X\cap F_k|\leq |P_k|-\ell_k \, \textrm{ for }k=1,\dots, q\,\}.\]

We can assume that the size of a base in $(E, \I)$ is at least $\sum_{k=1}^q \ell_k$ since otherwise the instance clearly has no admissible set. As we have $|\set{g_j:j\in D_1\cup \cdots \cup D_q}|=\sum_{k=1}^q (u_k-\ell_k)$, the size of a base in the matroid~$(E', \I')$ is exactly $\sum_{k=1}^q u_k$. Also, the size of a base in each $(F_k, \I_k)$ is $|P_k|-\ell_k$ (since $|F_k|=|P_k|+u_k-\ell_k$). Thus, the size of a base of the matroid $(E^*, \I^*)$ is $\sum_{k=1}^q (|P_k|+u_k-\ell_k)$, which equals the number of agents.

We now explain how to transform admissible common independent sets of the original instance to common bases of the new instance, and vise versa.
Let $I$ be an admissible common independent set of the original instance. 
For each $k=1, \dots, q$, let $C_k \subseteq P_k$ be the set of agents $i$ in $P_k$ with $I \cap E_i = \emptyset$. Since $I$ is admissible, $|P_k|-u_k\leq |C_k|\leq |P_k|-\ell_k$.
Set $B=I$ and augment $B$ by adding elements in the following manner. 
For all agents in $C_k$, add the corresponding $e_i$ elements to~$B$. 
Note that $|P_k|-\ell_k-|C_k|$ is at least~$0$ and at most~$u_k-\ell_k$. Take $|P_k|-\ell_k-|C_k|$ agents~$j$ from $D_k$ arbitrarily and add the corresponding~$f_j$ elements to $B$. For the remaining $|D_k|-(|P_k|-\ell_k-|C_k|)=u_k-|P_k|+|C_k|$ agents~$j$ in $D_k$, we add the corresponding~$g_j$ elements to $B$. Thus, all agents are assigned elements. Furthermore, we see that the set $B$ satisfies $B\cap E\in \I$, $|B\cap E'|=|I|+\sum_{i=1}^n(u_k-|P_k|+|C_k|)=\sum_{k=1}^q u_k$ (note that $\sum_{k=1}^q(|P_k|-|C_k|)=|I|$), and $|B\cap F_k|= |P_k|-\ell_k$ for each $k=1,\dots, q$. Thus, $B$ is a common base in the new instance.

Conversely, let $B$ be a common base of the new instance and $I$ be obtained by deleting all dummy elements in $B$. Clearly $I$ is a common independent set of the original instance. 
As $B$ is a base in $\I^*$, we have $|B\cap F_k|= |P_k|-\ell_k$ for each $k=1,\dots, q$. Since $F_k=\set{e_i:i\in P_k}\cup \set{f_j:j\in D_k}$, this implies $|B\cap \set{e_i:i\in P_k}|\leq |P_k|-\ell_k$. As $|\set{f_j:j\in D_k}|=u_k-\ell_k$, it also follows that $|B\cap \set{e_i:i\in P_k}|\geq |P_k|-u_k$. Thus, we have $|P_k|-u_k\leq |B\cap \{\,e_i:i\in P_k\}|\leq |P_k|-\ell_k$, which is equivalent to $\ell_k\leq |\set{i\in P_k:B\cap E_i\neq \emptyset}|\leq u_k$. Thus, $I$ is admissible in the original instance.

We can also observe that $\phi(I, I')=\phi(B, B')$ holds for any admissible common independent sets $I$ and $I'$ of the original instance and their corresponding bases $B$ and $B'$ in the new instance. Therefore, a popular admissible common independent set in the original instance corresponds to a popular common base of the new instance.

\medskip

\paragraph{Popular fractional solutions.}
The notion of popularity can be extended to fractional solutions, or equivalently, probability distributions over integral
solutions. A fractional/mixed solution $x$ is popular if there is no fractional (in fact, integral) solution more popular than $x$. 

It was shown in \cite{KMN09} using the minimax theorem that popular mixed matchings always exist and such a fractional/mixed
matching can be computed in polynomial time. The same proof shows that a popular fractional (largest) common
independent set always exists and such a fractional solution can be computed in polynomial time
by optimizing over the matroid intersection polytope. 

An integral solution $I$ is \emph{strongly popular} if $\phi(I,I') > \phi(I',I)$ for all solutions $I' \ne I$. 
As observed in \cite{BB20} in the context of matchings, if a strongly popular solution exists, then it has to be a unique popular 
fractional solution. Thus there is a polynomial-time algorithm for the strongly popular (largest) common independent set problem. 
}  %%%%%%%%%%%%%%%%%%%%%%%%%%%%% the end of switch (1)

\end{appendices}

\bibliographystyle{abbrv}
\bibliography{sample}

\begin{thebibliography}{10}

\bibitem{AIKM07}
D.~J. Abraham, R.~W. Irving, T.~Kavitha, and K.~Mehlhorn.
\newblock Popular matchings.
\newblock {\em SIAM Journal on Computing}, 37(4):1030--1045, 2007.

\bibitem{Bock71}
F.~C. Bock.
\newblock An algorithm to construct a minimum directed spanning tree in a
  directed network.
\newblock In B.~Avi-Itzak, editor, {\em Developments in Operations Research},
  pages 29--44. Gordon and Breach, New York, 1971.

\bibitem{BB20}
F.~Brandt and M.~Bullinger.
\newblock Finding and recognizing popular coalition structures.
\newblock In {\em AAMAS 2020: Proceedings of the 19th International Conference
  on Autonomous Agents and MultiAgent Systems}, pages 195--203, 2020.

\bibitem{brualdi1969comments}
R.~A. Brualdi.
\newblock Comments on bases in dependence structures.
\newblock {\em Bulletin of the Australian Mathematical Society}, 1(2):161--167,
  1969.

\bibitem{CL65}
Y.~Chu and T.~Liu.
\newblock On the shortest arborescence of a directed graph.
\newblock {\em Scientia Sinica}, 14:1396--1400, 1965.

\bibitem{Con85}
M.~Condorcet.
\newblock {\em Essai sur l'application de l'analyse \`a la probabilit\'e des
  d\'ecisions rendues \`a la pluralit\'e des voix}.
\newblock L'Imprimerie Royale, 1785.

\bibitem{Cseh17}
{\'{A}}.~Cseh.
\newblock Popular matchings.
\newblock In U.~Endriss, editor, {\em Trends in computational social choice},
  chapter~6, pages 105--122. AI Access, 2017.

\bibitem{Darm13b}
A.~Darmann.
\newblock Popular spanning trees.
\newblock {\em International Journal of Foundations of Computer Science},
  24(5):655 -- 677, 2013.

\bibitem{Darm16}
A.~Darmann.
\newblock It is difficult to tell if there is a {Condorcet} spanning tree.
\newblock {\em Mathematical Methods of Operations Research}, 84(1):94 -- 104,
  2016.

\bibitem{DKP11a}
A.~Darmann, C.~Klamler, and U.~Pferschy.
\newblock Finding socially best spanning trees.
\newblock {\em Theory and Decision}, 70(4):511 -- 527, 2011.

\bibitem{Edm67}
J.~Edmonds.
\newblock Optimum branchings.
\newblock {\em Journal of Research of the National Institute of Standards},
  71B:233--240, 1967.

\bibitem{edmonds1970submodular}
J.~Edmonds.
\newblock Submodular functions, matroids, and certain polyhedra.
\newblock In R.~Guy, H.~Hanani, N.~Sauer, and J.~Sch{\"o}nheim, editors, {\em
  Combinatorial Structures and Their Applications}, pages 69--87. Gordon and
  Breach, 1970.

\bibitem{Frank81}
A.~Frank.
\newblock A weighted matroid intersection algorithm.
\newblock {\em Journal of Algorithms}, 2:328--336, 1981.

\bibitem{GS62}
D.~Gale and L.~Shapley.
\newblock College admissions and the stability of marriage.
\newblock {\em American Mathematical Monthly}, 69(1):9--15, 1962.

\bibitem{Gar75}
P.~G{\"a}rdenfors.
\newblock Match making: assignments based on bilateral preferences.
\newblock {\em Behavioural Science}, 20:166--173, 1975.

\bibitem{Goemans}
M.~Goemans.
\newblock Combinatorial optimization.
\newblock \url{http://www-math.mit.edu/~goemans/18453S17/18453.html}.

\bibitem{Gonzalez85}
T.~F. Gonzalez.
\newblock Clustering to minimize the maximum intercluster distance.
\newblock {\em Theoretical Computer Science}, 38:293--306, 1985.

\bibitem{HL15}
S.~Hardt and L.~Lopes.
\newblock Google votes: A liquid democracy experiment on a corporate social
  network.
\newblock Technical report, Technical Disclosure Commons, 2015.

\bibitem{Kam17}
N.~Kamiyama.
\newblock Popular matchings with ties and matroid constraints.
\newblock {\em SIAM Journal on Discrete Mathematics}, 31(3):1801--1819, 2017.

\bibitem{KKMSS22}
T.~Kavitha, T.~Kir\'{a}ly, J.~Matuschke, I.~Schlotter, and
  U.~Schmidt-Kraepelin.
\newblock The popular assignment problem: when cardinality is more important
  than popularity.
\newblock In {\em SODA 2022: Proceedings of the 2022 Annual ACM-SIAM Symposium
  on Discrete Algorithms}, pages 103--123. SIAM, 2022.

\bibitem{KKMSS20}
T.~Kavitha, T.~Kir\'{a}ly, J.~Matuschke, I.~Schlotter, and
  U.~Schmidt-Kraepelin.
\newblock Popular branchings and their dual certificates.
\newblock {\em Mathematical Programming}, 192(1):567--595, 2022.

\bibitem{KKMSS22arxiv}
T.~Kavitha, T.~Kir\'{a}ly, J.~Matuschke, I.~Schlotter, and
  U.~Schmidt-Kraepelin.
\newblock The popular assignment problem: when cardinality is more important
  than popularity, 2023.
\newblock \url{arXiv:2110.10984} [cs.DS] (full version).

\bibitem{KMN09}
T.~Kavitha, J.~Mestre, and M.~Nasre.
\newblock Popular mixed matchings.
\newblock {\em Theoretical Computer Science}, 412:2679--2690, 2011.

\bibitem{Kir19}
T.~Kir\'aly.
\newblock Popular arborescences.
\newblock 9th Eml\'ekt\'abla Workshop (Matching Theory),
  \url{https://users.renyi.hu/~emlektab/emlektabla9problems.pdf}, 2019.

\bibitem{Mah06}
M.~Mahdian.
\newblock Random popular matchings.
\newblock In {\em Proceedings of the 7th {ACM} Conference on Electronic
  Commerce (EC-2006), Ann Arbor, Michigan, USA, June 11-15, 2006}, pages
  238--242. {ACM}, 2006.

\bibitem{McCutchen}
R.~M. {McCutchen}.
\newblock The least-unpopularity-factor and least-unpopularity-margin criteria
  for matching problems with one-sided preferences.
\newblock In {\em Proceedings of the 8th Latin American Conference on
  Theoretical Informatics (LATIN 2008)}, volume 4957 of {\em Lecture Notes in
  Computer Science}, pages 593--604. Springer, 2008.

\bibitem{condorcet}
S.~Merrill and B.~Grofman.
\newblock {\em A Unified Theory of Voting: Directional and Proximity Spatial
  Models}.
\newblock Cambridge University Press, 1999.

\bibitem{Mestre14}
J.~Mestre.
\newblock Weighted popular matchings.
\newblock {\em ACM Transactions on Algorithms}, 10(1)(2), 2014.

\bibitem{NT23}
K.~Natsui and K.~Takazawa.
\newblock Finding popular branchings in vertex-weighted directed graphs.
\newblock {\em Theoretical Computer Science}, 953, 2023.

\bibitem{sch22}
U.~Schmidt-Kraepelin.
\newblock Models and algorithms for scalable collective decision making.
\newblock PhD Thesis, TU Berlin, 2022.

\bibitem{Schrijver}
A.~Schrijver.
\newblock {\em Combinatorial Optimization - Polyhedra and Efficiency. Vol. 24
  of Algorithms and Combinatorics}.
\newblock Springer-Verlag, Berlin, 2003.

\bibitem{SM10}
C.~T.~S. Sng and D.~F. Manlove.
\newblock Popular matchings in the weighted capacitated house allocation
  problem.
\newblock {\em Journal of Discrete Algorithms}, 8(2):102--116, 2010.

\end{thebibliography}

\end{document}